\definecolor{blue}{rgb}{0,0,1}
\definecolor{grey}{rgb}{0.6,0.6,0.6}
\definecolor{myurlcolor}{rgb}{0,0,0.7}
\definecolor{myrefcolor}{rgb}{0.8,0,0}
\definecolor{purple}{RGB}{128,0,128}
\definecolor{ultramarine}{RGB}{63, 0, 255}
\definecolor{medblue}{RGB}{0, 0, 100}
\definecolor{googleblue}{RGB}{34, 0, 204}
\definecolor{panblue}{RGB}{0,24,150}
\definecolor{carmine}{RGB}{150, 0, 24}
\definecolor{gray}{RGB}{150, 150, 150}
\newcommand{\ket}[1]{\left|#1\right\rangle}
\newcommand{\bra}[1]{\langle#1|}
\newcommand{\CHSH}{\mathsf{CHSH}}
\newtheorem{thm}{Theorem}
\newtheorem{prop}[thm]{Proposition}
\newtheorem{lemma}[thm]{Lemma}
\newtheorem{cor}{Corollary}[thm]
\begin{document}

\title{Causal Networks and Freedom of Choice in Bell's Theorem}

\author{Rafael Chaves}
\email{rchaves@iip.ufrn.br}
\affiliation{International Institute of Physics, Federal University of Rio Grande do Norte, 59078-970, P. O. Box 1613, Natal, Brazil}
\affiliation{School of Science and Technology, Federal University of Rio Grande do Norte, Natal, Brazil}
\author{George Moreno}
\affiliation{International Institute of Physics, Federal University of Rio Grande do Norte, 59078-970, P. O. Box 1613, Natal, Brazil}
\author{Emanuele Polino}
\author{Davide Poderini}
\author{Iris Agresti}
\author{Alessia Suprano}
\affiliation{Dipartimento di Fisica - Sapienza Universit\`{a} di Roma, P.le Aldo Moro 5, I-00185 Roma, Italy}
\author{Mariana R. Barros}
\affiliation{Dipartimento di Fisica “Ettore Pancini”, Università Federico II, Complesso Universitario di Monte Sant’Angelo, Via Cintia, 80126 Napoli, Italy}
\author{Gonzalo Carvacho}
\affiliation{Dipartimento di Fisica - Sapienza Universit\`{a} di Roma, P.le Aldo Moro 5, I-00185 Roma, Italy}
\author{Elie Wolfe}
\affiliation{Perimeter Institute for Theoretical Physics, 31 Caroline St. N, Waterloo, Ontario, N2L 2Y5, Canada}
\author{Askery Canabarro}
\affiliation{International Institute of Physics, Federal University of Rio Grande do Norte, 59078-970, P. O. Box 1613, Natal, Brazil}
\affiliation{Grupo de F\'isica da Mat\'eria Condensada, N\'ucleo de Ci\^encias Exatas - NCEx, Campus Arapiraca, Universidade Federal de Alagoas, 57309-005, Arapiraca, AL, Brazil}

\author{Robert W. Spekkens}
\affiliation{Perimeter Institute for Theoretical Physics, 31 Caroline St. N, Waterloo, Ontario, N2L 2Y5, Canada}

\author{Fabio Sciarrino } 
\affiliation{Dipartimento di Fisica - Sapienza Universit\`{a} di Roma, P.le Aldo Moro 5, I-00185 Roma, Italy}

\begin{abstract}
Bell's theorem is typically understood as the proof that quantum theory is incompatible with local hidden variable models. More generally, we can see the violation of a Bell inequality as witnessing the impossibility of explaining quantum correlations with classical causal models. The violation of a Bell inequality, however, does not exclude classical models where some level of measurement dependence is allowed, that is, the choice made by observers can be correlated with the source generating the systems to be measured. Here we show that the level of measurement dependence can be quantitatively upper bounded if we arrange the Bell test within a network.
Furthermore, we also prove that these results can be adapted in order to derive non-linear Bell inequalities for a large class of causal networks  and to identify quantumly realizable correlations which violate them.
\end{abstract}

\maketitle 

\section{Introduction}
Bell's theorem \cite{bell1964einstein} can arguably be seen as the most radical departure from classical physics. The kind of nonclassicality it entails is achieved without the need of any specific details or experimental assumptions that furthermore can be put to practical use in a variety of quantum information processing protocols in what is known as the device-independent framework \cite{pironio2016focus}.

In its standard interpretation, the violation of a Bell inequality shows that quantum correlations are incompatible with any theories respecting  local realism, that is, theories where physical properties have well defined values prior to any measurement and in such a way that far away events do not have a direct causal influence over each other. From another perspective, Bell inequality violation can also be seen as disproving any theories obeying the notion of local causality which implies a certain factorization of probabilities and that can be derived from two more fundamental assumptions, i.e., Reichenbach's principle \cite{reichenbach1991direction} and relativistic causality according to which the past corresponds to the past light-cone (see Refs.~\cite{wood2015lesson,wiseman2017causarum}). However, as firstly pointed out by Brans \cite{brans1988bell}, local hidden variable models are still capable of reproducing the quantum predictions if we allow for measurement dependence, a mechanism where our measurement devices are correlated with the system to be measured (see also~\cite{bell1985exchange,kofler2006experimenter,koh2012effects}). This subtle assumption in Bell's theorem, also known as the assumption of ``free will'' or as ``statistical independence'', has since then attracted growing attention, both from a theoretical \cite{hall2010local,PhysRevLett.106.100406,hall2011relaxed,koh2012effects,banik2012optimal,colbeck2012free,gallicchio2014testing,putz2014arbitrarily,chaves2015unifying,chaves2017causal,friedman2019relaxed,hall2020measurement,hossenfelder2020rethinking} and experimental \cite{aktas2015demonstration,handsteiner2017cosmic,big2018challenging,rauch2018cosmic} perspective and can be related to the communication cost between the measurement stations, needed by classical models to reproduce quantum correlations \cite{maudlin1992bell,brassard1999cost,steiner2000towards,toner2003communication,PhysRevLett.106.100406,maudlin2011quantum,ringbauer2017probing,brask2017bell,gill2020triangle,blasiak2021violations}.

Of particular relevance, is  the paradigmatic Clauser-Horne-Shimony-Holt (CHSH) Bell scenario \cite{clauser1969proposed} -- involving two distant parties, each measuring two dichotomic observables -- which has been thoroughly analyzed also allowing for relaxations of the measurement dependence assumption \cite{hall2020measurement}. As compared with the relaxation of locality, where it is known \cite{toner2003communication} that one requires 1 bit of classical communication to simulate the maximal quantum violation of the CHSH inequality, measurement dependence turns out to be a stronger resource, as merely $0.046$ bits of correlation are already enough to achieve the simulation~\cite{hall2020measurement}.
In spite of steady progress, all results to date suffer from the fact that they only impose lower bounds to the amount of measurement dependence needed to simulate a given violation of a Bell inequality using a classical causal model. 
If, by considering a slight modification of a Bell experiment, one had the means of determining an {\em upper bound} on the amount of measurement dependence that can be present, then whenever this amount was less than the lower bound on the amount needed to explain the observed violation of a Bell inequality in a classical causal model, one could infer that these violations were due to nonclassical effects.  Such a modification, in other words, would provide a means of adjudicating between measurement dependence and nonclassicality as a means of explaining the violation.
Measurement dependence has then remained as a seemingly untestable loophole in any Bell experiment.

The first aim of this paper is to revisit measurement dependence and show that under some assumptions it can indeed be upper bounded from the data observed in a slightly modified Bell experiment. For that, we arrange the standard Bell scenario as part of a larger causal network that includes an auxiliary variable,
and use the correlations between the measurement inputs and the auxiliary variable in order to upper bound how much such inputs might depend on the source generating the physical systems to be measured. From that we obtain non-linear Bell inequalities -- which explicitly incorporate possible correlations between the system to be measured and the settings of the measurement devices --  the violation of which is a clear signature of nonclassicality, rather than something which can be explained by merely positing measurement dependence within a classical causal model.

Following this, we explore the connections between measurement-dependent Bell causal structures and general causal networks that have recently started to attract attention in the literature \cite{tavakoli2021bell}. As compared with usual Bell scenarios, these new networks have two characteristic features. First, the fact that the correlations between the distant parties are now mediated by a number of independent sources. Second, the fact that one can prove nonclassical behaviour even without the need for any inputs, something considered quintessential in Bell's theorem \cite{wolf2009measurements}. In spite of the promising foundational and applied uses, progress in the analysis of general causal networks has been impeded by the fact that the correlations compatible with them define nonconvex sets, for which decades of expertise gathered with the standard Bell scenario and convex optimization algorithms are of limited or no use. Here, we show that standard Bell scenarios with measurement dependence can readily and generally be mapped onto causal networks of growing size and with different topologies. Not only do we derive new non-linear Bell inequalities for a variety of networks but also show, for the first time, that they lead to nonclassical correlations. 

The paper is organized as follows. In Sec.~\ref{sec:MDdefin}, we revisit the measurement independence assumption in Bell's theorem from a causal perspective and in particular identify and discriminate two possible mechanisms able to generate measurement dependence. In Sec.~\ref{sec:entropic}, we briefly present the entropic approach for the characterization of classical causal structures. In Sec.~\ref{sec:nonlinearineqs}, we discuss and derive a number of results and inequalities for the characterization of bipartite Bell scenarios with measurement dependence. In this section we also generalize these results to generic multipartite Bell scenarios and in particular derive strong lower bounds for measurement dependence based on the Mermin inequality~\cite{mermin1990}. In Sec.~\ref{sec:causnetworks}, we adapt our results for the analysis of a wide variety of causal networks, in particular proving that they can lead to nonclassical correlations. Finally, in Sec.~\ref{sec:conclusions}, we discuss our results and point out interesting directions for future research. 

\section{\!\!\!Freedom of choice in Bell's theorem}\label{sec:MDdefin}
Bell's theorem \cite{bell1964einstein} is the prime example of the incompatibility of quantum predictions with those  of classical causal models. From a modern perspective \cite{wood2015lesson,chaves2015unifying}, we impose a given causal structure to our quantum experiment and inquire whether causal models of classical origin can explain the observed empirical data, that is, the observed probability distribution. Different causal structures can be used to that aim \cite{branciard2010characterizing,branciard2012bilocal,fritz2012beyond,fritz2016beyond,tavakoli2014nonlocal,chaves2016polynomial,poderini2020experimental,rosset2016nonlinear,andreoli2017maximal,chaves2018quantum,renou2019genuine} but the paradigmatic illustration involves two distant parties, Alice and Bob, that upon receiving physical systems from a common source, randomly decide which measurements to perform obtaining the corresponding outcomes. Their measurement choices are labelled by the variables $X$ and $Y$ while their outcomes are labelled by $A$ and $B$, for Alice and Bob respectively. We note that random variables are typically denoted by uppercase letters and the values these variables can assume by lowercase letters. For instance, $X$ would represent the input variable of Alice and ${X{=}x}$, a specific value of it. In order to simplify the notation and presentation, in what follows we will use lowercase letters only.

Invoking special relativity, if Alice and Bob are space-like separated, then the measurement choices of one should have no causal influence on the measurement outcomes of the other.
These are the so-called no-signalling constraints. They imply that the probability distribution $p(a,b \vert x,y)$ observed in such an experiment should respect
\begin{eqnarray}
& & p(a\vert x)= \sum_{b} p(a,b \vert x,y)= \sum_{b} p(a,b \vert x,y^{\prime}) \;\; \forall  \, y,y^{\prime} \nonumber\\
& & p(b\vert y)= \sum_{a} p(a,b \vert x,y)= \sum_{a} p(a,b \vert x^{\prime},y) \;\; \forall  \, x,x^{\prime}.
\end{eqnarray}

\begin{figure}[t]
    \centering
    \includegraphics[scale = 0.18]{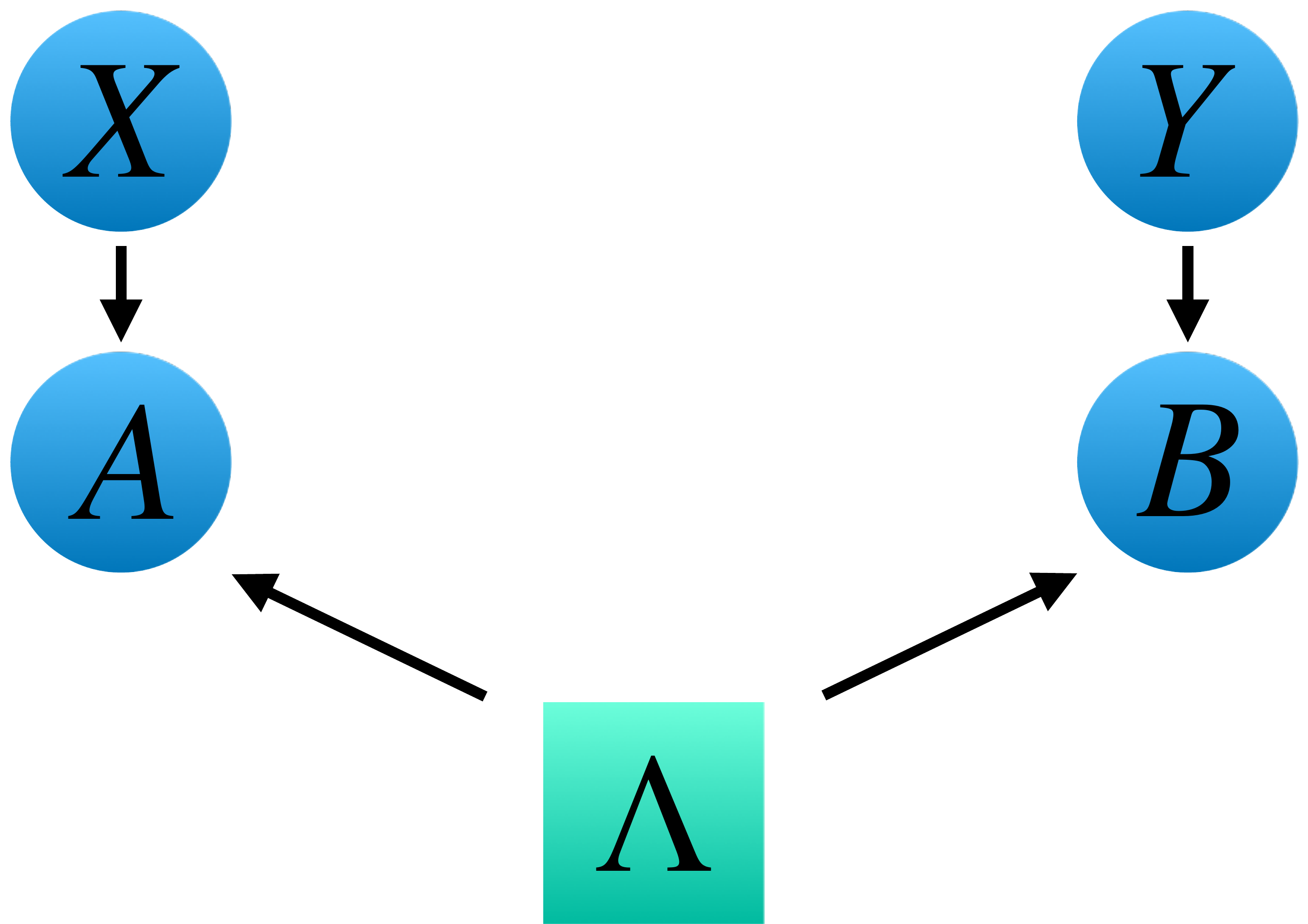}
    \caption{\textbf{Bell's causal structure.} Two distant parties receive their shares of joint subsystem prepared by a common source $\Lambda$. The variables $X$ and $Y$ represent the measurement choices and $A$ and $B$ the corresponding measurement outcomes for the observers Alice and Bob, respectively.}
    \label{fig:Bell}
\end{figure}

From a causal inference perspective \cite{pearl2009causality}, the central question is: what is the simplest causal structure able to recover, in a faithful manner, the probabilistic conditional independence relations corresponding to the no-signalling condition?
That is, these conditional independencies should follow from the causal structure itself and not from a fine-tuning of model parameters. More formally, a causal structure can be defined by a directed acyclic graph (DAG), where the nodes represent the variables of interest and direct edges encode the causal relations among them \cite{pearl2009causality}. Classically, any node $x_i$ in the graph can be understood as a function of its graph-theoretical parents $Pa(x_i)$, implying the causal Markov condition \cite{pearl2009causality}, which stipulates that a variable is conditionally independent of its nondescendents given its parents and thus
\begin{equation}
p(x_1,\dots, x_n)= \prod_{i=1,\dots,n}p(x_i\vert Pa(x_i)) .
\end{equation}

It turns out that~\cite{wood2015lesson} the  observational equivalence class of causal structures which faithfully (that is, without fine tuning of parameters) imply the no-signalling constraints  is the one which includes the causal structure posited by Bell in his seminal work, shown in Fig.~\ref{fig:Bell}\footnote{There are actually \emph{six} distinct latent variable causal structures which faithfully imply the no-signalling constraints; see \citet[Fig.~24]{wood2015lesson}. Those six of those causal structures, however, comprise a single observational equivalence class, per~\citet{Evans2015}. That is, all six causal structures imply the same Bell inequalities.} Employing the causal Markov condition \cite{pearl2009causality} any probability compatible with Bell's causal structure should fulfill
\begin{equation}
\label{eq:LHV}
p(a,b \vert x,y)= \sum_{\lambda}p(a\vert x,\lambda)p(b\vert y,\lambda)p(\lambda). 
\end{equation}

As shown by Bell~\cite{bell1964einstein},
a quantum experiment which mirrors this causal structure can yield
correlations that are incompatible with the classical description given by \eqref{eq:LHV}. This is the phenomenon known as Bell nonlocality and which can be witnessed by violating Bell inequalities, linear constraints of the general form
\begin{equation}
I=\sum_{a,b,x,y}\alpha_{a,b,x,y} p(a,b\vert x,y) \leq L,
\end{equation}
where $L$ is the maximum possible value achievable by the classical description \eqref{eq:LHV}.

Often it is said that Bell's theorem shows the incompatibility between quantum theory and the assumptions of realism, locality and freedom of choice. Under this standard view, the classical decomposition \eqref{eq:LHV} is referred to as a local hidden variable (LHV) model. This interpretation follows naturally from the causal perspective. The realism assumption 
corresponds to the assumption of explainability of the correlations in terms of a classical causal model, in particular, one positing the existence of a hidden variable $\Lambda$. 
In turn, the locality and freedom of choice (or measurement independence) assumptions are encoded in the causal structure of Fig.~\ref{fig:Bell}.  Through the Markov condition, this structure implies that 
${p(a\vert b,y,x,\lambda)=p(a\vert x,\lambda)}$ and ${p(b\vert a,x,y,\lambda)=p(b\vert y,\lambda)}$, a condition that is typically called `local causality'  \cite{jarrett1984physical,wiseman2017causarum} and which asserts that  Alice's and Bob's outcomes are fully determined by their choices and the state of the source.
It also implies that 
$p(x,y,\lambda)=p(x,y)p(\lambda)$, the condition that is typically taken to formalize the notion of  measurement independence and which asserts that Alice and Bob's choices are independent from the common source establishing the correlations.

\subsection{When Measurement Independence Fails}

The failure of measurement independence is called measurement dependence. In a scenario with measurement dependence we have $p(x,y,\lambda)\neq p(x,y)p(\lambda)$. In stark contrast with the causal models of the standard Bell scenario given by Eq.~\eqref{eq:LHV}, in the presence of measurement dependence we allow $p(\lambda\vert x,y)\neq p(\lambda)$. Without measurement independence, the admissible correlations $p(a,b \vert x,y)$ are fairly unrestricted; certainly standard Bell inequalities would no longer constrain them.

However, just because one grants that $X$ or $Y$ might be \emph{somewhat} correlated with $\Lambda$ does not mean that $X$ and $Y$ are not \emph{also} largely functions of causal factors which \emph{are} independent of $\Lambda$. 
To this end, we formally introduce \emph{local laboratory private randomness sources} $U_x$ and $U_y$ for Alice and Bob respectively, as depicted in Fig.~\ref{fig:Bell3}. We now provide some physical intuition for why it makes sense to model $X$ and $Y$ in terms of both causal factors confounded with $\Lambda$ as well as private causal factors independent of  $\Lambda$.

\begin{figure}[t]
    \centering
    \includegraphics[scale = 0.3]{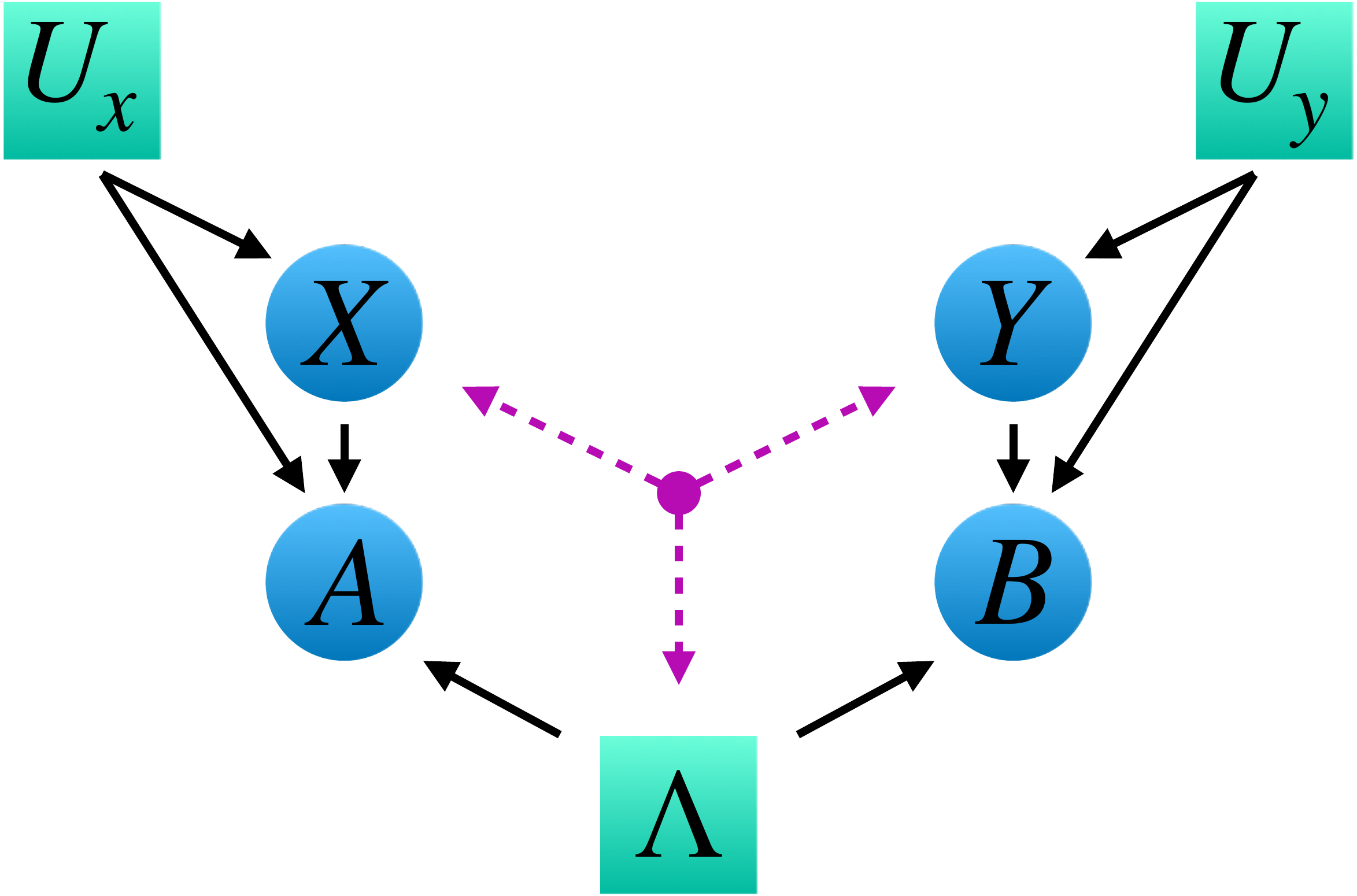}
    \caption{\textbf{Bell's causal structure with measurement dependence.} The purple arrows represent the fact that the correlations between $X$, $Y$ and $\Lambda$ can be due to some direct causal influence among them or mediated by an external common source. We emphasise that \emph{even though the settings are plausibly common-cause connected with $\Lambda$}, typically one can also be confident that they are also (strongly) influenced by independent localized sources of randomness.}
    \label{fig:Bell3}
\end{figure}

Differently from the locality assumption that can be assured by invoking special relativity, the measurement independence is a thorny issue. Nevertheless, the independence of $U_x$, $U_y$ and $\Lambda$ can indeed be made very plausible. For instance, $U_x$ and $U_y$ could stand for stars emitting cosmic photons centuries ago \cite{handsteiner2017cosmic,rauch2018cosmic} or even for the human randomness of hundreds of thousands of people around the globe \cite{big2018challenging}. It seems reasonable that such sources are independent of $\Lambda$, that in a photonic experiment would represent the laser and non-linear crystal employed to generate entangled photons. 

 By explicitly introducing $U_x$ and $U_y$ into our causal models, we can now rephrase the assumption of measurement independence as equivalent to the assumption that $X$ depends \emph{exclusively} on $U_x$ and not on any hidden factor correlated with $\Lambda$, i.e., the assumption that ${p(x\vert u_x,\lambda)=p(x\vert u_x)}$. While is clear that (by definition) $U_x$ is independent of $U_y$ and $\Lambda$, it can be difficult to rigorously justify the assumption of the independence of $X$ and $Y$ from $\Lambda$.

 As an illustration for why this is the case, suppose that $U_x$ represents some far away star emitting cosmic photons that define the variable $X$. There is no reason to doubt that this star is independent of a laser in a laboratory today. However, at some point, the photons emitted by these independent sources (the star and the laser) meet up within the physicist's lab and define the outcome $A$ they will give rise to. Within this context, the atmosphere in the lab (acting as a medium for the photons) or whatever else that might affect the photons state can act as a source of correlations and lead to deviations from perfect
 measurement independence.

In short, even though the experiment might use a source $U_x$ that is independent of  $\Lambda$, a causal mediary between $U_x$ and the measurement outcome variable $A$ might nonetheless be influenced by something that also has an influence on $\Lambda$, so that $X$ and $\Lambda$ end up having a common cause (as illustrated in Fig.~\ref{fig:Bell3}) and therefore the potential for small amount of statistical dependence between them.
Furthermore, as it turns out, even a quite small amount of measurement dependence (to be explicitly quantified below) is already sufficient to simulate the maximum possible Bell inequality violation achievable with quantum mechanics \cite{hall2010local,hall2020measurement}.

Even though the sources $U_x$, $U_y$ and $\Lambda$ can be assumed independent, the most general local hidden variable model represented by the DAG in Fig.~\ref{fig:Bell3} implies that
\begin{widetext}
\begin{align}
\begin{split}
\label{eq: p(a,b|x,y)}
p(a,b,x,y) &= \sum_{\lambda} \Bigg(\sum_{u_x}p(a\vert x,u_x,\lambda)p(x\vert u_x,\lambda)p(u_x)\Bigg) \Bigg(\sum_{u_y}p(b\vert y,u_y,\lambda) p(y \vert u_y,\lambda)p(u_y)\Bigg) p(\lambda)
\\ & = \sum_{\lambda} p(a\vert x,\lambda)p(x\vert \lambda) p(b\vert y,\lambda) p(y \vert \lambda) p(\lambda)
\end{split}\shortintertext{such that}
\begin{split}
\label{eq:lhvMD} p(a,b \vert x,y) &= \sum_{\lambda} \Bigg(\sum_{u_x}p(a\vert x,u_x,\lambda)p(u_x\vert x,\lambda)\Bigg) \Bigg(\sum_{u_y}p(b\vert y,u_y,\lambda) p(u_y\vert y,\lambda)\Bigg) p(\lambda\vert x,y)
\\ & = \sum_{\lambda} p(a\vert x,\lambda)p(b\vert y,\lambda) p(\lambda\vert x,y)
\end{split}\end{align}
\end{widetext}
Note that $p(\lambda\vert x,y)= p(\lambda)$ \emph{only} if $p(x\vert u_x,\lambda)=p(x\vert u_x)$ and $p(y\vert u_y,\lambda)=p(y\vert u_y)$, in which case then we recover the standard measurement independent model of Eq. \eqref{eq:LHV}. 

\subsection{Quantifying Measurement Dependence}

In the special limit $p(x,y,\lambda)- p(x,y)p(\lambda)\to 0$ we recover measurement independence, so one natural measure for quantifying the degree of measurement dependence in a given Bell experiment is \cite{chaves2015unifying}
\begin{equation}
\label{eq:m1}
\mathcal{M}\coloneqq \sum_{x,y,\lambda} \vert p(x,y,\lambda)- p(x,y)p(\lambda) \vert.    
\end{equation}
Clearly, when $\mathcal{M}=0$, the measurement-dependent model \eqref{eq:lhvMD} goes over to the usual measurement-independent model \eqref{eq:LHV}.

A related measure that has also been considered in the literature \cite{hall2010local,hall2020measurement} is the mutual information between the experimenter's choices and the source $\Lambda$, defined as
\begin{equation}
\label{eq:mi}
    I(X,Y:\Lambda)\coloneqq H(X,Y)+H(\Lambda)-H(X,Y,\Lambda),
\end{equation}
where $H(X)=-\sum_{x}p(x)\log{p(x)}$ is the Shannon entropy of the random variable $X$. Importantly, the measures \eqref{eq:m1} and \eqref{eq:mi} can be related via the Pinsker inequality \cite{fedotov2003refinements}
\begin{equation}\label{eq:Pinsker}
 \mathcal{M}^2 \leq \frac{I(X,Y:\Lambda)}{ \log_2 e}.
\end{equation}

Within this context, we can then ask how much measurement dependence would be necessary to reproduce some Bell inequality violation using the classical causal model described by \eqref{eq:lhvMD} (see Fig. \ref{fig:Bell3}). For instance, in the paradigmatic Clauser-Horne-Shimony-Holt (CHSH) scenario \cite{clauser1969proposed}, where each of the parties can measure two possible dichotomic measurements,
LHV models~\eqref{eq:LHV} respect the inequality
\begin{equation}
    \CHSH = \langle A_0B_0 \rangle+\langle A_0B_1 \rangle+\langle A_1B_0 \rangle-\langle A_1B_1 \rangle \leq 2,
\end{equation}
where $\langle A_xB_y \rangle= \sum_{a,b=0,1} (-1)^{a+b}p(a,b\vert x,y)$ is the expectation value of Alice and Bob's outcomes $A$ and $B$ conditioned on the inputs $x$ and $y$ (with all input and output variables taking values $0$ or $1$).
For measurement-dependent models \eqref{eq:lhvMD}, one can prove that \cite{chaves2015unifying}
\begin{equation}
\label{eq:CHSHM}
(\CHSH-2)/4 \leq \mathcal{M}
\end{equation}
In turn, using the mutual information, it has been proven that  \cite{hall2020measurement}
\begin{equation}
\label{eq:CHSHI}
2 -h\left(\frac{4-\CHSH}{8}\right) - \frac{4+\CHSH}{8}\log_23 \leq I(X,Y:\Lambda)  ,
\end{equation}
where $h(x)$ is the binary entropy given by
\begin{equation}
h(x)=-x\log_2 x-(1-x)\log_2(1-x) .  
\end{equation}
This implies, in particular, that even for the maximal quantum violation of $2\sqrt{2}$ of the CHSH inequality, a measurement dependence as low as  $I(X,Y:\Lambda)=0.046$  bits is already enough to reproduce the quantum predictions.

Equations \eqref{eq:CHSHM} and \eqref{eq:CHSHI} show that
the violation of a Bell inequality can be explained by
some non-zero degree of measurement dependence. The problem, however, is that whether or not there is such measurement dependence cannot be determined
in a standard Bell experiment. In other words, one cannot know if the violation of a given Bell inequality was due to quantum effects or simply because the measurement independence assumption was not fulfilled. As such, the measurement dependence loophole remains as a possible classical explanation for the violation of a Bell inequality, even for milestone experiments violating a Bell inequality while sealing the locality and detection efficiency loopholes~\cite{Shalm,Giustina,Hensen}.

\begin{figure}[t]
    \centering
    \includegraphics[scale = 0.30]{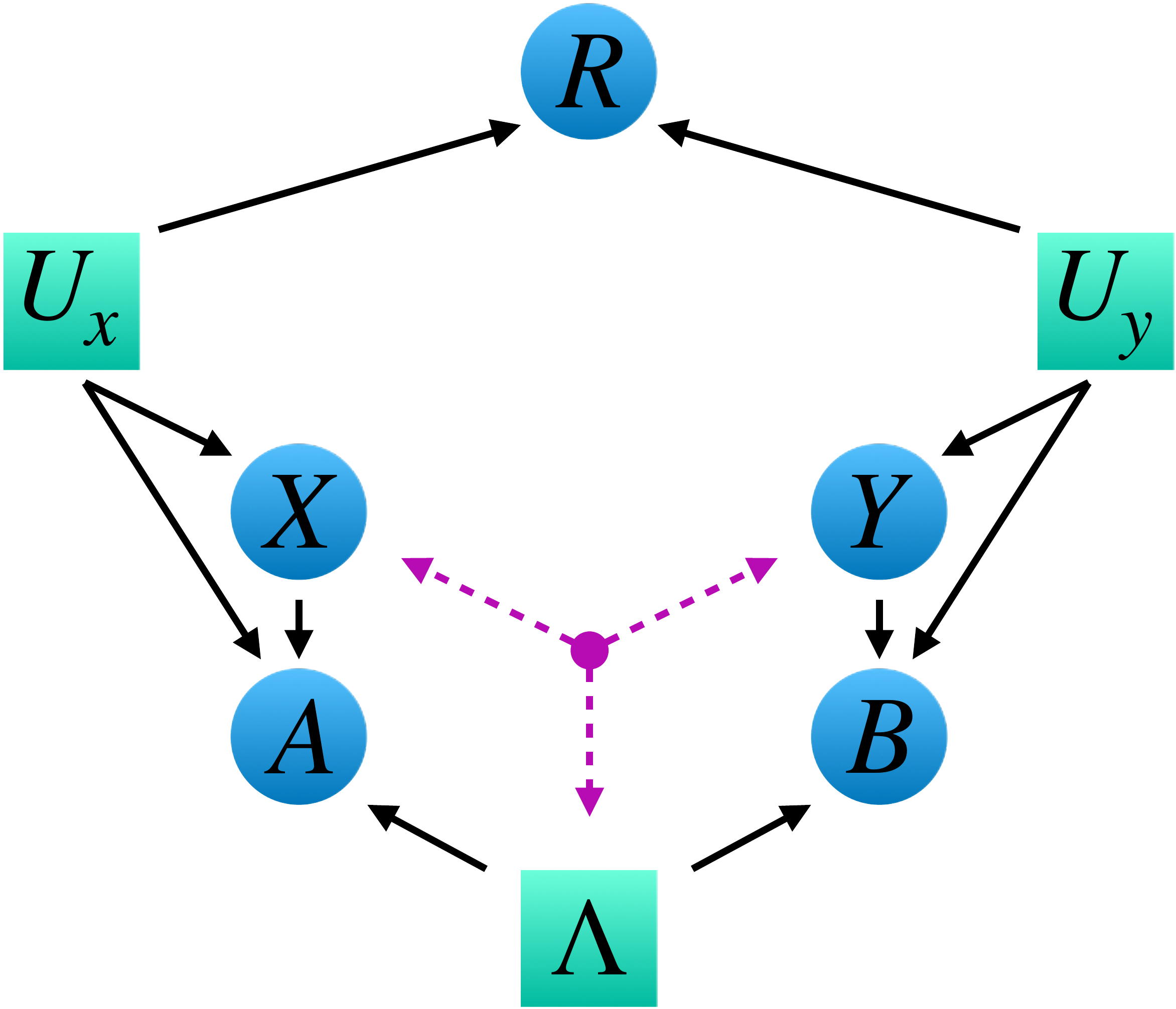}
    \caption{\textbf{Causal structure assessing the measurement dependence.} The extra measurement outcome variable $R$ can be used to provide an upper bound on the measurement dependence. Notice that this model allows, at least in principle, for arbitrary correlations between the input variables $X$ and $Y$ and the source $\Lambda$.}
    \label{fig:Bell4}
\end{figure}

At the core of the problem is the fact that, in a standard Bell experiment, there is nothing that implies an {\em upper bound} on the amount of measurement dependence.

In the following, we will show that by embedding a Bell experiment in a larger causal network that includes an auxiliary variable $R$ which is influenced by $U_x$ and $U_y$ but is independent of $\Lambda$ (see Fig.~\ref{fig:Bell4}), we are able to derive an upper bound on the amount of measurement dependence. In essence, the idea is that, for this causal structure, if $R$ is strongly correlated with $X$, then $X$ must be only weakly correlated with $\Lambda$, and similarly for $Y$. (The limiting case of this tradeoff---the fact that {\em perfect} correlation between $R$ and $X$ implies {\em no} correlation between $X$ and $\Lambda$---is what Fritz~\cite{fritz2012beyond} used to establish that the triangle causal scenario can be mapped onto the Bell scenario. The latter observation is therefore the root of the approach to bounding measurement dependence described in this article.)

This fact allows us to answer in an unambiguous manner (assuming the sources $U_x$, $U_y$ and $\Lambda$ to be independent) the question of whether the violation of a Bell inequality is due to the presence of quantum entanglement or due to measurement dependence. We will also extend our analysis to the multipartite case and show how our inequalities can be used to characterize a large class of causal networks that are increasingly attracting attention \cite{branciard2010characterizing,branciard2012bilocal,fritz2012beyond,fritz2016beyond,tavakoli2014nonlocal,chaves2016polynomial,poderini2020experimental,rosset2016nonlinear,andreoli2017maximal,chaves2018quantum,renou2019genuine,carvacho2017experimental}. Prior to doing so, however, we briefly introduce the entropic framework for causal inference \cite{chaves2014inferring} that will be crucial to derive some of our technical results.

\section{Entropic inequalities and causal networks}\label{sec:entropic}

Our aim is to obtain an upper bound to the measurement dependence measures, $\mathcal{M}$ or $I(X,Y: \Lambda)$ in terms of the degree of correlation exhibited between $R$ and $X,Y$ in the causal structure of Fig.~\ref{fig:Bell4}.
Note that the joint distributions on observed variables that are compatible with the causal structure of Fig.~\ref{fig:Bell4} are of the form:
\begin{align}\label{eq:triangleMD}
p(a,b,x,y,r) = \smashoperator[l]{\sum_{u_x,u_y,\lambda}}\begin{pmatrix}
p(a,x\vert u_x,\lambda)p(b,y\vert u_y,\lambda)\times\\
 p(r\vert u_x,u_y)p(u_x)p(u_y)p(\lambda)
\end{pmatrix}.
\end{align}
However, the fact that the sources $U_x$, $U_y$ and $\Lambda$ are independent 
implies that this set 
is nonconvex and therefore difficult to characterize
\cite{wolfe2019inflation,chaves2016polynomial}. This nonconvexity can be circumvented by the entropic approach introduced in \cite{fritz2012entropic,chaves2014causal,chaves2014inferring,chaves2015information}, allowing one to obtain analytical bounds for $I(X,Y:\Lambda)$. The bounds can then also be translated into bounds on the L1-norm quantifier $\mathcal{M}$ via the Pinsker inequality~\cite{fedotov2003refinements}.

A detailed account of the entropic approach can be found at \cite{chaves2014inferring}. Here we will introduce the central concepts necessary to understand the results that will follow.

Consider a set of $n$ discrete random variables $X_1, \dots,
X_n$. We denote as
$[n]=\{1, \dots, n\}$ the set of indices of these random variables. For every subset $S\in 2^{[n]}$ of indices, $X_S$ is
the random vector $(X_i)_{i\in S}$ and $H(S):=H(X_S)$ is its associated Shannon entropy defined by ${H(X_S)\coloneqq-\sum_{x_s}p(x_s)\log_2 p(x_s)}$. We can construct an entropy vector $h=\left\{\emptyset, H(X_1),H(X_2), H(X_1,X_2),\dots,H(X_1,\dots,X_n) \right\}$ with all possible $2^n$ entropies for $n$ variables (including the empty set) and ask what are the constraints for $h$ to be a valid entropy vector. The region of the real space $\mathbb{R}^{2^n}$ corresponding to entropies  is known to define a convex cone \cite{yeung2008information}; a complete and explicit description remains unknown. For this reason, one has to work with an outer approximation, known as the Shannon cone $\Gamma_n$, defined by the set of linear inequalities given by
\begin{subequations}
\begin{align}
	\label{eq:monotonicity} H_{}([n]\setminus\{i\}) &\leq H_{}([n])\\
	\label{eq:subadditivity} H_{}(S) + H_{}(S\cup\{i,j\}) &\leq H_{}(S\cup \{i\}) + H_{}(S\cup \{j\})
	\\\label{eq:pointdist} H_{}(\emptyset) &= 0 
\end{align}
\end{subequations}
for all $S \subset [n] \setminus\{i,j\}$, $i \neq j$ and $i, j\in
[n]$.  These inequalities are known as the elementary
inequalities and any inequality that follows from
the elementary set is said to be of the Shannon-type. The first constraint~\eqref{eq:monotonicity} is known as monotonicity and
states that the uncertainty about a set of variables should always be larger than or equal to the
uncertainty about any subset of it, i.e., nonnegativity of conditional entropy. The second constraint~\eqref{eq:subadditivity} is called strong sub-additivity and is equivalent to the nonnegativity of the conditional mutual information. That is, 
\begin{align*}
I(&X_i:X_j\vert X_S)\coloneqq
\\& H(X_{S \cup i}) +H(X_{ S \cup j})-H( X_{S
\cup\{i,j\}})-H(X_{S})\\
&\geq 0\,.
\end{align*}

The causal relations implied by a given causal structure can be easily integrated in this framework as linear constraints. For  instance, the independence of the sources in the causal structure of Fig.~\ref{fig:Bell4}
implies that  $H(U_x,U_y,\Lambda)=H(U_x)+H(U_y)+H(\Lambda)$. The subspace of $\mathbb{R}^{2^n}$
defined by all such constraints can be denoted as $\Gamma_c$. Thus, any entropy vector compatible with a given causal structure should lie in the convex cone $\Gamma_n^c:=\Gamma_n \cap \Gamma_c$. Since the sources are not directly observable in the Bell experiment, they need to be traced out from our description, an instance of a quantifier elimination problem that in the entropic case can be performed by a simple Fourier-Motzkin elimination \cite{williams1986fourier}.

\section{Bounding the measurement dependence
}\label{sec:nonlinearineqs}

As noted earlier, in order to upper bound the measurement dependence $I(X,Y:\Lambda)$, we modify the causal structure to that of Fig.~\ref{fig:Bell4}, wherein there is an extra
variable $R$ that might depend on the sources $U_x$ and $U_y$ but is independent of $\Lambda$.

Employing the general entropic framework introduced in \cite{chaves2014inferring} and outlined above, we can completely characterize the Shannon inequalities bounding the measurement dependence ${I(X,Y:\Lambda)}$. For our purpose, the causal constraints implied by the DAG  in Fig.~\ref{fig:Bell4} can be summarized by the entropic constraints 
\begin{subequations}
\begin{align}\label{eq:source_indep_entropy}
&H(U_x,U_y,\Lambda)=H(U_x)+H(U_y)+H(\Lambda),\quad\text{and}
\\\label{eq:zero_mut_info} 
\begin{split}
&I(R:X,Y,\Lambda\vert U_x,U_y) = 0, \\
&I(X:R,Y,U_y\vert U_x,\Lambda) = 0, \\
&I(Y:R,X,U_x\vert U_y,\Lambda) =0,
\end{split}
\end{align}
\end{subequations}
Eq.~\eqref{eq:source_indep_entropy} follows from the independence of the sources while Eqs.~\eqref{eq:zero_mut_info} encode the zero conditional mutual information between any random variable and its causal nondescendants given its parents, i.e., the local Markov condition.

Using the approach delineated before and performing the corresponding Fourier-Motzkin elimination~\cite{williams1986fourier}, we find three non-trivial upper bounds for $I(X,Y:\Lambda)$. 
\begin{lemma}\label{lem:up1}
For any data compatible with the classical causal structure in Fig.~\ref{fig:Bell4}, we find that ${I(X,Y:\Lambda)\leq \Theta(X,Y,R)}$, where 
\begin{align}
\Theta(&X,Y,R)\coloneqq 
\\&\min \begin{cases} \begin{smallmatrix} H(X,Y \vert R)\,,\end{smallmatrix} \\ 
\begin{smallmatrix} H(X,Y) -I(X:Y:R)
   -I(X:R) -I(Y:R)\,,
\end{smallmatrix}\\
\begin{smallmatrix} 
H(X,Y)+H(R) -2I(X:Y:R)-2I(X:R)  -2I(Y:R)\,,
\end{smallmatrix}
\end{cases}\nonumber
\end{align} 
and where the term $I(X:Y:R)$ is the tripartite mutual information, which can be rewritten as \footnotesize{$I(X:Y:R)\coloneqq H(X,Y,R)-H(X,Y)-H(X,R)-H(Y,R)+H(X)+H(Y)+H(R)$}.
\end{lemma}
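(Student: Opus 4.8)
The plan is to treat the claim as a projection of a polyhedral cone in entropy space, which is exactly the entropic/Fourier--Motzkin programme outlined in Sec.~\ref{sec:entropic}. I would work with the entropy vector on the six variables $\{U_x,U_y,\Lambda,X,Y,R\}$; the outcomes $A,B$ never enter either $I(X,Y:\Lambda)$ or $\Theta(X,Y,R)$ and do not appear in the constraints~\eqref{eq:source_indep_entropy}--\eqref{eq:zero_mut_info}, so they may be marginalised out at the start. The feasible region is $\Gamma^c_n=\Gamma_n\cap\Gamma_c$, where $\Gamma_n$ is cut out by monotonicity~\eqref{eq:monotonicity} and submodularity~\eqref{eq:subadditivity}, and $\Gamma_c$ is the affine subspace fixed by the causal constraints. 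Since $U_x,U_y$ are latent, the task is to eliminate every entropy term containing them (retaining $\Lambda$ only through the combination $I(X,Y:\Lambda)$) via Fourier--Motzkin elimination; the three expressions in $\Theta$ should then be the non-trivial facets of the projected cone that bound $I(X,Y:\Lambda)$ from above.

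Before running the elimination blindly I would first derive the simplest facet, $I(X,Y:\Lambda)\le H(X,Y\vert R)$, by hand, both as a correctness check and to expose the mechanism. The first line of~\eqref{eq:zero_mut_info} gives $I(R:X,Y\vert U_x,U_y)=0$ by monotonicity of mutual information, so $H(X,Y\vert R)\ge H(X,Y\vert R,U_x,U_y)=H(X,Y\vert U_x,U_y)$, the first step being just ``conditioning reduces entropy''. Writing $Z=(X,Y)$, submodularity applied to the sets $(Z,\Lambda)$ and $(Z,U_x,U_y)$ yields $H(Z,\Lambda)+H(Z,U_x,U_y)\ge H(Z)+H(Z,\Lambda,U_x,U_y)\ge H(Z)+H(\Lambda,U_x,U_y)$, and the source-independence constraint~\eqref{eq:source_indep_entropy} supplies $H(\Lambda,U_x,U_y)=H(\Lambda)+H(U_x,U_y)$. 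Rearranging gives $H(Z\vert U_x,U_y)\ge H(Z)-H(Z\vert\Lambda)=I(X,Y:\Lambda)$, proving the first bound. Notably this uses only~\eqref{eq:source_indep_entropy} and the first line of~\eqref{eq:zero_mut_info}.

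For the other two bounds I would first rewrite the stated right-hand sides in a more transparent form. Using the identity $I(X:R)+I(Y:R)=I(X,Y:R)+I(X:Y:R)$, the second bound collapses to $H(X,Y\vert R)-2\,I(X:Y:R)$ and the third to $2H(X,Y,R)-H(X,Y)-H(R)-4\,I(X:Y:R)$. Because the interaction information $I(X:Y:R)$ is sign-indefinite, none of the three expressions dominates the others, which is exactly why $\Theta$ must be their minimum rather than any single one. I would derive these two by combining the submodularity step above with the second and third lines of~\eqref{eq:zero_mut_info} — which force any dependence of $R$ on $X$ (resp.\ $Y$) to pass through $U_x$ (resp.\ $U_y$) — and then eliminating $U_x,U_y$; the interaction term $I(X:Y:R)$ arises precisely from the two-sided nature of these constraints.

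The hard part will not be any single inequality but the elimination itself: Fourier--Motzkin on a six-variable entropy space produces a large and highly redundant list, and the genuine work is to certify that exactly these three survive as non-trivial upper bounds on $I(X,Y:\Lambda)$ — i.e.\ that each is facet-defining for the projected cone and that no fourth independent bound exists. I would discharge this either by a vertex/facet enumeration of the projected system together with an irredundancy check, or, more constructively, by exhibiting for each of the three an explicit distribution compatible with Fig.~\ref{fig:Bell4} that saturates it while leaving the other two slack, thereby simultaneously proving tightness and the necessity of the minimum.
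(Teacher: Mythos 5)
Your proposal is correct and follows essentially the same route as the paper: the paper's own ``proof'' of Lemma~\ref{lem:up1} is precisely the entropic programme you describe---impose the Shannon cone together with the causal constraints~\eqref{eq:source_indep_entropy}--\eqref{eq:zero_mut_info} on the six variables $\{U_x,U_y,\Lambda,X,Y,R\}$ and Fourier--Motzkin eliminate the latent terms, keeping $I(X,Y:\Lambda)$ as the quantity to be bounded. Your hand derivation of the first bound (via $I(R:X,Y\vert U_x,U_y)=0$, submodularity on $(Z,\Lambda)$ and $(Z,U_x,U_y)$, and source independence) and your rewritings of the second and third bounds using $I(X:R)+I(Y:R)=I(X,Y:R)+I(X:Y:R)$ are all correct; note only that the lemma asserts validity of the three upper bounds, not completeness of the facet description, so the irredundancy/facet-enumeration step in your last paragraph is not actually required.
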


The entropic approach also gives rise to the lower bound given by
\begin{eqnarray}
I(X:Y) \leq I(X,Y:\Lambda).
\end{eqnarray}

Each of the upper bounds in Lemma~\ref{lem:up1} can combined with~\eqref{eq:CHSHM} or~\eqref{eq:CHSHI} to give rise to a non-linear Bell inequality. 
\begin{prop}For observational data compatible with the classical causal structure in Fig.~\ref{fig:Bell4} we find that
\begin{equation}
\label{eq:ineqCHSH1}
2 -h\left(\frac{4-\CHSH}{8}\right) - \frac{4+\CHSH}{8}\log_23  \leq \Theta(X,Y,R)\,,\\
\end{equation}
by virtue of combining~\eqref{eq:CHSHI} with Lemma~\ref{lem:up1}, as well as
\begin{equation}
\label{eq:ineqCHSH2}
\frac{\CHSH-2}{4} \leq \sqrt{\frac{\Theta(X,Y,R)}{\log_2 e}}\,,
\end{equation}
by virtue of combining \eqref{eq:CHSHM} with Lemma~\ref{lem:up1} through the Pinsker inequality~\eqref{eq:Pinsker}.
\end{prop}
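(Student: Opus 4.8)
The plan is to establish both inequalities by pure transitivity, chaining together the CHSH bounds \eqref{eq:CHSHM}--\eqref{eq:CHSHI}, the Pinsker inequality \eqref{eq:Pinsker}, and the upper bound of Lemma~\ref{lem:up1}. Before chaining, however, I would first confirm that the CHSH bounds \eqref{eq:CHSHM} and \eqref{eq:CHSHI}---which are stated for the measurement-dependent model \eqref{eq:lhvMD} of Fig.~\ref{fig:Bell3}---actually apply to data generated by the larger causal structure of Fig.~\ref{fig:Bell4}. This follows by marginalizing the joint distribution \eqref{eq:triangleMD} over the auxiliary outcome $r$: because $\sum_r p(r\vert u_x,u_y)=1$, the marginalization collapses \eqref{eq:triangleMD} to the form \eqref{eq: p(a,b|x,y)}, so that $p(a,b\vert x,y)$ admits the measurement-dependent decomposition \eqref{eq:lhvMD}. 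Hence the CHSH quantity evaluated on Fig.~\ref{fig:Bell4} data obeys both \eqref{eq:CHSHM} and \eqref{eq:CHSHI}.

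For \eqref{eq:ineqCHSH1} the argument is then a single composition. Inequality \eqref{eq:CHSHI} lower-bounds $I(X,Y:\Lambda)$ by the left-hand expression of \eqref{eq:ineqCHSH1}, while Lemma~\ref{lem:up1} upper-bounds the same quantity by $\Theta(X,Y,R)$; transitivity through $I(X,Y:\Lambda)$ yields the claim at once.

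For \eqref{eq:ineqCHSH2} I would concatenate three links. First, \eqref{eq:CHSHM} gives $(\CHSH-2)/4 \leq \mathcal{M}$. Second, \eqref{eq:Pinsker} rearranges to $\mathcal{M} \leq \sqrt{I(X,Y:\Lambda)/\log_2 e}$, the square root being legitimate since both sides of \eqref{eq:Pinsker} are nonnegative. Third, Lemma~\ref{lem:up1} combined with monotonicity of the square root gives $\sqrt{I(X,Y:\Lambda)/\log_2 e} \leq \sqrt{\Theta(X,Y,R)/\log_2 e}$. Composing the three links delivers \eqref{eq:ineqCHSH2}.

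I do not expect a genuine obstacle here, since all the substantive work has already been absorbed into Lemma~\ref{lem:up1} (itself proved by Fourier-Motzkin elimination) and into the pre-existing bounds \eqref{eq:CHSHM}--\eqref{eq:CHSHI}. The only point requiring mild care is the square-root step for \eqref{eq:ineqCHSH2}: the chain presupposes $\CHSH \geq 2$, so that the left-hand side $(\CHSH-2)/4$ is nonnegative and the comparison with a square root is meaningful; in the complementary regime $\CHSH < 2$ the inequality is trivially true, its right-hand side being nonnegative while its left-hand side is negative.
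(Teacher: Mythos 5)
Your proof is correct and follows essentially the same route as the paper: the proposition is obtained precisely by chaining \eqref{eq:CHSHI} (respectively \eqref{eq:CHSHM} together with the Pinsker inequality \eqref{eq:Pinsker}) with the upper bound of Lemma~\ref{lem:up1}, using $I(X,Y:\Lambda)$ as the intermediate quantity. Your two additional checks---that marginalizing \eqref{eq:triangleMD} over $r$ recovers the measurement-dependent form \eqref{eq:lhvMD}, so the CHSH bounds indeed apply to data from Fig.~\ref{fig:Bell4}, and that the square-root comparison in \eqref{eq:ineqCHSH2} is only substantive when $\CHSH\geq 2$---are sound points of diligence that the paper leaves implicit.
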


Under the assumption that the sources $U_x$, $U_y$ and $\Lambda$ are independent, a violation of any combination of these inequalities would mean that whatever degree of
measurement dependence is present,
i.e., whatever value $I(X,Y:\Lambda)$ takes,
it is not enough to explain the observed correlations. Thus, we would be unambiguously witnessing nonclassicality.
Notice that if the input variables $X$ and $Y$ are perfectly correlated with the auxiliary variable $R$, then $H(X,Y\vert R)=0$ implying that $I(X,Y:\Lambda)=0$. In this case, we recover the usual Bell scenario with no measurement dependence. It is important to highlight, however, that in our scenario, an upper bound on the amount of
measurement dependence is 
implied
by the empirical data (assuming the independence of sources) and not assumed a priori, like in a standard Bell scenario.

Notice that the upper bounds in  Lemma~\ref{lem:up1} are valid for an arbitrary number of inputs and outputs. Thus, inequalities like \eqref{eq:ineqCHSH1} and \eqref{eq:ineqCHSH2} can be derived for arbitrary bipartite Bell scenarios. To illustrate, in \cite{chaves2015unifying} it has been noticed that the measure $\mathcal{M}$ can also be related to the CGLMP inequality \cite{PhysRevLett.88.040404}, a Bell inequality bounding classical correlations in a scenario where Alice and Bob have $d$ possible outcomes. More precisely,
\begin{equation}
\label{eq:CGLMM}
(I_d-2)/4 \leq \mathcal{M}
\end{equation}
where the CGLMP inequality is given by \cite{PhysRevLett.88.040404}
\begin{eqnarray}
I_d\coloneqq & & \sum_{k=0}^{{\lceil d/2 \rceil}-1}\left(1- \frac{2k}{d-1} \right) \\ \nonumber 
& &\big[ p(a_0=b_0+k)+p(b_0=a_1+k+1) \\ \nonumber
& & +p(a_1=b_1+k)+p(b_1=a_0+k) \\ \nonumber
& & - p(a_0=b_0-k-1)-p(b_0=a_1-k) \\ \nonumber 
& & -p(a_1=b_1-k-1)-p(b_1=a_0-k-1) \big] \leq 2
\end{eqnarray}
and where 
\begin{equation}
p(a_x=b_y+k)=\sum_{j=0}^{d-1}p(a_x=j,\ b_y=j+k \quad \mathrm{mod} \quad d), 
\end{equation}
with $p(a_x,b_y)=p(a,b\vert x,y)$. Using the Pinsker inequality~\eqref{eq:Pinsker}, we can readily derive a generalization of inequality \eqref{eq:ineqCHSH2}.
\begin{prop}
For observational data compatible with the classical causal structure in Fig.~\ref{fig:Bell4}, we find that 
\begin{equation}
\label{eq:ineqCGLMP}
\frac{I_d-2}{4}
\leq \sqrt{\frac{\Theta(X,Y,R)}{\log_2 e}}.
\end{equation}
\end{prop}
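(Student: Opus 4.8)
The plan is to chain together three inequalities already in hand, exactly paralleling the derivation of~\eqref{eq:ineqCHSH2} for the CHSH case. The starting point is the CGLMP relation~\eqref{eq:CGLMM}, which for any measurement-dependent model of the form~\eqref{eq:lhvMD} bounds the $d$-outcome Bell quantity by the $L_1$ measure of measurement dependence, $(I_d-2)/4 \leq \mathcal{M}$. Next I would invoke the Pinsker inequality~\eqref{eq:Pinsker} to pass from the $L_1$ measure to the mutual information, and finally Lemma~\ref{lem:up1} to replace the unobservable mutual information $I(X,Y:\Lambda)$ by the empirically accessible quantity $\Theta(X,Y,R)$.

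Concretely, since $\mathcal{M}\geq 0$ by definition~\eqref{eq:m1}, the Pinsker inequality~\eqref{eq:Pinsker} can be rewritten as $\mathcal{M}\leq\sqrt{I(X,Y:\Lambda)/\log_2 e}$. Lemma~\ref{lem:up1} then gives $I(X,Y:\Lambda)\leq\Theta(X,Y,R)$; since the square root is monotonically increasing on the nonnegative reals and $\Theta(X,Y,R)\geq 0$, this yields $\sqrt{I(X,Y:\Lambda)/\log_2 e}\leq\sqrt{\Theta(X,Y,R)/\log_2 e}$. Concatenating these with~\eqref{eq:CGLMM} produces the desired chain $(I_d-2)/4\leq \mathcal{M}\leq \sqrt{I(X,Y:\Lambda)/\log_2 e}\leq \sqrt{\Theta(X,Y,R)/\log_2 e}$, which is precisely~\eqref{eq:ineqCGLMP}.

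The only point requiring care is the sign bookkeeping when the left-hand side is nonpositive: if $I_d\leq 2$ the claimed bound holds trivially because the right-hand side is nonnegative, so the chain above is strictly needed only in the regime $I_d>2$, where every term is nonnegative and monotonicity of the square root applies without ambiguity. Since each of the three ingredients---\eqref{eq:CGLMM}, \eqref{eq:Pinsker}, and Lemma~\ref{lem:up1}---is already established, I do not expect any substantive obstacle; the content of the proposition is simply the observation that the CGLMP inequality slots into the same template used for CHSH in~\eqref{eq:ineqCHSH2}, so that the $d$-outcome lower bound on measurement dependence likewise becomes an empirically testable, nonlinear Bell inequality under the assumed independence of the sources $U_x$, $U_y$ and $\Lambda$.
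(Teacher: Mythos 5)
Your proposal is correct and follows exactly the paper's own route: the paper derives \eqref{eq:ineqCGLMP} by combining the CGLMP--measurement-dependence bound \eqref{eq:CGLMM} with Lemma~\ref{lem:up1} through the Pinsker inequality \eqref{eq:Pinsker}, precisely the chain $(I_d-2)/4\leq\mathcal{M}\leq\sqrt{I(X,Y:\Lambda)/\log_2 e}\leq\sqrt{\Theta(X,Y,R)/\log_2 e}$ that you write down. Your added remark about the trivial regime $I_d\leq 2$ is a harmless (and correct) bit of bookkeeping that the paper leaves implicit.
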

Violation of inequality~\eqref{eq:ineqCGLMP} implies that the degree of violation of the CGLMP inequality cannot be accounted for by measurement dependence and therefore attests to the presence of nonclassicality.

\subsection{Example: The Fritz distribution}

To illustrate our results we consider the Fritz distribution \cite{fritz2012beyond}, the first known example connecting causal networks with Bell's theorem.  In this case, all variables are binary and the measurement outcome of variable $r$ consists of two bits, $r=(r_0,r_1)$. As argued by Fritz, if the bit $x$ is perfectly correlated with $r_0$, this implies that $x$ should be completely uncorrelated from the source $\lambda$. Similarly, perfect correlation between $y$ and $r_1$ implies that $y$ is uncorrelated from the source $\lambda$. That is, the variables $X$ and $Y$ can be seen as the standard measurement choices of Alice and Bob in the usual Bell scenario. Under this condition of perfect correlations, the violation of a standard Bell inequality by the conditional distribution $p(a,b \vert x,y)$ is then a sufficient condition to witness the nonclassicality. 

A quantum realization of such scenario is given by
\begin{align}\label{qexpress}
p(a,x,b&,y,r_0,r_1)
\\\nonumber =\mathrm{Tr}&\left({\rho_{AB}} \otimes {\rho_{X R_0}} \otimes {\rho_{Y R_1}}\,\cdot \,  {M^{A X}_{a,x}} \otimes {M^{B Y}_{b,y}} \otimes {M^{R_0 R_1}_{r_0,r_1}} \right)\;,
\end{align}
where $\rho_{A B}$ denotes the density operator of the state shared between Alice and Bob (thus replacing the classical description in terms of the hidden variable $\Lambda$) and similarly for $\rho_{X R_0}$ and $\rho_{Y R_1}$; $\{M^{A X}_{a,x}\}$ denotes a POVM acting on the physical system in possession of Alice (similarly for $\{M^{B Y}_{b,y}\}$ and $\{M^{R_0 R_1}_{r_0,r_1}\}$). In the Fritz case, the sources $\rho_{A B}$, $\rho_{X R_0}$, $\rho_{Y R_1}$ are given by three singlet states $\ket{\Phi}=(1/\sqrt{2})(\ket{00}+\ket{11}$), and the POVMs have the following form:
\begin{align}\begin{split}\label{eq:FritzPOVMs}
& M^{R_0 R_1}_{(r_0,r_1)}=M^{R_0}_{r_0}\otimes M^{R_1}_{r_1}, \\
&
M^{A X}_{(a,x)}=M^{X}_{x}\otimes M^{A}_{a|x},
\\
&
M^{B Y}_{(b,y)}=M^{y}_{y}\otimes M^{B}_{b|y},
\end{split}
\end{align}
where $\{ M^{R_0}_{r_0}\},$ $\{M^{R_1}_{r_1}\},$ $\{M^{X}_{x}\},$ $\{M^{Y}_{y}\}$ are all 
measurements of the $\sigma_z$ basis, $ \{ M^{A}_{a|x}\}$ corresponds to one of the two Pauli observables among $\{\sigma_x, \sigma_z\}$ depending on the value of $x$, and $\{ M^{B}_{b|y} \}$ corresponds to one of the two observables among  $\{{(\sigma_z+\sigma_x)/\sqrt{2}}, {(\sigma_z-\sigma_x)/\sqrt{2}}\}$  depending on the value of $y$. The fact that the measurements in Fritz's example have been chosen to ensure that the conditional $p(a,b \vert x,y)$ violates a Bell inequality implies that Fritz's distribution $p(a,x,b,y,r_0,r_1)$ has no classical explanation.

Any experiment that aims to realize the Fritz distribution in the triangle scenario aims to realize the ideal states and measurements specified above, but due to the inevitability of noise, the states and measurements that are actually implemented are necessarily noisy versions of these. Consider, for instance, that the source states are noisy versions of the Bell state, given by $\rho=v\ket{\Phi}\bra{\Phi}+(1-v)\openone/4$. This implies that the correlations between $x$ and $r_0$ and between $y$ and $r_1$ will not be perfect and the Fritz argument cannot be employed any longer. Even though in this case we do not have any measurement dependence, the point is that the correlations generated by such model are indistinguishable from a measurement dependent model. In other terms, to be sure about the non-classicality of the data we have to employ the causal network delineated above. For this case, however, $\Theta(X,Y,R)$ is given by 
\begin{align}
&2-\frac{\operatorname{s}\left((v-1)^2\right)+\operatorname{s}\left((v+1)^2\right)+\operatorname{s}\left(1-v^2\right)}{4}\,, 
\\\nonumber &\text{where}\qquad \operatorname{s}(x)\coloneqq x\;\; log_2(x)\,,
\end{align}
implying that visibilities as high as $v\approx 0.994$ are required to observe any violation of the inequalities~\eqref{eq:Pinsker}~or~\eqref{eq:CHSHI} and thus witness non-classicality even in the potential presence of measurement dependence. It is worthy to point out, however, that the source shared between $X$ and $R_0$ and the source between $Y$ and $R_1$ do not need to have a quantum nature. Since we are simply measuring such states in the computational basis, the same correlations can be achieved with a classical source, significantly simplifying an experimental test.

In hindsight, it is not surprising that the inequalities we derive are not robust. It is known that measurement dependence is a very strong resource to simulate nonlocal correlations in a Bell scenario \cite{hall2010local,hall2020measurement}. Remarkably, however, different approaches that do not hinge on Bell's theorem can tolerate a significant amount of measurement dependence, way beyond what is possible within the standard Bell scenario. As will be explored in details elsewhere, resorting to the inflation technique using the Web inflation \citep[Fig.~2]{wolfe2019inflation} we can derive new non-linear inequalities allowing for visibilities as low as $v\approx 0.907$, indeed showing that the causal network we propose here not only leads to testable constraints but can also tolerate much more measurement dependence than usual approaches.

\subsection{Multipartite Bell Inequalities without Measurement Independence}

So far, we have focused on the bipartite scenario, but our results can be readily extended beyond this.
Concretely, consider the case of $n$ parties, each $i$-th part with an input $X_i$ and output $A_i$; the cardinalities of the inputs and outputs being arbitrary (see Fig. \ref{fig:Bell5}).

Similarly to the bipartite case, we introduce an auxiliary variable $R$ that can depend on all the sources of local laboratory private randomness $\{U_i\}_i$ where $U_i$ accounts for causal influences on $X_i$ and $A_i$ which are independent of $\Lambda$ (with ${i=1,\dots,n}$) (see Fig.~\ref{fig:Bell6}). 
The joint distributions over the observed variables that are compatible with the causal structure of Fig.~\ref{fig:Bell6} are:

\begin{align}\label{eq:multiBellform}
&p(a_1,...,a_n,x_1,...x_n,r)=\nonumber\\
 &\smashoperator[l]{\sum_{u_1,...,u_n,\lambda}}
p(r\vert u_1,...,u_n)p(\lambda)\prod_{i=1}^n p(a_i,x_i\vert u_i,\lambda)p(u_i)
\end{align}
Observational statistics over the original observable variables together with $R$ can then be employed to upper bound $I(X_1,\dots,X_n:\Lambda)$.
\begin{lemma}\label{lem:generalub}
For any data compatible with the classical causal structure in Fig.~\ref{fig:Bell6}, we find that \begin{align}
I(X_1,\dots,X_n:\Lambda) \leq H(X_1,\dots,X_n\vert R).
\end{align}
\end{lemma}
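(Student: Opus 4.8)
The plan is to reduce the inequality to a statement purely about $R$ and then discharge it with the two structural independences supplied by the DAG of Fig.~\ref{fig:Bell6}. Write $\mathbf{X}=(X_1,\dots,X_n)$ and $\mathbf{U}=(U_1,\dots,U_n)$. Since $H(\mathbf{X}\vert R)=H(\mathbf{X})-I(\mathbf{X}:R)$ and $I(\mathbf{X}:\Lambda)=H(\mathbf{X})-H(\mathbf{X}\vert\Lambda)$, the claim $I(\mathbf{X}:\Lambda)\le H(\mathbf{X}\vert R)$ is equivalent to
\begin{equation}
I(\mathbf{X}:R)\le H(\mathbf{X}\vert\Lambda).
\end{equation}
It therefore suffices to prove this single inequality, which is the multipartite analogue of the first branch of $\Theta$ in Lemma~\ref{lem:up1}.

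Next I would read off the entropic constraints encoded by Fig.~\ref{fig:Bell6}: the mutual independence of the sources, $H(U_1,\dots,U_n,\Lambda)=\sum_i H(U_i)+H(\Lambda)$, which in particular gives $I(\mathbf{U}:\Lambda)=0$; and the local Markov condition for $R$, whose only parents are the $U_i$, namely $I(R:\mathbf{X},\Lambda\vert\mathbf{U})=0$, which in particular gives $I(R:\Lambda\vert\mathbf{U})=0$. The key intermediate fact I would extract is that $R$ is \emph{unconditionally} independent of $\Lambda$: by the chain rule and monotonicity,
\begin{equation}
I(\Lambda:R)\le I(\Lambda:R,\mathbf{U})=I(\Lambda:\mathbf{U})+I(\Lambda:R\vert\mathbf{U})=0,
\end{equation}
so $I(\Lambda:R)=0$. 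This is the entropic counterpart of the Fritz-type observation that, because the source feeding $R$ is independent of $\Lambda$, any correlation $R$ shares with the settings must be of a kind that is itself independent of $\Lambda$.

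With $I(\Lambda:R)=0$ in hand the target inequality follows from a short chain of Shannon-type steps, using only nonnegativity of (conditional) mutual information and the bound of a conditional mutual information by the corresponding conditional entropy:
\begin{align}
I(\mathbf{X}:R) &\le I(\mathbf{X},\Lambda:R) = I(\Lambda:R)+I(\mathbf{X}:R\vert\Lambda)\nonumber\\
&= I(\mathbf{X}:R\vert\Lambda)\le H(\mathbf{X}\vert\Lambda).
\end{align}
The first step is the chain rule $I(\mathbf{X},\Lambda:R)=I(\mathbf{X}:R)+I(\Lambda:R\vert\mathbf{X})$ together with $I(\Lambda:R\vert\mathbf{X})\ge0$; the middle equality uses $I(\Lambda:R)=0$; and the last step is $I(\mathbf{X}:R\vert\Lambda)=H(\mathbf{X}\vert\Lambda)-H(\mathbf{X}\vert R,\Lambda)\le H(\mathbf{X}\vert\Lambda)$. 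Substituting back yields $H(\mathbf{X}\vert R)=H(\mathbf{X})-I(\mathbf{X}:R)\ge H(\mathbf{X})-H(\mathbf{X}\vert\Lambda)=I(\mathbf{X}:\Lambda)$, as required.

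The main thing to get right---rather than any genuine obstacle---is the unconditional independence $R\perp\Lambda$: it is what converts the correlation $I(\mathbf{X}:R)$ into something controlled by $\Lambda$, and it hinges essentially on the source-independence assumption (it would fail were $R$ permitted to depend on $\Lambda$). Everything else is forced: the argument invokes nothing beyond the elementary inequalities~\eqref{eq:monotonicity}--\eqref{eq:subadditivity} and the two causal constraints above, so it is precisely the sort of bound that the Fourier--Motzkin elimination of Sec.~\ref{sec:entropic} is guaranteed to return, and it collapses to the $H(X,Y\vert R)$ branch of Lemma~\ref{lem:up1} at $n=2$.
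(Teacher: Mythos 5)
Your proof is correct and follows essentially the same route as the paper's: both arguments reduce to the single causal fact $I(R:\Lambda)=0$ combined with Shannon-type inequalities --- your chain of chain-rule and nonnegativity steps, once unpacked into raw entropies, is exactly the paper's inequality $H(R,\Lambda)+H(\boldsymbol{X})-H(\boldsymbol{X},\Lambda)\le H(\boldsymbol{X}\vert R)+H(R)$. The only difference is presentational: the paper invokes $H(R,\Lambda)=H(R)+H(\Lambda)$ directly as the ``minimal causal assumption,'' whereas you derive it from the primitive DAG constraints (source independence and the local Markov condition for $R$), which is a slightly more self-contained but mathematically equivalent justification.
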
\begin{proof}
To prove Lemma~\ref{lem:generalub}, we collect all Bell scenario inputs into the composite random variable ${\boldsymbol{X}=(X_1,\dots,X_n)}$ and all Bell scenario outputs into the composite random variable ${\boldsymbol{A}=(A_1,\dots,A_n)}$. We then combine the Shannon-type inequality
\begin{align}
H(R,\Lambda)+H(\boldsymbol{X})-H(\boldsymbol{X},\Lambda) \leq H(\boldsymbol{X}\vert R)+H(R)
\end{align}
with the minimal causal assumption that $R$ is independent of $\Lambda$ such that 
\begin{align}\label{eq:mincausalassumption}
H(R,\Lambda)=H(R)+H(\Lambda)
\end{align}
to obtain~Lemma~\ref{lem:generalub} via the substitution ${H(\Lambda)+H(\boldsymbol{X})-H(\boldsymbol{X},\Lambda)}=I(\boldsymbol{X}:\Lambda)$.\end{proof}

\begin{figure}[t]
    \centering
    \includegraphics[scale = 0.30]{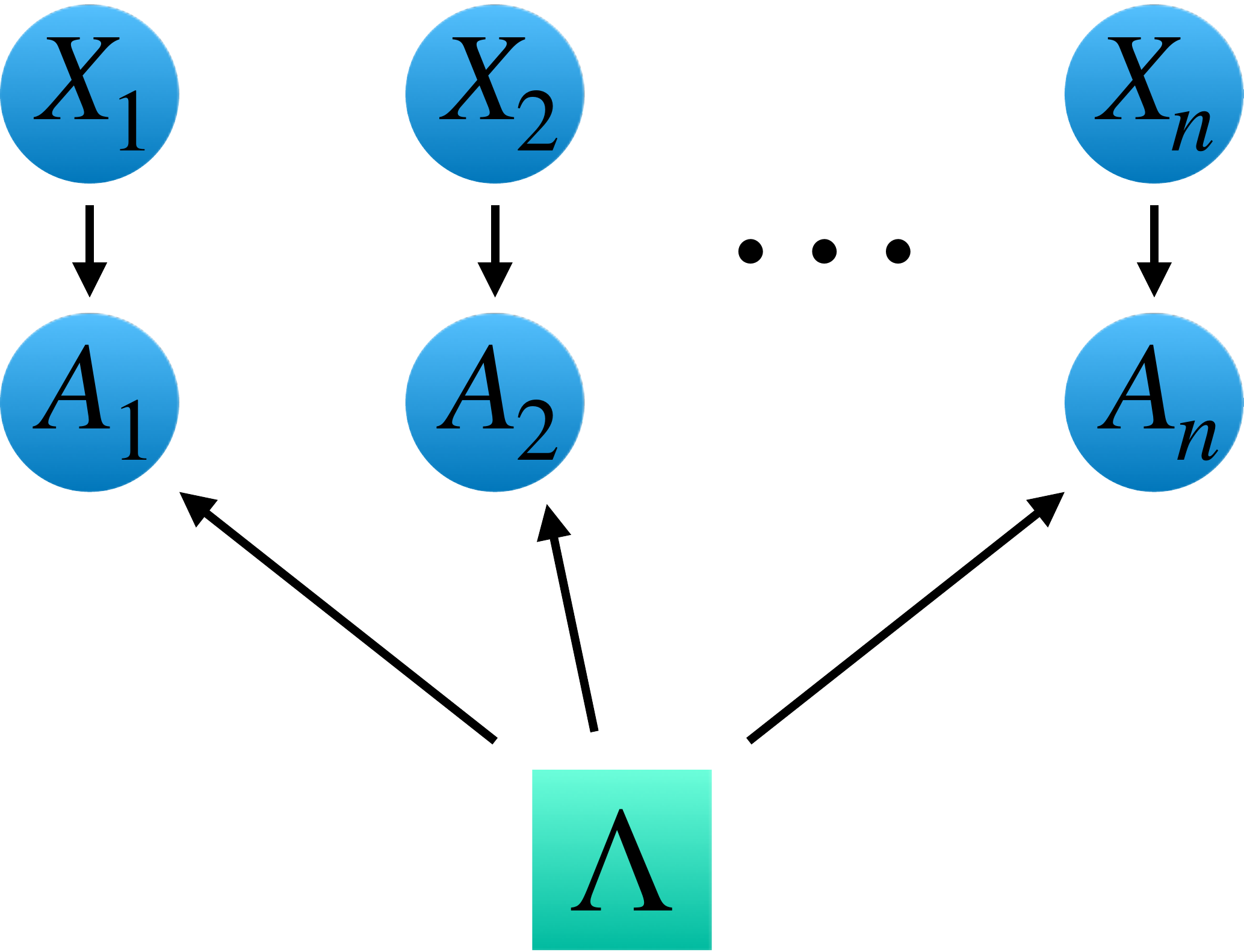}
    \caption{\textbf{Causal structure of a standard multipartite Bell scenario.
    } Each of the $n$ distant parties have a common source $\Lambda$ and inputs and outputs labelled as $X_i$ and $A_i$, respectively.}
    \label{fig:Bell5}
\end{figure}

\begin{figure}[b]
    \centering
    \includegraphics[scale = 0.29]{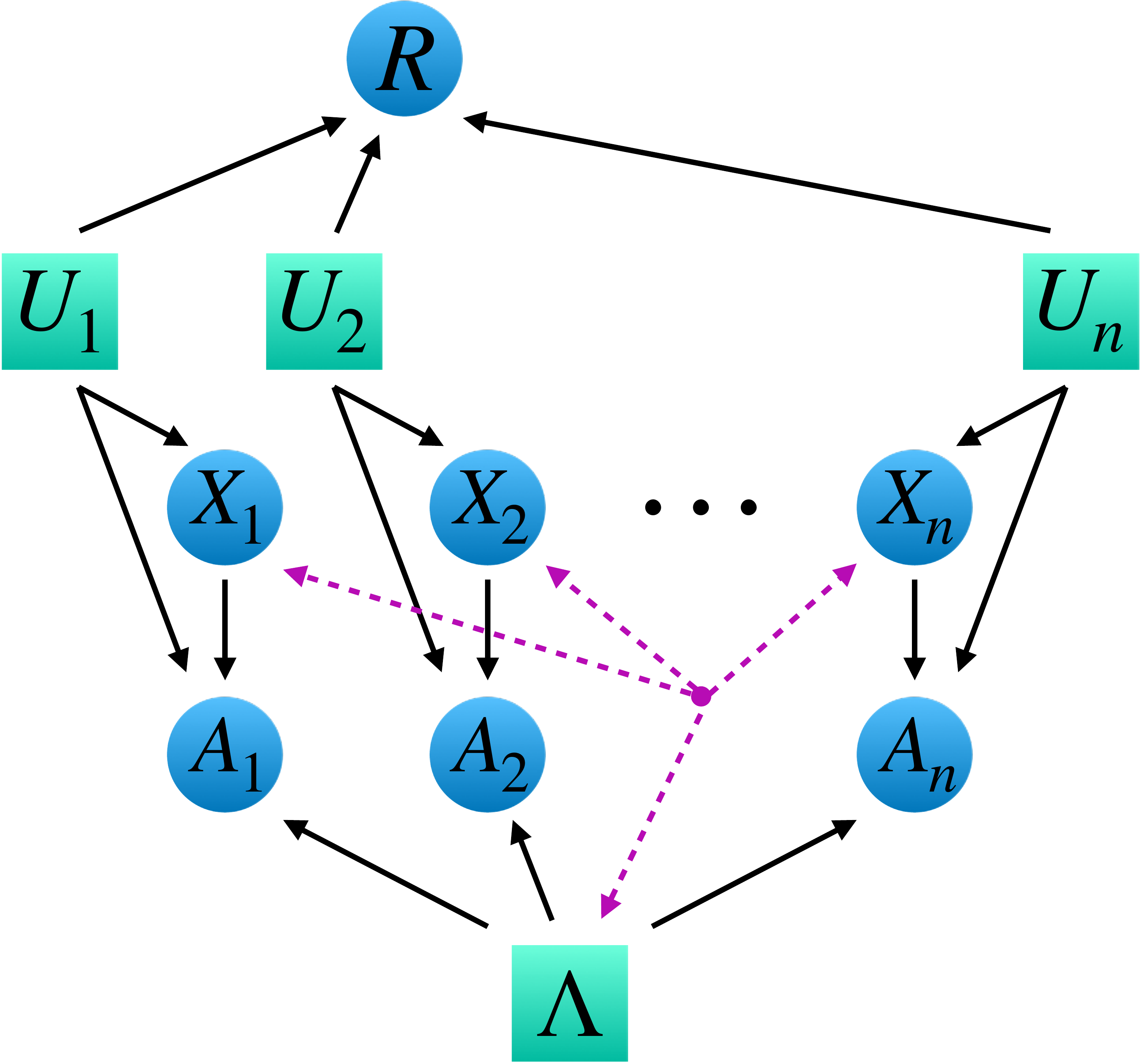}
    \caption{\textbf{Causal structure assessing measurement dependence in a multipartite Bell scenario.}
     The auxiliary variable $R$ allows one to practically upper bound the amount of measurement dependence that might be present. The double arrowed (purple) edges indicate that the correlations between the inputs variable $X_i$ and $\Lambda$ might arise from direct causal influence or via a common source.}
    \label{fig:Bell6}
\end{figure}

Suppose that we have a generic result which relates the violation of some Bell-type function $I$ to the L1-norm quantifier $\mathcal{M}$, i.e. 
\begin{align}
\label{eq:generic}
f(I) &\leq \mathcal{M},\quad\text{where}
\\\nonumber \mathcal{M}&\coloneqq \sum_{x_1 \dots x_n,\lambda}\vert p(x_1,\dots,x_n,\lambda)-p(x_1,\dots,x_n)p(\lambda) \vert ,
\end{align}
and where 
$f$
is some linear function for which $f(I)\leq 0$ for any correlations not violating the Bell inequality (with $f(I) >0$ otherwise). 
Combining~\eqref{eq:generic} with Lemma~\ref{lem:generalub} via the Pinsker inequality~\eqref{eq:Pinsker}, we then generically obtain
\begin{equation}
\label{eq:generalbi}
 f(I)\leq \sqrt{\frac{H(X_1,\dots,X_n\vert R)}{\log_2 e}}.
\end{equation}
Clearly, if $H(X_1,\dots,X_n\vert R)=0$ we recover the usual measurement independent case characterized by the Bell inequality  $f(I)\leq 0$.

Instead of using the Pinsker inequality to connect the L1-norm quantifier $\mathcal{M}$ with the information theoretical measure $I(X_1,\dots,X_n:\Lambda)$, we can try to derive lower bounds for the latter by exploring multipartite inequalities. 

This is how we achieve the following proposition, for the tripartite Bell scenario, by adapting the results in \cite{hall2010local} to the Mermin-Ardehali–Belinski–Klyshko \cite{mermin1990,ardehali1992bell,belinskiui1993} inequality
\begin{align}\nonumber
M &\coloneqq \langle A_0B_0C_1\rangle + \langle A_0B_1C_0\rangle + \langle A_1B_0C_0\rangle - \langle A_1B_1C_1\rangle 
\\&\leq 2.
\end{align}
\begin{prop}\label{prop:multiMDexample}
For observational data compatible with the classical causal structure in Fig.~\ref{fig:Bell6} specialized to the case of three parties, we find that 
\begin{align}\label{eq:multiMDexample}
1 - \frac{1}{2}h\left(\frac{4-M}{8}\right) - \frac{4+M}{16}\log_23 \leq H(X,Y,Z\vert R)
\end{align}
whenever the distribution over the inputs is uniform, i.e., when  $p(x,y,z)=\frac{1}{8}$.
\end{prop}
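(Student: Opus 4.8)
The plan is to establish~\eqref{eq:multiMDexample} by chaining two inequalities. Writing $\Phi(M)$ for the left-hand side of~\eqref{eq:multiMDexample}, I would first prove the Mermin analogue of Hall's CHSH bound~\eqref{eq:CHSHI}, namely $\Phi(M)\le I(X,Y,Z:\Lambda)$, a lower bound on the measurement dependence required to reach a given Mermin value $M$; I would then invoke Lemma~\ref{lem:generalub} specialized to $n=3$, which already supplies $I(X,Y,Z:\Lambda)\le H(X,Y,Z\vert R)$. Composing the two yields the claim, so all the real work lies in the first (lower-bound) inequality, exactly the step the text flags as ``adapting the results in~\cite{hall2010local}'' to the tripartite setting.

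To prove $\Phi(M)\le I(X,Y,Z:\Lambda)$ I would decompose any measurement-dependent model into deterministic local response strategies $\lambda$, each fixing values $A_0,A_1,B_0,B_1,C_0,C_1\in\{-1,+1\}$, so that $p(\lambda\vert x,y,z)$ carries all the measurement dependence. The key combinatorial observation is that the four product terms entering $M$ multiply to $A_0^2A_1^2B_0^2B_1^2C_0^2C_1^2=+1$, so an even number of them are negative; tracking the signed sum then forces $M_\lambda\in\{-2,+2\}$ for every strategy, and a strategy with $M_\lambda=+2$ is ``favorable'' for exactly three of the four terms and ``unfavorable'' for exactly one. This reproduces, for Mermin, precisely the $\{\pm2\}$ structure Hall exploits for CHSH. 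A second observation is that $M$ involves only the four odd-parity settings $(0,0,1),(0,1,0),(1,0,0),(1,1,1)$, while the four even-parity settings never enter $M$; this asymmetry is what will ultimately generate the factors $\tfrac12$ and $\tfrac{1}{16}$ that distinguish~\eqref{eq:multiMDexample} from~\eqref{eq:CHSHI}.

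Next I would set up the minimization of $I(X,Y,Z:\Lambda)$ over all models with uniform inputs $p(x,y,z)=\tfrac18$ realizing a fixed $M$. Since $M$ is linear in $p(\lambda\vert x,y,z)$ and $I(X,Y,Z:\Lambda)$ is convex in this conditional at fixed input marginal, the minimizer can be symmetrized under the symmetry group of $M$, which (combining party permutations with observable sign flips) acts transitively on the four odd-parity settings. This collapses the problem to four canonical strategies $\lambda_1,\dots,\lambda_4$, where $\lambda_i$ is unfavorable only for odd-parity setting $i$: one places conditional weight $q$ on the matching strategy and $(1-q)/3$ on each of the other three for odd-parity settings, and decouples on even-parity settings by setting $p(\lambda\vert x,y,z)=p(\lambda)$. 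A short Lagrange computation confirms that this decoupling is in fact optimal for the even-parity block (the stationarity condition forces $p(\lambda\vert x,y,z)=p(\lambda)$ there, with a positive-definite Hessian), and consistency then fixes the uniform strategy marginal $p(\lambda_i)=\tfrac14$. Evaluating the functional gives $M=4-8q$, i.e.\ $q=\tfrac{4-M}{8}$, and a direct entropy count over the $4$ odd-diagonal cells of weight $q/8$, the $12$ odd-off-diagonal cells of weight $(1-q)/24$, and the $16$ decoupled cells of weight $1/32$ yields $I(X,Y,Z:\Lambda)=1-\tfrac12 h(q)-\tfrac12(1-q)\log_2 3$, which is exactly $\Phi(M)$ after substitution.

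The main obstacle is the converse: showing that this symmetric construction genuinely \emph{minimizes} $I(X,Y,Z:\Lambda)$, so that $\Phi(M)$ lower-bounds the measurement dependence for \emph{every} model rather than merely the value attained by one convenient family. Making the symmetrization rigorous, and verifying that the convex program's unique symmetric stationary point is the global minimum, is the technical heart and the step that must be transcribed from Hall's CHSH analysis in~\cite{hall2010local} to the three-party, four-of-eight-setting geometry. The uniform-input hypothesis is essential precisely here: it is what gives the even-parity settings total weight $\tfrac12$, what fixes $H(X,Y,Z)=3$, and what pins the strategy marginal to $\tfrac14$ in the final entropy count, without which the clean closed form $\Phi(M)$ would not emerge.
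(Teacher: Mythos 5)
Your overall architecture is exactly the paper's: the proposition is proved by chaining a Mermin analogue of Hall's CHSH bound, namely your $\Phi(M)\leq I(X,Y,Z:\Lambda)$ (this is Eq.~\eqref{eq: I_mermin}, established in Appendix~\ref{app: Mermin}), with Lemma~\ref{lem:generalub} specialized to $n=3$. Your deterministic-strategy decomposition, the observation that every strategy has Mermin value $\pm 2$ with a $+2$ strategy unfavorable on exactly one of the four odd-parity settings, and your extremal family (conditional weight $q=(4-M)/8$ on the matching strategy, $(1-q)/3$ on the other three, decoupling on even-parity settings) coincide precisely with the paper's construction: your joint-distribution cells $q/8$, $(1-q)/24$, $1/32$ reproduce the paper's conditionals $p_{min}/2$, $(1-p_{min})/6$, $1/8$, and your entropy count correctly yields $\Phi(M)$.

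The gap is that what you actually carry out is the \emph{achievability} computation: you exhibit one family attaining $I(X,Y,Z:\Lambda)=\Phi(M)$, which shows the bound is tight but is not the statement. The proposition requires the universal inequality --- \emph{every} measurement-dependent model reproducing Mermin value $M$ with uniform inputs has $I(X,Y,Z:\Lambda)\geq\Phi(M)$ --- and that is exactly the step you defer to a symmetrization-plus-Lagrange program whose loose ends you acknowledge but do not close (in particular, disposing of strategies with deterministic value $-2$, and justifying the coarse-graining of all $64$ deterministic strategies into four representatives). The paper closes this step by a different and more direct route: after partitioning the strategies into the classes $\mathcal{L}_{\mu,\eta,\nu}$, it bounds $M\leq 8\sum_{\lambda}p(\lambda)p(O\vert\lambda)\left|1-2p(\mathrm{bad}_\lambda\vert\lambda,O)\right|$ by an optimal choice of the sign $A_0(\lambda)B_0(\lambda)C_0(\lambda)$ --- a move that handles both the $+2$ and $-2$ classes at once --- then uses $p(O)=1/2$ (this is where uniform inputs enter) to deduce $p_{min}\leq(4-M)/8$, and finally upper-bounds $H(X,Y,Z\vert\Lambda)$ by the maximum entropy compatible with one odd-parity cell pinned at $p_{min}$, which is exactly your symmetric family. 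If you prefer to finish along your own lines, the cleanest repair avoids symmetrization altogether: set $t_\lambda=p(\mathrm{bad}_\lambda\vert\lambda,O)$, bound the per-strategy odd-block entropy by $h(t_\lambda)+(1-t_\lambda)\log_2 3$, and use concavity plus monotonicity of $t\mapsto h(t)+(1-t)\log_2 3$ on $[0,1/4]$ to push the averaged constraint coming from $M$ through Jensen's inequality.
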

\begin{proof}
As we show in Appendix \ref{app: Mermin}, assuming $p(x,y,z)=\frac{1}{8}$, we get
\begin{equation}
\label{eq: I_mermin}
1 - \frac{1}{2}h\left(\frac{4-M}{8}\right) - \frac{4+M}{16}\log_23  \leq  I(X,Y,Z:\Lambda) .
\end{equation}
which is the same for signaling and nonsignaling behaviours (see Appendix~\ref{app: nosig}). 
We then obtain Prop.~\ref{prop:multiMDexample} by combining~\eqref{eq: I_mermin} with Lemma~\ref{lem:generalub}.\end{proof}

\begin{figure}[t]
    \centering
    \includegraphics[scale = 0.6]{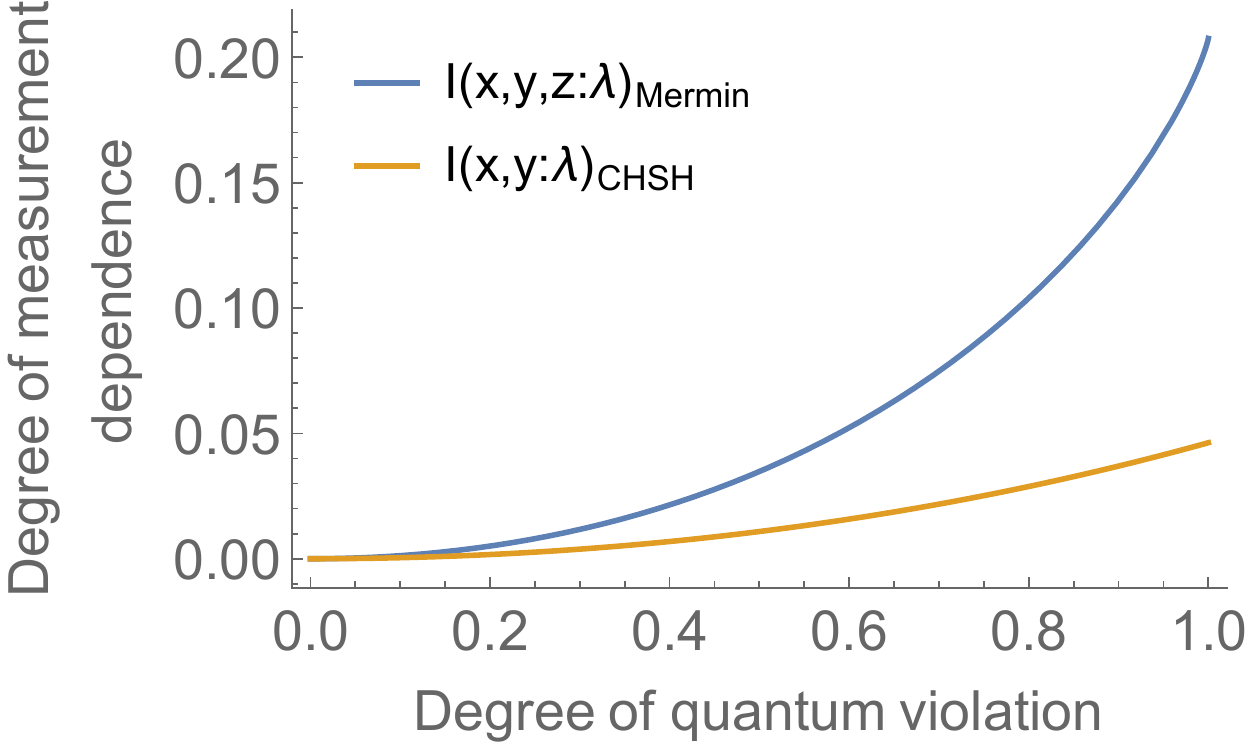}
    \caption{\textbf{Measurement dependence and Bell inequality violations} The orange curve shows the lower bound on the measurement dependence $I(X,Y:\Lambda)$ as described by eq. \eqref{eq:CHSHI} needed to explain a given degree of quantum violation for the CHSH inequality given by $\frac{\mathrm{CHSH}-2}{2\sqrt{2}}$. The blue curve shows the lower bound on the measurement dependence $I(X,Y,Z:\Lambda)$ as described by eq. \eqref{eq: I_mermin} needed to explain a given degree of quantum violation for the Mermin inequality given by $\frac{\mathrm{M}-2}{4}$. A comparison between both curves shows that a higher degree of measurement dependence is required to explain the violation of the Mermin inequality with the same degree of quantum violation as the CHSH case, a point that can of experimental relevance when trying to close to violate measurement dependent Bell inequalities.
    }
    \label{fig: Mermin_vs_CHSH} 
\end{figure}

To compare the lower bound on the amount of measurement dependence required to explain a given degree of violation in the case of the Mermin inequality (Eq.~\eqref{eq: I_mermin}) versus the case of the the CHSH inequality (Eq.~~\eqref{eq:CHSHI}) is not straightforward as the Mermin inequality can be violated quantumly up to its algebraic maximum, while the CHSH inequality cannot.  We have therefore considered the degree of measurement dependence as a function of the ratio between the violation and the maximum quantum violation.  The result is plotted in Fig.~\ref{fig: Mermin_vs_CHSH} which demonstrates that Mermin requires more measurement dependence to explain away comparable violation ratios than CHSH.

Moreover, by assuming that some inputs never happen (as it is the case in the Mermin inequality), one obtains a lower bound for $I(X,Y,Z:\Lambda)$ that is exactly the same as in the CHSH scenario (see Appendix~\ref{app: Mermin}), indicating that it is possible to explore biased distribution of inputs in the analysis of measurement dependence.

\FloatBarrier

\section{Relating Causal Networks to Relaxations of Measurement Independence}\label{sec:causnetworks}

\begin{figure}[b]
    \centering
    \includegraphics[scale = 0.25]{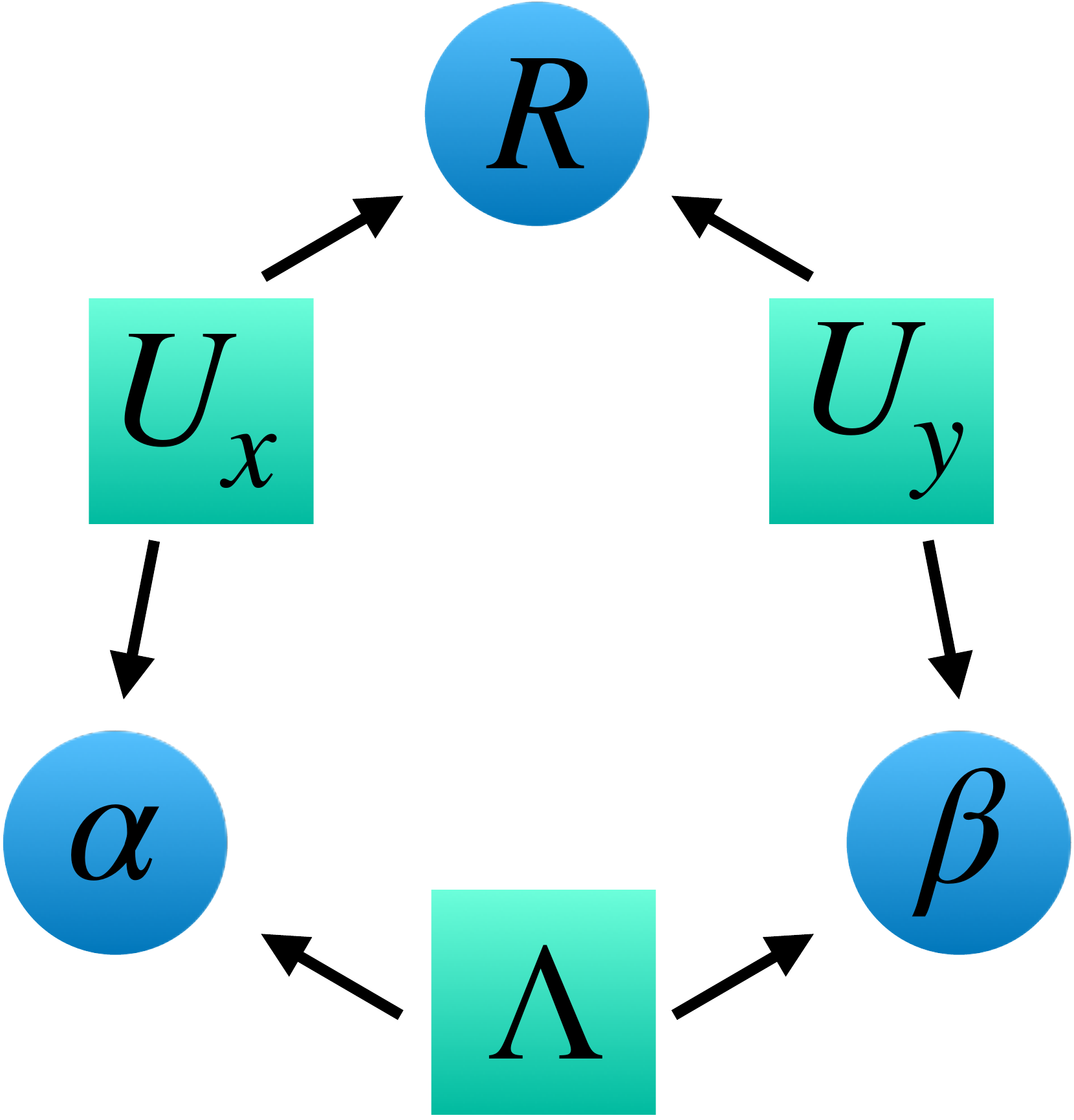}
    \caption{\textbf{Triangle network.} Pairwise independent sources generate the correlations between the three observable variables $\alpha$, $\beta$ and $R$.
    }
    \label{fig:Bell7}
\end{figure}

So far we have restricted our attention to the analysis of measurement dependence in standard Bell scenarios where the correlations between the distant parties is mediated by a single source. More recently, a number of new causal scenarios have started to be considered, typically consisting of many independent sources \cite{branciard2010characterizing,branciard2012bilocal,fritz2012beyond,fritz2016beyond,tavakoli2014nonlocal,chaves2016polynomial,poderini2020experimental,rosset2016nonlinear,andreoli2017maximal,renou2019genuine}. The paradigmatic example is the so-called triangle scenario, shown in Fig. \ref{fig:Bell7}. Its most prominent feature is that it can lead to nonlocal correlations even though it has no input variables \cite{fritz2012beyond,renou2019genuine}, an ingredient that until then was considered essential for the appearance of nonclassical behaviour.

In spite of the growing theoretical and experimental attention \cite{tavakoli2021bell}, progress in the analysis of nonclassical behaviour in such causal structures has been hampered by the fact that the set of correlations they define is nonconvex and very difficult to be characterized \cite{wolfe2019inflation,chaves2016polynomial}. In the following, we will show how our results for Bell scenarios with measurement dependence can be readily applied to derive new non-linear Bell inequalities for different classes of causal networks.

For the sake of example, we start by focusing on the triangle network. The most general correlations admissible in the triangle network have the form $p(\alpha,\beta,r) = $
\begin{align}\label{eq:triangle}
\smashoperator{\sum_{u_x,u_y,\lambda}}
p(\alpha\vert u_x,\lambda)p(\beta\vert u_y,\lambda)p(r\vert u_x,u_y)
p(u_x)p(u_y)p(\lambda)
\end{align}
which is precisely the same form as Eq.~\eqref{eq:triangleMD} under the association $\alpha\leftrightarrow (A,X)$ and $\beta\leftrightarrow (B,Y)$.

As shown in \cite{fritz2012beyond}, standard Bell scenario correlations can be mapped onto the triangle network, thus providing the first proof that such a causal structure can support nonclassical correlations. We generalize this result by noting that correlations in the \emph{nonstandard} Bell scenario with measurement dependence of Fig.~\ref{fig:Bell4} can be mapped \emph{bijectively} onto the triangle network via the common forms of Eqs.~\eqref{eq:triangleMD} and~\eqref{eq:triangle}. This two-way mapping allows us to translate results both ways between those scenarios.

Note that we have relaxed
the assumption that $X$ has a direct causal influence over $A$. Upon allowing for measurement dependence, there is no further loss of generality in treating $A$ and $X$ on an equal footing, i.e., as a single composite outcome variable $\alpha=(A,X)$ that is a function of the sources $U_x$ and $\Lambda$.\footnote{Formally, the latent projection of Fig.~\ref{fig:Bell3} is observationally equivalent to a different DAG wherein $A$ and $X$ are latent-common-cause connected and have identical causal ancestry, per Ref.~\cite{Evans2015}. This justifies merging them into a single composite variable without loss of generality.} We similarly merge $B$ and $Y$ into the single composite variable $\beta=(B,Y)$ . As detailed in the proof of Lemma~\ref{lem:generalub}, the bounds on the measurement dependence $I(X,Y:\Lambda)$ only assume such general dependence. Thus, all the results we have derived above can be directly applied to the triangle network.

\begin{lemma}
\label{lemma1}
Let $\mathcal{G}_{\text{Bell-MI+aux}}$ be the bipartite Bell scenario without the assumption of measurement independence supplemented with an auxiliary variable $R$ as per Fig.~\ref{fig:Bell4}. Let $\mathcal{G}_{\text{triangle}}$ be the causal scenario depicted in Fig.~\ref{fig:Bell7}. Then, a distribution $P(a,b,x,y,r)$ is incompatible with $\mathcal{G}_{\text{Bell-MI+aux}}$ if and only if  $P(\alpha=(a,x),\beta=(b,y),r)$ is incompatible with  $\mathcal{G}_{\text{triangle}}$.
\end{lemma}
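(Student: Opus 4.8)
The plan is to reduce the claim to the observation, already recorded in the main text, that the two scenarios share a common parametric form under the identification $\alpha=(a,x)$ and $\beta=(b,y)$. First I would recall that, by the causal Markov condition applied to the DAG of Fig.~\ref{fig:Bell4}, a distribution $P(a,b,x,y,r)$ is compatible with $\mathcal{G}_{\text{Bell-MI+aux}}$ precisely when it admits a decomposition of the form~\eqref{eq:triangleMD}; and that, by the Markov condition applied to Fig.~\ref{fig:Bell7}, a distribution $P(\alpha,\beta,r)$ is compatible with $\mathcal{G}_{\text{triangle}}$ precisely when it admits a decomposition of the form~\eqref{eq:triangle}. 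Both of these are just the definition of compatibility for the respective causal structures, so no separate argument is needed for them.

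Next I would exhibit the bijection explicitly. The relabeling $(a,b,x,y,r)\mapsto(\alpha,\beta,r)$ with $\alpha=(a,x)$ and $\beta=(b,y)$ is a bijection between the two outcome alphabets, so $P(a,b,x,y,r)$ and $P(\alpha{=}(a,x),\beta{=}(b,y),r)$ are literally the same object written two ways. Given any decomposition~\eqref{eq:triangleMD} of $P(a,b,x,y,r)$, one obtains a decomposition~\eqref{eq:triangle} of $P(\alpha,\beta,r)$ by setting $p(\alpha\vert u_x,\lambda)\coloneqq p(a,x\vert u_x,\lambda)$ and $p(\beta\vert u_y,\lambda)\coloneqq p(b,y\vert u_y,\lambda)$, while leaving $p(r\vert u_x,u_y)$, $p(u_x)$, $p(u_y)$ and $p(\lambda)$ untouched. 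Conversely, any triangle decomposition~\eqref{eq:triangle} yields a Bell-MI+aux decomposition~\eqref{eq:triangleMD} by splitting the composite response functions into their components, $p(a,x\vert u_x,\lambda)\coloneqq p(\alpha{=}(a,x)\vert u_x,\lambda)$ and likewise for $\beta$. Hence $P$ is compatible with $\mathcal{G}_{\text{Bell-MI+aux}}$ if and only if its relabeling is compatible with $\mathcal{G}_{\text{triangle}}$; taking the contrapositive delivers exactly the equivalence of incompatibility asserted in the lemma.

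The only point requiring care --- and the place where I would expect an objection --- is the reverse direction, namely that splitting $\alpha$ back into $(a,x)$ does not reintroduce any constraint absent from the triangle model, in particular that the triangle imposes no directed edge $X\to A$. This is precisely the content of the observational-equivalence (latent-projection) argument of Ref.~\cite{Evans2015} invoked in the main text: because $A$ and $X$ share identical causal ancestry in Fig.~\ref{fig:Bell4}, being jointly determined by $U_x$ and $\Lambda$, the most general admissible response is an arbitrary joint conditional $p(a,x\vert u_x,\lambda)$, which is exactly what an arbitrary single-outcome response $p(\alpha\vert u_x,\lambda)$ supplies. Thus no generality is lost in either direction and the two compatibility problems coincide set-theoretically, including on the nonconvex boundaries. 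I would emphasize that this is a strict generalization of Fritz's mapping~\cite{fritz2012beyond}, which exploited only the limiting case of perfect $R$--$X$ and $R$--$Y$ correlation; here the equivalence holds for the full sets of admissible correlations, with no assumption on the strength of those correlations.
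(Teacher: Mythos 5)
Your proof is correct and takes essentially the same route as the paper, which establishes Lemma~\ref{lemma1} precisely by observing that Eqs.~\eqref{eq:triangleMD} and~\eqref{eq:triangle} are one and the same parametric form under the association $\alpha\leftrightarrow(A,X)$, $\beta\leftrightarrow(B,Y)$, with the merging of $A$ and $X$ (and of $B$ and $Y$) into composite outcome variables justified, as in your final paragraph, by the latent-projection/observational-equivalence argument of Ref.~\cite{Evans2015}. Your explicit two-way construction of the decompositions merely spells out what the paper leaves implicit.
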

\begin{cor} Any correlations compatible with the classical triangle network should respect the non-linear inequalities \eqref{eq:ineqCHSH1}, \eqref{eq:ineqCHSH2} and \eqref{eq:ineqCGLMP}. \end{cor}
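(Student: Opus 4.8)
The plan is to obtain the corollary as an immediate transport of the bounds already established for $\mathcal{G}_{\text{Bell-MI+aux}}$ across the equivalence furnished by Lemma~\ref{lemma1}; no fresh calculation is needed. The three inequalities~\eqref{eq:ineqCHSH1}, \eqref{eq:ineqCHSH2} and~\eqref{eq:ineqCGLMP} were proved to hold for every distribution compatible with the causal structure of Fig.~\ref{fig:Bell4}, while Lemma~\ref{lemma1} identifies the compatible sets of that structure and of the triangle of Fig.~\ref{fig:Bell7} under the relabelling $\alpha=(a,x)$, $\beta=(b,y)$. Composing these two facts yields the claim.

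Concretely, I would begin with an arbitrary $p(\alpha,\beta,r)$ admissible in the classical triangle, i.e.\ of the form~\eqref{eq:triangle}. Fixing any factorization of the outcome alphabets $\alpha=(a,x)$ and $\beta=(b,y)$, I would read $p(\alpha,\beta,r)$ as a distribution $p(a,b,x,y,r)$. Since~\eqref{eq:triangle} and~\eqref{eq:triangleMD} are the \emph{same} expression under the identifications $\alpha\leftrightarrow(A,X)$, $\beta\leftrightarrow(B,Y)$, the ``if and only if'' of Lemma~\ref{lemma1}, read in the compatible direction, guarantees that this relabelled distribution is compatible with $\mathcal{G}_{\text{Bell-MI+aux}}$. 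The propositions above then apply directly: forming $p(a,b\vert x,y)$ and the marginal $p(x,y,r)$ from the decomposed triangle data, one computes $\CHSH$ (or $I_d$) and $\Theta(X,Y,R)$ and concludes that~\eqref{eq:ineqCHSH1}, \eqref{eq:ineqCHSH2} and~\eqref{eq:ineqCGLMP} must hold. Expressed back in terms of $p(\alpha,\beta,r)$, these are exactly the asserted constraints on the triangle correlations.

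The main point demanding care is bookkeeping rather than substance: one must state explicitly that the Bell functionals and the entropic bound $\Theta$ are evaluated on the triangle distribution \emph{through} a chosen coarse-graining of $\alpha$ and $\beta$ into ``input'' and ``output'' components. Because Lemma~\ref{lem:up1}, and hence the propositions, hold for arbitrary input and output cardinalities, the resulting inequality is valid for \emph{every} admissible factorization of the alphabets; when the corollary is used to certify nonclassicality one would accordingly optimize over decompositions so as to maximize the Bell violation. With all ingredients inherited from the results above, the proof reduces to invoking Lemma~\ref{lemma1} and rereading the propositions, so I would expect the only residual effort to be fixing the decomposition notation consistently throughout.
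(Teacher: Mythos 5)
Your proof is correct and follows essentially the same route as the paper: the corollary is an immediate consequence of Lemma~\ref{lemma1}, transporting the inequalities of the propositions for Fig.~\ref{fig:Bell4} to the triangle network via the identification $\alpha=(a,x)$, $\beta=(b,y)$ and the common form of Eqs.~\eqref{eq:triangleMD} and~\eqref{eq:triangle}. Your added remark that the bounds hold for every admissible factorization of the alphabets (since Lemma~\ref{lem:up1} is cardinality-independent) is a correct and useful piece of bookkeeping that the paper leaves implicit.
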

\begin{cor}  For the special case of triangle scenario correlations where $H(X,Y\vert R)=0$, it follows that if $P(a,b\vert x,y)$ violates a traditional Bell inequality, then ${P(\alpha{=}(a,x),\beta{=}(b,y),r)}$ is incompatible with  $\mathcal{G}_{\text{triangle}}$.
\end{cor}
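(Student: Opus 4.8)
The plan is to establish the contrapositive, using the equivalence of Lemma~\ref{lemma1} to reduce the triangle-compatibility question to the measurement-dependence bound already available for $\mathcal{G}_{\text{Bell-MI+aux}}$. Concretely, I would assume that $P(\alpha{=}(a,x),\beta{=}(b,y),r)$ is compatible with $\mathcal{G}_{\text{triangle}}$ and show that under this assumption $p(a,b\vert x,y)$ cannot violate any traditional Bell inequality, which is precisely the contrapositive of the claim. By Lemma~\ref{lemma1}, the assumed compatibility with $\mathcal{G}_{\text{triangle}}$ is equivalent to compatibility of $P(a,b,x,y,r)$ with $\mathcal{G}_{\text{Bell-MI+aux}}$, so every entropic constraint derived for the latter structure is at our disposal.

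Next I would invoke the upper bound on the measurement dependence. Since the data is compatible with $\mathcal{G}_{\text{Bell-MI+aux}}$, Lemma~\ref{lem:up1} (equivalently its bipartite specialization Lemma~\ref{lem:generalub}) yields $I(X,Y:\Lambda)\leq \Theta(X,Y,R)\leq H(X,Y\vert R)$, where the last inequality holds because $H(X,Y\vert R)$ is one of the three arguments of the minimum defining $\Theta(X,Y,R)$. Imposing the hypothesis $H(X,Y\vert R)=0$, together with the nonnegativity of mutual information, then forces $I(X,Y:\Lambda)=0$.

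The key consequence is that $I(X,Y:\Lambda)=0$ is exactly the statement $(X,Y)\perp\Lambda$, i.e. $p(\lambda\vert x,y)=p(\lambda)$ for every input pair of nonzero probability. Substituting this identity into the measurement-dependent decomposition~\eqref{eq:lhvMD} collapses it to $p(a,b\vert x,y)=\sum_{\lambda}p(a\vert x,\lambda)p(b\vert y,\lambda)p(\lambda)$, which is precisely the standard local hidden variable form~\eqref{eq:LHV}. Consequently $p(a,b\vert x,y)$ must obey every traditional Bell inequality, establishing the contrapositive and hence the corollary.

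Because each step is a direct application of a previously established result, I do not anticipate a serious technical obstacle; the one point meriting care is the passage from the vanishing mutual information $I(X,Y:\Lambda)=0$ to the pointwise identity $p(\lambda\vert x,y)=p(\lambda)$ throughout the support, so that~\eqref{eq:lhvMD} genuinely reduces to~\eqref{eq:LHV} rather than only on a set of full measure. This corollary is best understood as the degenerate limit of the non-linear inequalities of the preceding corollary: when $H(X,Y\vert R)=0$, the right-hand sides of~\eqref{eq:ineqCHSH1}, \eqref{eq:ineqCHSH2} and~\eqref{eq:ineqCGLMP} vanish and one recovers exactly the traditional Bell inequalities with no tolerance for measurement dependence.
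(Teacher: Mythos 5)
Your proof is correct and takes essentially the same route the paper intends for this corollary: use the equivalence of Lemma~\ref{lemma1} to pass to $\mathcal{G}_{\text{Bell-MI+aux}}$, note that $I(X,Y:\Lambda)\leq\Theta(X,Y,R)\leq H(X,Y\vert R)=0$ forces $p(\lambda\vert x,y)=p(\lambda)$, and conclude that the decomposition~\eqref{eq:lhvMD} collapses to the standard LHV form~\eqref{eq:LHV}, so any traditional Bell-inequality violation certifies incompatibility with the triangle. One trivial slip worth fixing in your write-up: Lemma~\ref{lem:generalub} is the multipartite \emph{generalization} whose $n=2$ case reproduces the relevant bound, not a ``bipartite specialization'' of Lemma~\ref{lem:up1}; this does not affect the argument.
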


As a consequence of Lemma~\ref{lemma1} our results generalize the result of \citet{fritz2012beyond} in a number of ways, since Fritz's original argument given there was only applicable when ${H(X,Y\vert R)=0}$ and furthermore was unable to explicitly derive a testable Bell inequality. It is worth pointing out that different Bell inequalities able to witness quantum nonlocality in the triangle have already been derived \cite{PhysRevA.98.022113,renou2019genuine,vsupic2020quantum}. In particular, in \cite{PhysRevA.98.022113} a specific inequality has been obtained for testing the mapping between a standard Bell scenario and the triangle network as proposed in \cite{fritz2012beyond}. Similarly, our inequalities \eqref{eq:ineqCHSH1} and \eqref{eq:ineqCHSH2} will also witness the nonclassical behaviour whenever the CHSH inequality is violated (if $H(X,Y\vert R)=0$).

Moreover, our construction can be extended to prove the possibility of nonclassical behaviour in causal networks of growing size and where the variables can assume different cardinalities. As a generalization of the triangle causal structure we will consider any causal network inspired by the so-called bipartite graphs \cite{aaberg2020semidefinite,PhysRevLett.125.110505} and composed of two layers: a first layer corresponding to $o+s$ sources labelled as $\left\{ \Lambda_1,\dots,\Lambda_o, U_1,\dots,U_s \right\}$ acting as common causes to the $n+m$ observable variables $\alpha_1,\dots,\alpha_n,R_1,\dots,R_m$. In the following, we will restrict our attention to two particular classes of these general networks and in particular derive nonlinear Bell inequalities which can be violated by quantum correlations in those networks. In the first class we fix $o=1$ and let $s=n$, whereas in the second class we fix $s=2$ and let $o=n-2$.

\subsection{``2's \& $n$" networks} 
As a first case, we consider networks of $n+1$ parties in which all parties except for $R$ are connected by an $n$-way source, whereas $R$ is connected to every other individual party by a $2$-way source. An example of this if given in Fig.~\ref{fig:Bell8replacement}. We call such a scenario a ``2's \& $n$" network. As a second case we will consider the cyclic network of degree $n$ where $n-2$ sources $\Lambda_i$ connect pairs of $\alpha_i$'s (with $i=1,\dots,n$), source $U_1$ connects $\alpha_1$ to $R$ and $U_2$ connects $\alpha_n$ to $R$ (see Fig. \ref{fig:Bell9}). 
\begin{figure}[b]
    \centering
    \includegraphics[scale = 0.3]{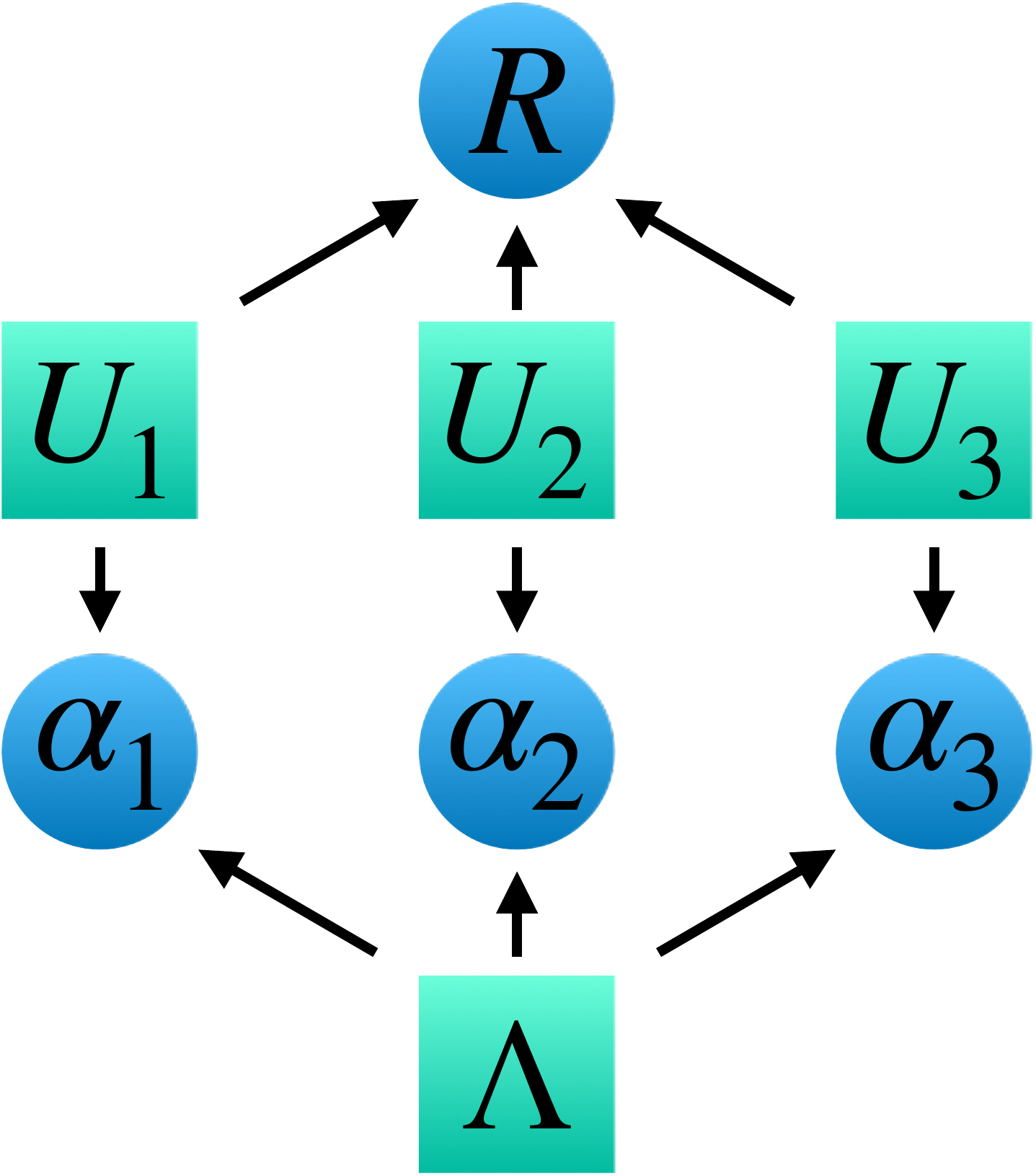}
    \caption{\textbf{``2's \& $\boldsymbol{n}$" network} for $n=3$. One observable source connects three of the four observable variables; the other sources only connect pairs, however. More generally, a ``2's \& $n$" consists of observable variables ${\alpha_1,...,\alpha_n}$ in addition to $R$. Notice that the triangle network (see Fig. \ref{fig:Bell7}) is a ``2's \& $n$" network with $n=2$. 
    }
    \label{fig:Bell8replacement}
\end{figure}
A ``2's \& $n$" network admits distributions of the form $p(\alpha_1,...,\alpha_n,r) =$
\begin{align}\label{eq:2sn}
 \smashoperator{\sum_{u_1,...,u_n,\lambda}}
p(r\vert u_1,...,u_n)p(\lambda)\prod_{i=1}^n p(\alpha_i\vert u_i,\lambda)p(u_i).
\end{align}
The mapping between ``2's \& $n$" networks and multipartite Bell scenarios with measurement dependence and an auxialliary variable $R$ is ready evident by comparing the forms of Eqs.~\eqref{eq:multiBellform} and \eqref{eq:2sn}. The are equivalent under the relabelling $\alpha_i\leftrightarrow (A_i,X_i)$.

\begin{lemma}
\label{lem:MultiBell}
Let $\mathcal{G}_{\text{MultiBell-MI+aux}}$ be the multipartite Bell scenario without the assumption of measurement independence supplemented with an auxiliary variable $R$ as per Fig.~\ref{fig:Bell6}. Let $\mathcal{G}_{\text{``}2\text{'s \& }n\text{"}}$ be a causal scenario of the family depicted in Fig.~\ref{fig:Bell8replacement}. Then, a distribution ${P(a_1,...,a_n,x_1,...x_1,r)}$ is incompatible with $\mathcal{G}_{\text{MultiBell-MI+aux}}$ if and only if  ${P(\alpha_1=(a_1,x_1),...,\alpha_n=(a_n,x_n),r)}$ is incompatible with $\mathcal{G}_{\text{``}2\text{'s \& }n\text{"}}$.
\end{lemma}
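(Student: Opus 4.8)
The plan is to prove the lemma at the level of the latent-variable decompositions, exploiting the fact that whether a distribution is compatible with a classical causal structure is decided entirely by whether it admits the factorization dictated by that structure's causal Markov condition. Both decompositions are already in hand: compatibility with $\mathcal{G}_{\text{MultiBell-MI+aux}}$ means $P(a_1,\dots,a_n,x_1,\dots,x_n,r)$ can be written in the form~\eqref{eq:multiBellform}, while compatibility with the ``2's \& $n$'' network means $P(\alpha_1,\dots,\alpha_n,r)$ can be written in the form~\eqref{eq:2sn}. First I would make precise that the composite relabelling $\alpha_i=(a_i,x_i)$ is a bijection between distributions over $(a_1,\dots,a_n,x_1,\dots,x_n,r)$ and distributions over $(\alpha_1,\dots,\alpha_n,r)$: each $\alpha_i$ ranges over the Cartesian product of the value sets of $A_i$ and $X_i$, so no information is gained or lost and all marginals are preserved.

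Next I would carry out the forward direction by direct substitution. Given a model of the form~\eqref{eq:multiBellform}, I would keep the sources $U_i$ and $\Lambda$ together with their independent priors $p(u_i)$, $p(\lambda)$, keep the response $p(r\mid u_1,\dots,u_n)$ unchanged, and define the single-party responses
\begin{equation}
p(\alpha_i\mid u_i,\lambda)\coloneqq p(a_i,x_i\mid u_i,\lambda).
\end{equation}
Substituting these into~\eqref{eq:2sn} reproduces~\eqref{eq:multiBellform} term by term, so the relabelled distribution is compatible with the ``2's \& $n$'' network. The converse is the same argument read backwards: a model of the form~\eqref{eq:2sn} is turned into one of the form~\eqref{eq:multiBellform} by splitting each composite outcome $\alpha_i$ into its two components and reading $p(a_i,x_i\mid u_i,\lambda)$ off $p(\alpha_i\mid u_i,\lambda)$. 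Because the sources, their independence relations, and the response-function weights coincide term by term in the two sums, the relabelling carries the compatible set of each structure bijectively onto that of the other; taking the contrapositive yields the ``incompatible if and only if incompatible'' statement of Lemma~\ref{lem:MultiBell}.

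The one step that deserves genuine care---and which I expect to be the main obstacle---is justifying that collapsing $A_i$ and $X_i$ into the single composite variable $\alpha_i$ entails no loss of generality, since this is exactly what makes the two functional forms coincide. In the standard multipartite Bell DAG of Fig.~\ref{fig:Bell5} there is a direct arrow $X_i\to A_i$, so $A_i$ and $X_i$ are not a priori interchangeable. Once measurement independence is dropped, however, $A_i$ and $X_i$ acquire identical latent ancestry (each is a child of $U_i$ and $\Lambda$ in Fig.~\ref{fig:Bell6}), and any residual direct influence is already absorbed into the joint response via $p(a_i,x_i\mid u_i,\lambda)=p(a_i\mid x_i,u_i,\lambda)\,p(x_i\mid u_i,\lambda)$. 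This is the $n$-party analogue of the observational-equivalence argument invoked for the bipartite case in Lemma~\ref{lemma1} and in the footnote citing~\cite{Evans2015}.

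I would also explicitly check the second structural ingredient, namely that $R$ is a function of the sources $\{U_i\}$ alone and receives no direct arrow from any input or outcome---which is precisely how Fig.~\ref{fig:Bell6} is drawn---so that the common factor $p(r\mid u_1,\dots,u_n)$ appears unchanged in both~\eqref{eq:multiBellform} and~\eqref{eq:2sn}. With these two structural points established, the remainder is the routine relabelling described above, and the lemma follows as the direct generalisation of Lemma~\ref{lemma1} from the bipartite triangle to the ``2's \& $n$'' family.
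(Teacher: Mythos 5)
Your proof is correct and takes essentially the same approach as the paper, which establishes Lemma~\ref{lem:MultiBell} simply by observing that the decompositions~\eqref{eq:multiBellform} and~\eqref{eq:2sn} coincide term by term under the relabelling $\alpha_i\leftrightarrow(A_i,X_i)$, with the latent sources, their independence, and the response $p(r\vert u_1,\dots,u_n)$ left untouched. Your extra care in justifying the merger of $A_i$ and $X_i$ into a single composite variable (absorbing the arrow $X_i\to A_i$ into the joint response via $p(a_i,x_i\vert u_i,\lambda)=p(a_i\vert x_i,u_i,\lambda)\,p(x_i\vert u_i,\lambda)$) is exactly the observational-equivalence point the paper handles by citing~\cite{Evans2015} in its treatment of the bipartite case.
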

\begin{cor} Any correlations compatible with the ``2's \& $n$" network for $n=3$ depicted in Fig.~\ref{fig:Bell8replacement} should respect the non-linear inequality~\eqref{eq:multiMDexample}. \end{cor}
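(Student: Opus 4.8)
The plan is to derive the corollary directly from the scenario equivalence of Lemma~\ref{lem:MultiBell} specialized to $n=3$, together with the already-established Proposition~\ref{prop:multiMDexample}. The key structural fact is that the ``2's \& $n$'' network for $n=3$ (Fig.~\ref{fig:Bell8replacement}) and the tripartite Bell scenario with measurement dependence and auxiliary variable $R$ (Fig.~\ref{fig:Bell6}) generate the same family of distributions once each observable $\alpha_i$ is identified with the pair $(a_i,x_i)$; indeed, the admissible forms~\eqref{eq:2sn} and~\eqref{eq:multiBellform} coincide under the relabelling $\alpha_i\leftrightarrow(a_i,x_i)$.

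First I would take an arbitrary distribution $p(\alpha_1,\alpha_2,\alpha_3,r)$ compatible with the ``2's \& $n$'' network for $n=3$, hence of the form~\eqref{eq:2sn}. Decomposing each outcome as $\alpha_i=(a_i,x_i)$ yields a distribution $p(a_1,a_2,a_3,x_1,x_2,x_3,r)$, and Lemma~\ref{lem:MultiBell} guarantees that this relabelled distribution is compatible with $\mathcal{G}_{\text{MultiBell-MI+aux}}$ at $n=3$---equivalently, that it admits a decomposition of the form~\eqref{eq:multiBellform}. This places the distribution squarely in the class to which Proposition~\ref{prop:multiMDexample} applies, with $X,Y,Z$ identified with the input components $x_1,x_2,x_3$, with $R$ the shared auxiliary outcome, and with the Mermin value $M$ computed from the induced conditional $p(a_1,a_2,a_3\mid x_1,x_2,x_3)$. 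Invoking Proposition~\ref{prop:multiMDexample} then yields inequality~\eqref{eq:multiMDexample}.

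The hard part will be handling the uniform-input hypothesis $p(x,y,z)=\tfrac18$ inherited from Proposition~\ref{prop:multiMDexample}. Because the ``2's \& $n$'' network carries no inputs of its own, the variables $x_1,x_2,x_3$ appear only as components of the observables $\alpha_i$, so this hypothesis must be tracked through the relabelling: the corollary should be read as the constraint holding for precisely those ``2's \& $n$'' distributions whose induced $x_i$-marginals are uniform. Verifying that the marginal condition survives the identification $\alpha_i\leftrightarrow(a_i,x_i)$ is immediate, since it is a statement purely about the marginal of each $\alpha_i$ on its $x_i$-component. Apart from this bookkeeping, no further computation is required: the equivalence of the two causal structures does all the structural work, and Proposition~\ref{prop:multiMDexample} supplies the bound.
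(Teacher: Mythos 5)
Your proof is correct and takes essentially the same route as the paper: Lemma~\ref{lem:MultiBell} identifies ``2's \& $n$'' correlations for $n=3$ (via $\alpha_i\leftrightarrow(a_i,x_i)$) with those of the tripartite Bell scenario with measurement dependence plus the auxiliary variable $R$, and Proposition~\ref{prop:multiMDexample} then delivers inequality~\eqref{eq:multiMDexample}. Your explicit tracking of the uniform-input hypothesis $p(x,y,z)=\tfrac{1}{8}$ through the relabelling—reading the corollary as applying to those network distributions whose induced $x_i$-marginals are uniform—is the correct interpretation of a caveat the paper leaves implicit.
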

\begin{cor}  For the special case of ``2's \& $n$" scenario correlations where $H(X_1,...,X_n,\vert R)=0$, it follows that if ${P(a_1,...,a_n\vert x_1,...,x_n)}$ violates a traditional multipartite Bell inequality, then ${P(\alpha_1=(a_1,x_1),...,\alpha_n=(a_n,x_n),r)}$ is incompatible with $\mathcal{G}_{\text{``}2\text{'s \& }n\text{"}}$.
\end{cor}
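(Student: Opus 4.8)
The plan is to argue by contradiction, chaining together the bijective mapping of Lemma~\ref{lem:MultiBell} with the entropic bound of Lemma~\ref{lem:generalub}. Concretely, I would suppose that the relabelled distribution $P(\alpha_1{=}(a_1,x_1),\dots,\alpha_n{=}(a_n,x_n),r)$ \emph{is} compatible with $\mathcal{G}_{\text{``}2\text{'s \& }n\text{"}}$, and then show that this forces $P(a_1,\dots,a_n\vert x_1,\dots,x_n)$ to admit a standard multipartite local-hidden-variable decomposition, which contradicts the assumed violation of a traditional Bell inequality.

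First I would invoke Lemma~\ref{lem:MultiBell}: compatibility of the relabelled distribution with $\mathcal{G}_{\text{``}2\text{'s \& }n\text{"}}$ is equivalent to compatibility of $P(a_1,\dots,a_n,x_1,\dots,x_n,r)$ with $\mathcal{G}_{\text{MultiBell-MI+aux}}$, i.e.\ to its admitting the decomposition~\eqref{eq:multiBellform}. Next I would feed this into Lemma~\ref{lem:generalub}, which yields $I(X_1,\dots,X_n:\Lambda)\leq H(X_1,\dots,X_n\vert R)$. By hypothesis $H(X_1,\dots,X_n\vert R)=0$, and since mutual information is nonnegative this forces $I(X_1,\dots,X_n:\Lambda)=0$, so that the composite input $(X_1,\dots,X_n)$ is statistically independent of $\Lambda$ and $p(\lambda\vert x_1,\dots,x_n)=p(\lambda)$ on the support of the input distribution.

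The crux is then converting this restored measurement independence into a constraint on the \emph{conditional} $P(a_1,\dots,a_n\vert x_1,\dots,x_n)$. Marginalising $r$ out of~\eqref{eq:multiBellform}, using $\sum_r p(r\vert u_1,\dots,u_n)=1$ together with the mutual independence of the $U_i$, one reaches the multipartite analogue of~\eqref{eq:lhvMD},
\begin{align*}
p(a_1,\dots,a_n\vert x_1,\dots,x_n)=\sum_{\lambda}\Bigg(\prod_{i=1}^n p(a_i\vert x_i,\lambda)\Bigg)p(\lambda\vert x_1,\dots,x_n).
\end{align*}
Substituting $p(\lambda\vert x_1,\dots,x_n)=p(\lambda)$ collapses this to $\sum_\lambda\big(\prod_i p(a_i\vert x_i,\lambda)\big)p(\lambda)$, precisely the standard multipartite LHV form generalising~\eqref{eq:LHV}. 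Every such distribution respects all traditional multipartite Bell inequalities, contradicting the hypothesis that $P(a_1,\dots,a_n\vert x_1,\dots,x_n)$ violates one; hence the relabelled distribution cannot be compatible with $\mathcal{G}_{\text{``}2\text{'s \& }n\text{"}}$, which establishes the corollary.

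I expect the main obstacle to lie in this last conversion step: one must verify carefully that the conditional-independence structure of~\eqref{eq:multiBellform} genuinely produces the product form $\prod_i p(a_i\vert x_i,\lambda)$ (so that the resulting object is a bona fide LHV model), and that $I(X_1,\dots,X_n:\Lambda)=0$ legitimately replaces $p(\lambda\vert x_1,\dots,x_n)$ by $p(\lambda)$ pointwise where it is needed, rather than only inside a marginal average. By contrast, the remaining ingredients---nonnegativity of mutual information and the fact that the validity of Bell inequalities for LHV models is independent of the input distribution---are immediate and require no further work.
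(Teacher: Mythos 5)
Your proof is correct and takes essentially the same route the paper intends for this corollary (which it leaves implicit): Lemma~\ref{lem:MultiBell} transfers compatibility to $\mathcal{G}_{\text{MultiBell-MI+aux}}$, Lemma~\ref{lem:generalub} turns $H(X_1,\dots,X_n\vert R)=0$ into $I(X_1,\dots,X_n:\Lambda)=0$, and the restored measurement independence collapses the decomposition~\eqref{eq:multiBellform} to a standard multipartite LHV model, contradicting the assumed Bell violation. The conversion step you flag as the main obstacle goes through exactly as in the bipartite derivation of Eq.~\eqref{eq:lhvMD}, since conditional on $\Lambda$ the pairs $(A_i,X_i)$ factorize across $i$ and, for discrete variables, vanishing mutual information gives $p(\lambda\vert x_1,\dots,x_n)=p(\lambda)$ pointwise on the support of the inputs.
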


Firstly, note that the generic multipartite bounds implied by combining inequality~\eqref{eq:generic} with Lemma~\ref{lem:generalub} remain valid for ``2's \& $n$" networks, thus providing a general non-linear Bell inequality of the form~\eqref{eq:generalbi}  for it. To see that, notice that in the proof of Lemma~\ref{lem:generalub}, the crucial step~\eqref{eq:mincausalassumption} which invokes the causal structure under analysis only makes use of the causal assumption that $R$ is independent of $\Lambda$, a condition fulfilled by ``2's \& $n$" networks. 
As a consequence, the generic non-linear Bell inequality~\eqref{eq:generic} will hold, where $I$ is a function of the conditional probability distribution ${p(a_1,\dots,a_n \vert x_1,\dots,x_n)}$.

\subsection{Cyclic networks}
Next we will consider the cyclic network. By mapping it onto an $n$-locality scenario \cite{branciard2010characterizing,branciard2012bilocal,mukherjee2015correlations}, we will be able to solve an open problem in the characterization of quantum correlations in networks. More specifically, although  it has been proven in \cite{fritz2012beyond} that the cyclic scenario of Fig. \ref{fig:Bell9} gives rise to nonclassical correlations, the proof there relies of nonclassicality of a post-quantum nature~\cite{popescu1994quantum}. It was left open whether a nonclassical behaviour that is quantumly realizable would be possible. 
The basic idea will be to map the bilocality scenario shown in Fig. \ref{fig:Bell10} onto this cyclic scenario with $n=4$. Notice, that in this particular bilocality scenario, there are no external inputs acting as causal parents of the central node $A_2$.
Thus we achieve a mapping by setting $\alpha_1{=}(A_1,X_1)$, $\alpha_2{=}A_2$ and $\alpha_3{=}(A_3,A_3)$. 

\begin{figure}[t]
    \centering
    \includegraphics[scale = 0.18]{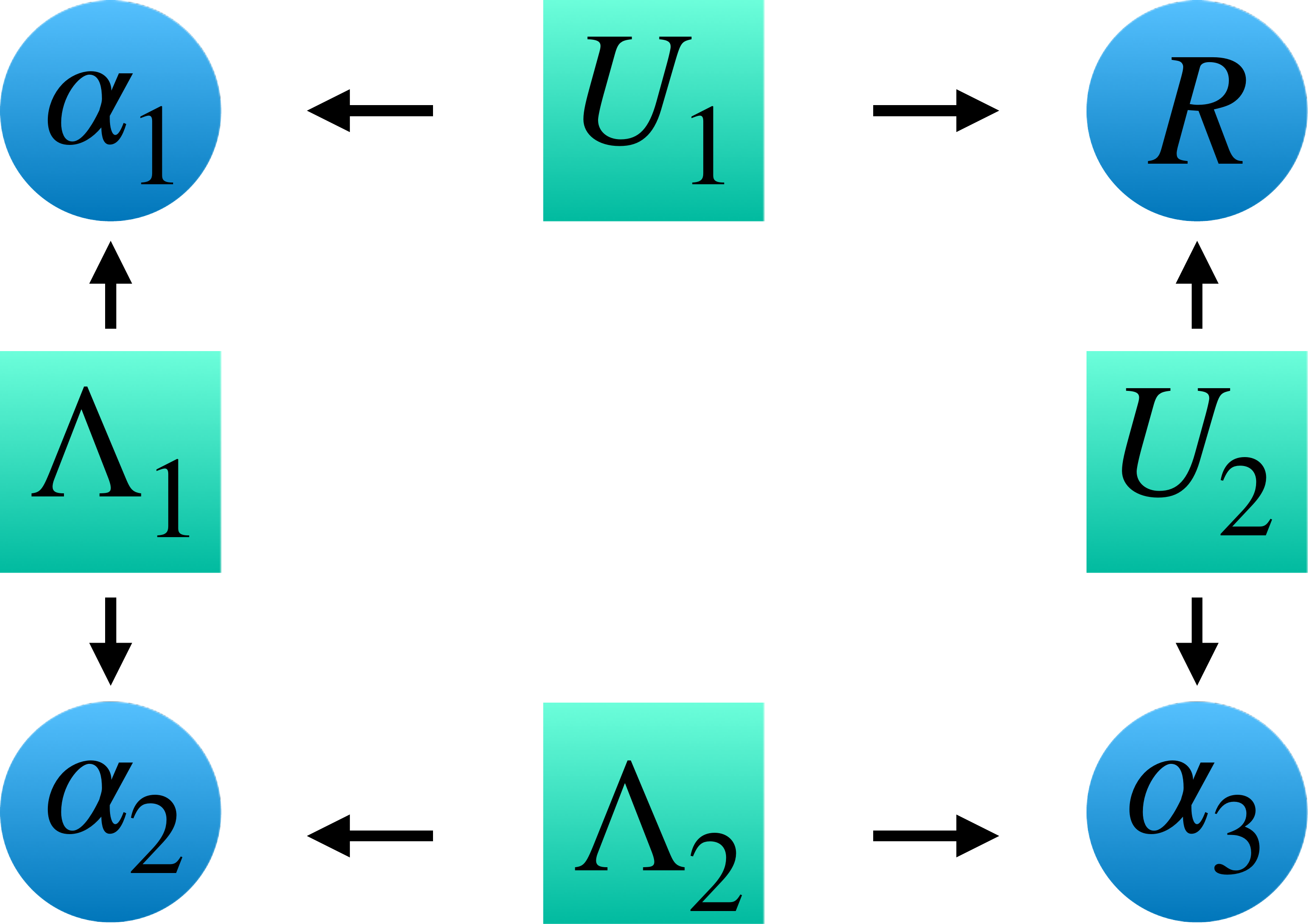}
    \caption{\textbf{Cyclic causal structure with $n=4$.} Each observable variable its connected to its neighbour via a common source. Notice that the triangle network (see Fig. \ref{fig:Bell7}) is a particular case of cyclic network with $n=3$. We index the observable variables in and $n$-cyclic scenario as $\alpha_1,...,\alpha_{n-1}$ along with $R$.}
    \label{fig:Bell9}
\end{figure}

There are many nonlinear Bell inequalities $I \leq L$ (where $I$ is a polynomial function of $p(a_1,a_2,a_3 \vert x_1,x_3)$) derived to characterize the causal structure of the bilocality scenario and that can be violated by quantum correlations. As an example, we have the bilocality inequality \cite{branciard2010characterizing,branciard2012bilocal}:
\begin{align}
&\sqrt{I}+\sqrt{J} \leq 2,\quad\text{where}
\\\nonumber I&\coloneqq \sum_{x_1,x_3} \langle a^{x_1}_1a_2a_3^{x_3}\rangle,\quad\text{and}
\\\nonumber J&\coloneqq\sum_{x_1,x_3} (-1)^{x_1+x_3} \langle a^{x_1}_1a_2a_3^{x_3} \rangle,\quad\text{and}
\\\nonumber \langle a^{x_1}_1a_2a_3^{x_3} \rangle&\coloneqq\sum_{a_1,a_2,a_3} (-1)^{a_1+a_2+a_3}p(a_1,a_2,a_3 \vert x_1,x_3)
\end{align}
and where all input and output variables assume the values $0$ or $1$.

\begin{lemma}
Let $\mathcal{G}_{\text{$n$-locality-MI+aux}}$ be the $n$-locality scenario without the assumption of measurement independence supplemented with an auxiliary variable $R$ as per Fig.~\ref{fig:Bell10} and generalizations thereof. Let $\mathcal{G}_{(n+2)-\text{cyclic}}$ be the causal scenario depicted in Fig.~\ref{fig:Bell9} and generalizations therefore. Then, a distribution ${P(a_1,a_2,...a_n,a_{n+1},x_1,x_{n+1},r)}$ is incompatible with  $\mathcal{G}_{n\text{-locality-MI+aux}}$ if and only if ${P(\alpha_1{=}(a_1,x_1), \alpha_2{=}a_2, ...,\alpha_{n+1}{=}(a_{n+1},x_{n+1}),r)}$ is incompatible with $\mathcal{G}_{(n+2)-\text{cyclic}}$.
\end{lemma}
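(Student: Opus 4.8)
The plan is to follow exactly the template already used for Lemmas~\ref{lemma1} and~\ref{lem:MultiBell}: write out the most general parametric form of the observable distributions compatible with each of the two causal structures, and then observe that the two families coincide under the stated relabeling. Once the two sets of distributions are shown to be literally identical after the identification $\alpha_1{=}(a_1,x_1)$, $\alpha_i{=}a_i$ for $2\le i\le n$, and $\alpha_{n+1}{=}(a_{n+1},x_{n+1})$, the biconditional is immediate: a distribution belongs to one family precisely when its relabeled counterpart belongs to the other, hence it is incompatible with one DAG if and only if the counterpart is incompatible with the other.

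First I would apply the Markov condition to $\mathcal{G}_{n\text{-locality-MI+aux}}$. The chain of $n+1$ parties is mediated by the sources $\Lambda_1,\dots,\Lambda_n$, with only the two endpoints carrying inputs; these inputs become confounded with the chain through the private randomness sources $U_1$ and $U_2$ of parties $1$ and $n+1$, while the auxiliary variable $R$ depends only on $U_1,U_2$. The compatible distributions therefore read
\begin{align*}
&p(a_1,\dots,a_{n+1},x_1,x_{n+1},r)=\sum_{\{\lambda_i\},u_1,u_2} p(a_1,x_1\vert u_1,\lambda_1)\\
&\quad\times\Big[\textstyle\prod_{i=2}^{n}p(a_i\vert \lambda_{i-1},\lambda_i)\Big]\,p(a_{n+1},x_{n+1}\vert \lambda_n,u_2)\\
&\quad\times p(r\vert u_1,u_2)\,p(u_1)p(u_2)\textstyle\prod_{i=1}^{n}p(\lambda_i).
\end{align*}

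Then I would do the same for the $(n+2)$-cyclic network of Fig.~\ref{fig:Bell9}. Labelling its $n+2$ cyclically arranged sources so that $U_1$ joins $R$ to $\alpha_1$, the sources $\Lambda_1,\dots,\Lambda_n$ join consecutive $\alpha_i$'s, and $U_2$ closes the cycle by joining $\alpha_{n+1}$ back to $R$, the Markov condition yields
\begin{align*}
&p(\alpha_1,\dots,\alpha_{n+1},r)=\sum_{\{\lambda_i\},u_1,u_2} p(\alpha_1\vert u_1,\lambda_1)\\
&\quad\times\Big[\textstyle\prod_{i=2}^{n}p(\alpha_i\vert \lambda_{i-1},\lambda_i)\Big]\,p(\alpha_{n+1}\vert \lambda_n,u_2)\\
&\quad\times p(r\vert u_1,u_2)\,p(u_1)p(u_2)\textstyle\prod_{i=1}^{n}p(\lambda_i).
\end{align*}
Substituting $\alpha_1\mapsto(a_1,x_1)$, $\alpha_i\mapsto a_i$ for the interior nodes, and $\alpha_{n+1}\mapsto(a_{n+1},x_{n+1})$ carries the second expression onto the first factor by factor; since this substitution is a bijection between the joint value spaces, the two families of distributions are equal and the claimed equivalence of incompatibility follows.

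The step I expect to demand the most care is the asymmetric bookkeeping between the two endpoint nodes and the interior nodes. The endpoints absorb both an output and an input into one composite variable, which is legitimate by the same latent-projection argument (per Ref.~\cite{Evans2015}) used to merge $A$ and $X$ into $\alpha$ in the bipartite case; the interior parties carry only an output and so map trivially. Beyond this, I would check that the cyclic adjacency of the $(n+2)$-cyclic network reproduces exactly the chain-plus-$R$ connectivity of $\mathcal{G}_{n\text{-locality-MI+aux}}$, and that the mutual independence of all latent sources holds in both DAGs, so that neither side imposes any constraint absent from the other. The remaining content is purely notational.
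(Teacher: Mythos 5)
Your proposal is correct and follows essentially the same route as the paper: the paper likewise writes the Markov factorization of the $(n+2)$-cyclic network's compatible distributions and observes that they are of \emph{precisely the same form} as those of the $n$-locality scenario with measurement dependence and auxiliary $R$, under the relabelling $\alpha_i \leftrightarrow (A_i,X_i)$ for $i\in\{1,n+1\}$ and $\alpha_i \leftrightarrow A_i$ otherwise, so that incompatibility transfers in both directions. Your explicit appeal to the latent-projection argument of Ref.~\cite{Evans2015} for merging each endpoint's input and output into one composite node is the same justification the paper invokes (in a footnote) for the bipartite case, just stated more carefully here.
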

\begin{cor}
For the special case of $n$-cyclic scenario correlations where ${H(X_1,X_{n-1}\vert R)=0}$, it follows that if ${{P(a_1,a_2,...,a_{n-1}\vert x_1,x_{n-1})}}$ violates a traditional $n$-locality inequality, then ${P(\alpha_1{=}(a_1,x_1), \alpha_2{=}a_2,...,\alpha_{n-1}{=}(a_{n+1},x_{n-1}),r)}$ is incompatible with $\mathcal{G}_{\text{cyclic}}$.
\end{cor}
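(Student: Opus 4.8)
The plan is to run the same two-step argument used for the triangle and ``2's \& $n$'' corollaries, now routed through the biconditional mapping lemma stated immediately above. First I would invoke that lemma to convert the claim about $\mathcal{G}_{\text{cyclic}}$ into the equivalent claim about the corresponding $n$-locality scenario with measurement independence relaxed and auxiliary variable $R$: namely, $P(\alpha_1{=}(a_1,x_1),\alpha_2{=}a_2,\dots,r)$ is incompatible with the cyclic scenario if and only if $P(a_1,\dots,a_{n-1},x_1,x_{n-1},r)$ is incompatible with $\mathcal{G}_{n\text{-locality-MI+aux}}$. This reduces the corollary to showing that, under the hypothesis $H(X_1,X_{n-1}\vert R)=0$, a violation of a traditional $n$-locality inequality by the conditional $P(a_1,\dots,a_{n-1}\vert x_1,x_{n-1})$ rules out compatibility with $\mathcal{G}_{n\text{-locality-MI+aux}}$.

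The engine of that reduction is the entropic bound of Lemma~\ref{lem:generalub}, specialized to the two endpoint inputs $X_1,X_{n-1}$ that actually carry settings in the linear chain (the intermediate parties, including the central inputless node, contributing nothing to freedom of choice). The crucial observation is that the proof of Lemma~\ref{lem:generalub} invokes only the single causal assumption that $R$ is independent of $\Lambda$. In the cyclic geometry of Fig.~\ref{fig:Bell9} the role of $\Lambda$ is played by the internal chain sources $\Lambda_i$, whereas $R$ is fed solely by the boundary sources $U_1,U_2$, which are by construction independent of every $\Lambda_i$. Hence $H(R,\Lambda)=H(R)+H(\Lambda)$ holds and the bound $I(X_1,X_{n-1}:\Lambda)\leq H(X_1,X_{n-1}\vert R)$ carries over verbatim. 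Setting the right-hand side to zero by hypothesis forces $I(X_1,X_{n-1}:\Lambda)=0$, which is exactly the factorization $p(x_1,x_{n-1},\lambda)=p(x_1,x_{n-1})p(\lambda)$, i.e. full measurement independence of the endpoint settings from the latent source collection.

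With measurement independence restored, the causal constraints defining $\mathcal{G}_{n\text{-locality-MI+aux}}$ collapse onto those of the standard $n$-locality scenario, so any compatible joint distribution must have its conditional $P(a_1,\dots,a_{n-1}\vert x_1,x_{n-1})$ obey every traditional $n$-locality inequality (for the chain underlying Fig.~\ref{fig:Bell9}, for instance the bilocality constraint $\sqrt{I}+\sqrt{J}\leq 2$). Taking the contrapositive: if that conditional violates such an inequality, then no distribution compatible with $\mathcal{G}_{n\text{-locality-MI+aux}}$ under $H(X_1,X_{n-1}\vert R)=0$ can realize it, and the biconditional lemma transfers this incompatibility back to $\mathcal{G}_{\text{cyclic}}$, completing the argument.

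The main obstacle is the bookkeeping of the second paragraph: one must pin down exactly which latent nodes constitute ``$\Lambda$'' and which feed $R$ in the cyclic DAG, and then confirm that the independence of $R$ from $\Lambda$ genuinely holds across the whole family covered by the phrase ``and generalizations thereof'' rather than only at $n=4$, so that the Lemma~\ref{lem:generalub} bound applies uniformly. A secondary subtlety worth making explicit is that $I(X_1,X_{n-1}:\Lambda)=0$ certifies \emph{joint} independence of the endpoint-input pair from $\Lambda$, and that this is precisely the freedom-of-choice premise required by the $n$-locality inequalities; I would argue that no stronger condition (such as mutual independence of $X_1$ and $X_{n-1}$, or independence from each individual $\Lambda_i$ separately) is needed for those inequalities to hold, so that the single vanishing mutual information suffices.
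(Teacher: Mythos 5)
Your proposal is correct, and its skeleton matches the paper's: invoke the bijective mapping lemma to transfer the question to $\mathcal{G}_{n\text{-locality-MI+aux}}$, then apply the entropic bound of Lemma~\ref{lem:generalub} after observing that its proof requires only independence of $R$ (a function of $U_1,U_2$ alone) from the collection $\Lambda$ of internal chain sources, which holds by construction in the cyclic DAG. Where you depart from the paper is the final step. The paper reaches the corollary as the $H(X_1,X_{n-1}\vert R)=0$ limit of a quantitative chain: it presupposes a relation $g(I)\leq\mathcal{M}$ derived following \cite{chaves2015unifying}, converts $\mathcal{M}$ into mutual information via the Pinsker inequality~\eqref{eq:Pinsker}, and combines this with Lemma~\ref{lem:generalub} to obtain $g(I)\leq\sqrt{H(X_1,X_{n-1}\vert R)/\log_2 e}$, so that $H(X_1,X_{n-1}\vert R)=0$ forces $g(I)\leq 0$, i.e., no violation. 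You instead note that $H(X_1,X_{n-1}\vert R)=0$ makes $I(X_1,X_{n-1}:\Lambda)$ vanish exactly, hence $p(\lambda_1,\dots\vert x_1,x_{n-1})=p(\lambda_1,\dots)=\prod_i p(\lambda_i)$ (the product form coming from the mutual independence of sources in the DAG), so the conditional distribution collapses to precisely the standard $n$-local form and every traditional $n$-locality inequality must hold; the contrapositive plus the mapping lemma then finishes the argument. Your route is more elementary (no Pinsker, no $\mathcal{M}$) and somewhat more general in scope, since it covers any traditional $n$-locality inequality rather than only those for which a bound of the form $g(I)\leq\mathcal{M}$ has been established; the paper's route buys something else, namely a noise-robust nonlinear inequality valid when $H(X_1,X_{n-1}\vert R)>0$, of which the corollary is the idealized limit. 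Your closing remark on bookkeeping is also on point: joint independence of the endpoint-input pair from the full source collection is exactly the premise the $n$-locality inequalities need, and it is exactly what the single vanishing mutual information delivers.
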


\begin{figure}[b]
    \centering
    \includegraphics[scale = 0.34]{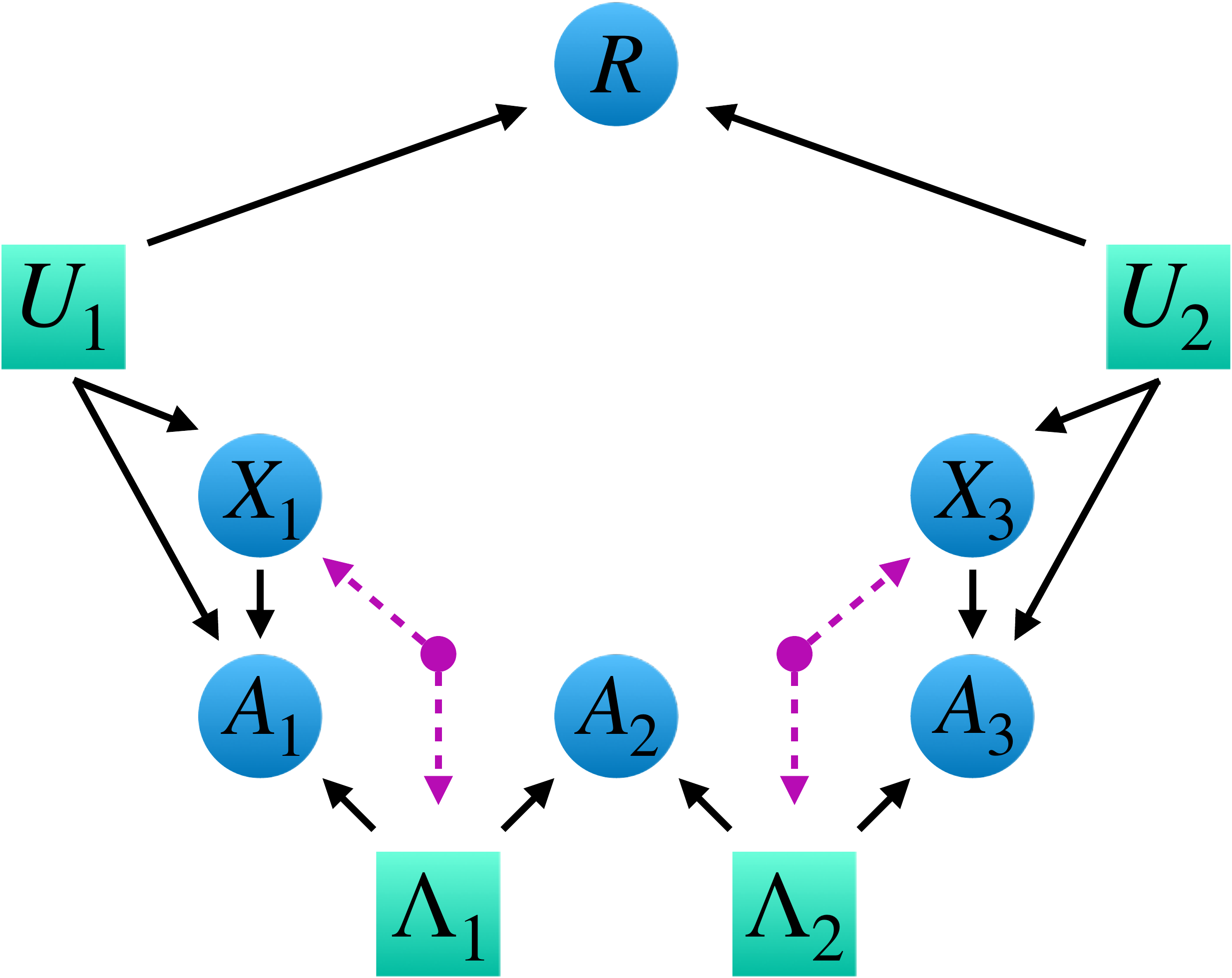}
    \caption{\textbf{Bilocality causal structure,} shown here with measurement dependence (sources and local inputs potentially correlated) along with an auxiliary variable $R$. In the absence of measurement dependence this scenario is akin to an entanglement swapping scenario~\cite{zukowski1993event} with a central node sharing correlations with two peripheral nodes via two independent sources. More generally, in an $n$-locality scenario with measurement dependence and an auxiliary variable, we have a chain of outcome variables $A_1,...,A_{n+1}$ along with $R$, and only the two peripheral nodes $A_1$ and $A_{n+1}$ have inputs.
  }
    \label{fig:Bell10}
\end{figure}

Following the recipe in \cite{chaves2015unifying}, one can also relate the violation of such inequalities with the degree of measurement dependence ${\mathcal{M}\coloneqq\sum_{\lambda_1,\lambda_2} \vert p(x_1,x_3,\lambda_1,\lambda_2)- p(x_1,x_3)p(\lambda_1)p(\lambda_2)\vert}$  required to explain it. As before, one can then generally write $g(I)\leq \mathcal{M}$ where now $g$ is a polynomial function of the Bell inequality $I$ such that $g(I) >0$ if the inequality is violated and $g(I) \leq 0$ otherwise. Thus, we also obtain
\begin{equation}
 g(I)  \leq \sqrt{\frac{H(X_1,X_3\vert R)}{\log_2 e}}.
\end{equation}
In particular, if $X_1$ and $X_3$ are perfectly correlated with and $R$, then $H(X_1,X_3\vert R)=0$ and the quantum violation of any bilocality inequality would be enough to witness quantum nonclassicality in such a cyclic network.

Leveraging the linear chain $n$-locality scenario considered in \cite{mukherjee2015correlations}, one can demonstrate nonclassicality in cyclic networks of arbitrary size. In the linear chain $n$-locality scenario,  we have ${n{+}1}$ observable variables $A_i$ of which only $A_1$ and $A_{n+1}$ have inputs, denoted $X_1$ and $X_{n+1}$ respectively. If we treat $\alpha_1=(a_1,x_1)$ and $\alpha_N=(a_{n+1},x_{n+1})$ and otherwise $\alpha_i = a_i$, and we close the cycle by introducing the sources $U_1$ and $U_2$ that connect $\alpha_1$ and $\alpha_n$ to the variable $R$, we obtain the ${(n{+}2)}$-cycle causal network.
The joint distribution on observed variables that are compatible with this network are
\begin{widetext}
\begin{align}
p(\alpha_1,...\alpha_{n+1},r)=p(r\vert u_1,u_2)p(\alpha_1\vert u_1 \lambda_1)p(\alpha_{n+1}\vert u_2,\lambda_n)p(\lambda_1)\prod_{i=2}^n p(\alpha_i\vert \lambda_{i-1},\lambda_i)p(\lambda_n).
\end{align}
\end{widetext}
Compatible distributions in the  $n$-locality scenario with measurement dependence and an auxiliary variable $R$, that is, generalizations of Fig.~\ref{fig:Bell10}, are of \emph{precisely the same form} but with the relabelling 
\begin{align}
\alpha_i \leftrightarrow \begin{cases}
(A_i,X_i) &\text{if }\ i=1\ \text{ or }\ i=n+1,\\
A_i &\text{if }\ 1<i<n+1.
\end{cases}
\end{align}

As before, if $X_1$ and $X_{n+1}$ are perfectly correlated with $R$ then $H(X_1,X_{n+1}\vert R)=0$ and the violation of any $n$-locality inequality bounding the linear chain scenario will suffice to demonstrate nonlocality also in the cyclic scenario. An example of such inequality was provided in \cite{mukherjee2015correlations}.

\section{Discussion}\label{sec:conclusions}

If observed statistical correlations are found to violate a Bell inequality, then one can conclude that these correlations cannot be explained by a classical causal model having the causal structure of Fig.~\ref{fig:Bell}. The moniker of `Bell nonlocality' has traditionally been assigned to this phenomenon because one can avoid the contradiction through a modification of the causal structure of Fig.~\ref{fig:Bell} that incorporates a (nonlocal) causal influence from the setting on one side to the outcome on the other. It is well known, however, that there are other opportunities for evading the contradiction.  In particular, one can modify the causal structure of Fig.~\ref{fig:Bell} by
relaxing the assumption that there is no causal influence from $\Lambda$ to the setting variables $X$ or $Y$ and no common cause of $\Lambda$ and $X$ or of $\Lambda$ and $Y$, that is, by relaxing the assumption of measurement independence. 
We have argued that there are two distinct causal mechanisms by which the assumption of measurement independence might fail in a Bell experiment. The first mechanism is a violation of independence between $\Lambda$ and variables that causally determine the settings $X$ and $Y$.  If the settings are made to depend causally on cosmic photons~\cite{handsteiner2017cosmic,rauch2018cosmic}, 
then this version of the assumption seems especially plausible, since denying it seems to require assuming a superdeterministic world wherein everything is potentially correlated with everything else. The second mechanism is one wherein some systems that {\em mediate} the influence of the ultimate causal determinants of the settings (such as cosmic photons) become influenced by $\Lambda$ or by a variable that also influences $\Lambda$.  In other words, even though the variables that determine the settings may start out independent of $\Lambda$, they might {\em  become} correlated with it (e.g., when the cosmic photons enter the laboratory).  
It is this second class of mechanism for measurement dependence that we have here shown can be subjected to an experimental test.

We embed the Bell causal structure in a larger network and assume the independence of the variables that causally determine the settings (i.e., we rule out, by assumption, the first mechanism for achieving measurement dependence), and show that in this case it is possible to upper bound the amount of measurement dependence (of the second kind) based on observational data. Combining these upper bounds with previous lower bounds for the amount of measurement dependence required to explain a given violation of Bell inequalities, we are able to derive nonlinear Bell-type inequalities whose violation is a proof of 
can witness nonclassicality in spite of the presence of some measurement dependence (of the second kind).

To our knowledge, this is the first demonstration of the possibility of putting an upper bound on the amount of measurement dependence in a Bell experiment, a feature that deserves further theoretical and experimental investigation. It is worth remarking that the results in \cite{vsupic2020quantum} can be understood as complementary to ours. There, focusing on the triangle network, it has been shown that under the assumption of perfect correlation between some variables (ruling out any possible measurement dependence of the second kind) they can witness nonclassicality even allowing correlations between the sources (thus allowing measurement dependence of the first kind, even though the latter cannot be upper bounded by observational data).

Following that, we have also shown how measurement-dependent Bell causal structures can be readily mapped onto causal networks of growing size and complexity, a field of research that is attracting growing attention but for which advances have been hampered by the difficulty in deriving Bell inequalities. By doing so,  we have derived a robust Bell inequality especially suited to test the Fritz correlations \cite{fritz2012beyond} in the triangle network and which, in contrast to previous attempts, does not require perfect correlations between some subset of variables~\cite{vsupic2020quantum}, which is a welcome feature for achieving an experimental implementation. 
Finally, by mapping fully connected networks onto multipartite Bell scenarios and by mapping cyclic networks to the linear $n$-locality scenario, we were able to show that such networks can give rise to correlations that witness nonclassicality.

We believe that our results are just a first step towards a better understand of measurement dependent causal models and how these can be tested experimentally. For instance, upper bounds like the one in \eqref{eq:generalbi} employ the Pinsker inequality, which is known to be non-tight. Can we employ more modern techniques such as those in \cite{hall2020measurement}, the covariance \cite{aaberg2020semidefinite,pozas2019bounding} or the inflation technique \cite{wolfe2019inflation}, in order to provide better lower and upper bounds to $\mathcal{M}$ and $I(X,Y:\Lambda)$ and thus improve our inequalities? We have proven that is possible for the tripartite scenario by deriving new results for the Mermin inequality.

It is noteworthy that the analysis we have carried out  here can also be extended to analyze measurement dependence in different causal structures, in particular in the instrumental scenario (which plays a central role in the field of causal inference). As is further explored in \cite{morenoetal}, measurement independence is also a crucial assumption in causal inference, the violation of which has important consequences to the analysis of cause and effect in empirical data \cite{kedagni2020generalized,balke1997bounds}.

Finally, we believe that the techniques we have introduced here can offer a way to tackle open questions in the study of networks. As mentioned before, the first example of nonclassical correlations in the triangle network was provided by a mapping of that network onto the usual Bell network \cite{fritz2012beyond}. Other examples of nonclassicality that do not hinge directly on Bell's theorem are known \cite{renou2019genuine}. However, it remains an open question of how can one prove that the nonclassicality observed in a given network is truly different from that in Bell's theorem. Further exploration of the connection between Bell scenarios with measurement dependence and networks may offer a way to better understand their similarities and their differences. 
We hope that our results might trigger future research along all of these directions.

\section{Acknowledgements} 
This work was supported by The John Templeton Foundation via the grant Q-CAUSAL No 61084 and via The Quantum Information Structure of Spacetime (QISS) Project (qiss.fr) (the opinions expressed in this publication are those of the author(s) and do not necessarily reflect the views of the John Templeton Foundation)  Grant Agreement No.  61466, by the Serrapilheira Institute (Grant No. Serra-1708-15763), the Brazilian National Council for Scientific and Technological Development (CNPq) via the National Institute for Science and Technology on Quantum Information (INCT-IQ) and Grants No. 307172/2017-1 and No. 311375/2020-0, the Brazilian agencies MCTIC and MEC , by MIUR via PRIN 2017 (Progetto di Ricerca di Interesse Nazionale): project QUSHIP (2017SRNBRK). This research was supported by Perimeter Institute for Theoretical Physics. Research at Perimeter Institute is supported in part by the Government of Canada through the Department of Innovation, Science and Economic Development Canada and by the Province of Ontario through the Ministry of Colleges and Universities. The opinions expressed in this publication are those of the authors and do not necessarily reflect the views of the supporting institutions.

\bibliographystyle{apsrev4-2-wolfe}
\nocite{apsrev42Control}
\bibliography{mrefs}\clearpage

\begin{thebibliography}{80}%
\makeatletter
\providecommand \@ifxundefined [1]{%
 \@ifx{#1\undefined}
}%
\providecommand \@ifnum [1]{%
 \ifnum #1\expandafter \@firstoftwo
 \else \expandafter \@secondoftwo
 \fi
}%
\providecommand \@ifx [1]{%
 \ifx #1\expandafter \@firstoftwo
 \else \expandafter \@secondoftwo
 \fi
}%
\providecommand \natexlab [1]{#1}%
\providecommand \enquote  [1]{``#1''}%
\providecommand \bibnamefont  [1]{#1}%
\providecommand \bibfnamefont [1]{#1}%
\providecommand \citenamefont [1]{#1}%
\providecommand \href@noop [0]{\@secondoftwo}%
\providecommand \href [0]{\begingroup \@sanitize@url \@href}%
\providecommand \@href[1]{\@@startlink{#1}\@@href}%
\providecommand \@@href[1]{\endgroup#1\@@endlink}%
\providecommand \@sanitize@url [0]{\catcode `\\12\catcode `\$12\catcode
  `\&12\catcode `\#12\catcode `\^12\catcode `\_12\catcode `\%12\relax}%
\providecommand \@@startlink[1]{}%
\providecommand \@@endlink[0]{}%
\providecommand \url  [0]{\begingroup\@sanitize@url \@url }%
\providecommand \@url [1]{\endgroup\@href {#1}{\urlprefix }}%
\providecommand \urlprefix  [0]{URL }%
\providecommand \Eprint [0]{\href }%
\providecommand \doibase [0]{https://doi.org/}%
\providecommand \selectlanguage [0]{\@gobble}%
\providecommand \bibinfo  [0]{\@secondoftwo}%
\providecommand \bibfield  [0]{\@secondoftwo}%
\providecommand \translation [1]{[#1]}%
\providecommand \BibitemOpen [0]{}%
\providecommand \bibitemStop [0]{}%
\providecommand \bibitemNoStop [0]{.\EOS\space}%
\providecommand \EOS [0]{\spacefactor3000\relax}%
\providecommand \BibitemShut  [1]{\csname bibitem#1\endcsname}%
\let\auto@bib@innerbib\@empty
\bibitem [{\citenamefont {Bell}(1964)}]{bell1964einstein}%
  \BibitemOpen
  \bibfield  {author} {\bibinfo {author} {\bibfnamefont {J.~S.}\ \bibnamefont
  {Bell}},\ }\bibfield  {title} {\enquote {\bibinfo {title} {On the {E}instein
  {P}odolsky {R}osen paradox},}\ }\href
  {https://doi.org/10.1103/PhysicsPhysiqueFizika.1.195} {\bibfield  {journal}
  {\bibinfo  {journal} {Physics Physique Fizika}\ }\textbf {\bibinfo {volume}
  {1}},\ \bibinfo {pages} {195} (\bibinfo {year} {1964})}\BibitemShut {NoStop}%
\bibitem [{\citenamefont {Pironio}\ \emph {et~al.}(2016)\citenamefont
  {Pironio}, \citenamefont {Scarani},\ and\ \citenamefont
  {Vidick}}]{pironio2016focus}%
  \BibitemOpen
  \bibfield  {author} {\bibinfo {author} {\bibfnamefont {S.}~\bibnamefont
  {Pironio}}, \bibinfo {author} {\bibfnamefont {V.}~\bibnamefont {Scarani}},\
  and\ \bibinfo {author} {\bibfnamefont {T.}~\bibnamefont {Vidick}},\
  }\bibfield  {title} {\enquote {\bibinfo {title} {Focus on device independent
  quantum information},}\ }\href
  {https://doi.org/10.1088/1367-2630/18/10/100202} {\bibfield  {journal}
  {\bibinfo  {journal} {New J. Phys}\ }\textbf {\bibinfo {volume} {18}},\
  \bibinfo {pages} {100202} (\bibinfo {year} {2016})}\BibitemShut {NoStop}%
\bibitem [{\citenamefont {Reichenbach}(1991)}]{reichenbach1991direction}%
  \BibitemOpen
  \bibfield  {author} {\bibinfo {author} {\bibfnamefont {H.}~\bibnamefont
  {Reichenbach}},\ }\href
  {https://www.worldcat.org/title/direction-of-time/oclc/1057937061} {\emph
  {\bibinfo {title} {The direction of time}}},\ Vol.~\bibinfo {volume} {65}\
  (\bibinfo  {publisher} {Univ of California Press},\ \bibinfo {year}
  {1991})\BibitemShut {NoStop}%
\bibitem [{\citenamefont {Wood}\ and\ \citenamefont
  {Spekkens}(2015)}]{wood2015lesson}%
  \BibitemOpen
  \bibfield  {author} {\bibinfo {author} {\bibfnamefont {C.~J.}\ \bibnamefont
  {Wood}}\ and\ \bibinfo {author} {\bibfnamefont {R.~W.}\ \bibnamefont
  {Spekkens}},\ }\bibfield  {title} {\enquote {\bibinfo {title} {{The lesson of
  causal discovery algorithms for quantum correlations Causal explanations of
  Bell-inequality violations require fine-tuning}},}\ }\href
  {https://doi.org/10.1088/1367-2630/17/3/033002} {\bibfield  {journal}
  {\bibinfo  {journal} {New J. Phys.}\ }\textbf {\bibinfo {volume} {17}},\
  \bibinfo {pages} {033002} (\bibinfo {year} {2015})}\BibitemShut {NoStop}%
\bibitem [{\citenamefont {Wiseman}\ and\ \citenamefont
  {Cavalcanti}(2017)}]{wiseman2017causarum}%
  \BibitemOpen
  \bibfield  {author} {\bibinfo {author} {\bibfnamefont {H.~M.}\ \bibnamefont
  {Wiseman}}\ and\ \bibinfo {author} {\bibfnamefont {E.~G.}\ \bibnamefont
  {Cavalcanti}},\ }\bibfield  {title} {\enquote {\bibinfo {title} {{Causarum
  Investigatio and the two Bell’s theorems of John Bell}},}\ }in\ \href
  {https://doi.org/10.1007/978-3-319-38987-5_6} {\emph {\bibinfo {booktitle}
  {Quantum [Un] Speakables II}}}\ (\bibinfo  {publisher} {Springer},\ \bibinfo
  {year} {2017})\ pp.\ \bibinfo {pages} {119--142}\BibitemShut {NoStop}%
\bibitem [{\citenamefont {Brans}(1988)}]{brans1988bell}%
  \BibitemOpen
  \bibfield  {author} {\bibinfo {author} {\bibfnamefont {C.~H.}\ \bibnamefont
  {Brans}},\ }\bibfield  {title} {\enquote {\bibinfo {title} {Bell's theorem
  does not eliminate fully causal hidden variables},}\ }\href
  {https://doi.org/10.1007/BF00670750} {\bibfield  {journal} {\bibinfo
  {journal} {Int. J. Theo. Phys.}\ }\textbf {\bibinfo {volume} {27}},\ \bibinfo
  {pages} {219} (\bibinfo {year} {1988})}\BibitemShut {NoStop}%
\bibitem [{\citenamefont {Bell}\ \emph {et~al.}(1985)\citenamefont {Bell},
  \citenamefont {Shimony}, \citenamefont {Horne},\ and\ \citenamefont
  {Clauser}}]{bell1985exchange}%
  \BibitemOpen
  \bibfield  {author} {\bibinfo {author} {\bibfnamefont {J.~S.}\ \bibnamefont
  {Bell}}, \bibinfo {author} {\bibfnamefont {A.}~\bibnamefont {Shimony}},
  \bibinfo {author} {\bibfnamefont {M.~A.}\ \bibnamefont {Horne}},\ and\
  \bibinfo {author} {\bibfnamefont {J.~F.}\ \bibnamefont {Clauser}},\
  }\bibfield  {title} {\enquote {\bibinfo {title} {An exchange on local
  beables},}\ }\href {https://doi.org/10.1111/j.1746-8361.1985.tb01249.x}
  {\bibfield  {journal} {\bibinfo  {journal} {Dialectica}\ ,\ \bibinfo {pages}
  {85}} (\bibinfo {year} {1985})}\BibitemShut {NoStop}%
\bibitem [{\citenamefont {Kofler}\ \emph {et~al.}(2006)\citenamefont {Kofler},
  \citenamefont {Paterek},\ and\ \citenamefont
  {Brukner}}]{kofler2006experimenter}%
  \BibitemOpen
  \bibfield  {author} {\bibinfo {author} {\bibfnamefont {J.}~\bibnamefont
  {Kofler}}, \bibinfo {author} {\bibfnamefont {T.}~\bibnamefont {Paterek}},\
  and\ \bibinfo {author} {\bibfnamefont {i.~c.~v.}\ \bibnamefont {Brukner}},\
  }\bibfield  {title} {\enquote {\bibinfo {title} {{Experimenter's freedom in
  Bell's theorem and quantum cryptography}},}\ }\href
  {https://doi.org/10.1103/PhysRevA.73.022104} {\bibfield  {journal} {\bibinfo
  {journal} {Phys. Rev. A}\ }\textbf {\bibinfo {volume} {73}},\ \bibinfo
  {pages} {022104} (\bibinfo {year} {2006})}\BibitemShut {NoStop}%
\bibitem [{\citenamefont {Koh}\ \emph {et~al.}(2012)\citenamefont {Koh},
  \citenamefont {Hall}, \citenamefont {Setiawan}, \citenamefont {Pope},
  \citenamefont {Marletto}, \citenamefont {Kay}, \citenamefont {Scarani},\ and\
  \citenamefont {Ekert}}]{koh2012effects}%
  \BibitemOpen
  \bibfield  {author} {\bibinfo {author} {\bibfnamefont {D.~E.}\ \bibnamefont
  {Koh}}, \bibinfo {author} {\bibfnamefont {M.~J.~W.}\ \bibnamefont {Hall}},
  \bibinfo {author} {\bibnamefont {Setiawan}}, \bibinfo {author} {\bibfnamefont
  {J.~E.}\ \bibnamefont {Pope}}, \bibinfo {author} {\bibfnamefont
  {C.}~\bibnamefont {Marletto}}, \bibinfo {author} {\bibfnamefont
  {A.}~\bibnamefont {Kay}}, \bibinfo {author} {\bibfnamefont {V.}~\bibnamefont
  {Scarani}},\ and\ \bibinfo {author} {\bibfnamefont {A.}~\bibnamefont
  {Ekert}},\ }\bibfield  {title} {\enquote {\bibinfo {title} {{Effects of
  Reduced Measurement Independence on Bell-Based Randomness Expansion}},}\
  }\href {https://doi.org/10.1103/PhysRevLett.109.160404} {\bibfield  {journal}
  {\bibinfo  {journal} {Phys. Rev. Lett.}\ }\textbf {\bibinfo {volume} {109}},\
  \bibinfo {pages} {160404} (\bibinfo {year} {2012})}\BibitemShut {NoStop}%
\bibitem [{\citenamefont {Hall}(2010)}]{hall2010local}%
  \BibitemOpen
  \bibfield  {author} {\bibinfo {author} {\bibfnamefont {M.~J.}\ \bibnamefont
  {Hall}},\ }\bibfield  {title} {\enquote {\bibinfo {title} {Local
  deterministic model of singlet state correlations based on relaxing
  measurement independence},}\ }\href
  {https://doi.org/10.1103/PhysRevLett.105.250404} {\bibfield  {journal}
  {\bibinfo  {journal} {Phys. Rev. Lett.}\ }\textbf {\bibinfo {volume} {105}},\
  \bibinfo {pages} {250404} (\bibinfo {year} {2010})}\BibitemShut {NoStop}%
\bibitem [{\citenamefont {Barrett}\ and\ \citenamefont
  {Gisin}(2011)}]{PhysRevLett.106.100406}%
  \BibitemOpen
  \bibfield  {author} {\bibinfo {author} {\bibfnamefont {J.}~\bibnamefont
  {Barrett}}\ and\ \bibinfo {author} {\bibfnamefont {N.}~\bibnamefont
  {Gisin}},\ }\bibfield  {title} {\enquote {\bibinfo {title} {How much
  measurement independence is needed to demonstrate nonlocality},}\ }\href
  {https://doi.org/10.1103PhysRevLett.106.100406} {\bibfield  {journal}
  {\bibinfo  {journal} {Phys. Rev. Lett.}\ }\textbf {\bibinfo {volume} {106}},\
  \bibinfo {pages} {100406} (\bibinfo {year} {2011})}\BibitemShut {NoStop}%
\bibitem [{\citenamefont {Hall}(2011)}]{hall2011relaxed}%
  \BibitemOpen
  \bibfield  {author} {\bibinfo {author} {\bibfnamefont {M.~J.}\ \bibnamefont
  {Hall}},\ }\bibfield  {title} {\enquote {\bibinfo {title} {{Relaxed Bell
  inequalities and Kochen-Specker theorems}},}\ }\href
  {https://doi.org/10.1103/PhysRevA.84.022102} {\bibfield  {journal} {\bibinfo
  {journal} {Phys. Rev. A}\ }\textbf {\bibinfo {volume} {84}},\ \bibinfo
  {pages} {022102} (\bibinfo {year} {2011})}\BibitemShut {NoStop}%
\bibitem [{\citenamefont {Banik}\ \emph {et~al.}(2012)\citenamefont {Banik},
  \citenamefont {Gazi}, \citenamefont {Das}, \citenamefont {Rai},\ and\
  \citenamefont {Kunkri}}]{banik2012optimal}%
  \BibitemOpen
  \bibfield  {author} {\bibinfo {author} {\bibfnamefont {M.}~\bibnamefont
  {Banik}}, \bibinfo {author} {\bibfnamefont {M.~R.}\ \bibnamefont {Gazi}},
  \bibinfo {author} {\bibfnamefont {S.}~\bibnamefont {Das}}, \bibinfo {author}
  {\bibfnamefont {A.}~\bibnamefont {Rai}},\ and\ \bibinfo {author}
  {\bibfnamefont {S.}~\bibnamefont {Kunkri}},\ }\bibfield  {title} {\enquote
  {\bibinfo {title} {Optimal free will on one side in reproducing the singlet
  correlation},}\ }\href {https://doi.org/10.1088/1751-8113/45/20/205301}
  {\bibfield  {journal} {\bibinfo  {journal} {J. Phys. A}\ }\textbf {\bibinfo
  {volume} {45}},\ \bibinfo {pages} {205301} (\bibinfo {year}
  {2012})}\BibitemShut {NoStop}%
\bibitem [{\citenamefont {Colbeck}\ and\ \citenamefont
  {Renner}(2012)}]{colbeck2012free}%
  \BibitemOpen
  \bibfield  {author} {\bibinfo {author} {\bibfnamefont {R.}~\bibnamefont
  {Colbeck}}\ and\ \bibinfo {author} {\bibfnamefont {R.}~\bibnamefont
  {Renner}},\ }\bibfield  {title} {\enquote {\bibinfo {title} {Free randomness
  can be amplified},}\ }\href {https://doi.org/10.1038/nphys2300} {\bibfield
  {journal} {\bibinfo  {journal} {Nature Phys.}\ }\textbf {\bibinfo {volume}
  {8}},\ \bibinfo {pages} {450} (\bibinfo {year} {2012})}\BibitemShut {NoStop}%
\bibitem [{\citenamefont {Gallicchio}\ \emph {et~al.}(2014)\citenamefont
  {Gallicchio}, \citenamefont {Friedman},\ and\ \citenamefont
  {Kaiser}}]{gallicchio2014testing}%
  \BibitemOpen
  \bibfield  {author} {\bibinfo {author} {\bibfnamefont {J.}~\bibnamefont
  {Gallicchio}}, \bibinfo {author} {\bibfnamefont {A.~S.}\ \bibnamefont
  {Friedman}},\ and\ \bibinfo {author} {\bibfnamefont {D.~I.}\ \bibnamefont
  {Kaiser}},\ }\bibfield  {title} {\enquote {\bibinfo {title} {{Testing
  Bell’s inequality with cosmic photons Closing the setting-independence
  loophole}},}\ }\href {https://doi.org/10.1103/PhysRevLett.112.110405}
  {\bibfield  {journal} {\bibinfo  {journal} {Phys. Rev. Lett.}\ }\textbf
  {\bibinfo {volume} {112}},\ \bibinfo {pages} {110405} (\bibinfo {year}
  {2014})}\BibitemShut {NoStop}%
\bibitem [{\citenamefont {P{u}tz}\ \emph {et~al.}(2014)\citenamefont {P{u}tz},
  \citenamefont {Rosset}, \citenamefont {Barnea}, \citenamefont {Liang},\ and\
  \citenamefont {Gisin}}]{putz2014arbitrarily}%
  \BibitemOpen
  \bibfield  {author} {\bibinfo {author} {\bibfnamefont {G.}~\bibnamefont
  {P{u}tz}}, \bibinfo {author} {\bibfnamefont {D.}~\bibnamefont {Rosset}},
  \bibinfo {author} {\bibfnamefont {T.~J.}\ \bibnamefont {Barnea}}, \bibinfo
  {author} {\bibfnamefont {Y.-C.}\ \bibnamefont {Liang}},\ and\ \bibinfo
  {author} {\bibfnamefont {N.}~\bibnamefont {Gisin}},\ }\bibfield  {title}
  {\enquote {\bibinfo {title} {Arbitrarily small amount of measurement
  independence is sufficient to manifest quantum nonlocality},}\ }\href
  {https://doi.org/10.1103/PhysRevLett.113.190402} {\bibfield  {journal}
  {\bibinfo  {journal} {Phys. Rev. Lett.}\ }\textbf {\bibinfo {volume} {113}},\
  \bibinfo {pages} {190402} (\bibinfo {year} {2014})}\BibitemShut {NoStop}%
\bibitem [{\citenamefont {Chaves}\ \emph
  {et~al.}(2015{\natexlab{a}})\citenamefont {Chaves}, \citenamefont {Kueng},
  \citenamefont {Brask},\ and\ \citenamefont {Gross}}]{chaves2015unifying}%
  \BibitemOpen
  \bibfield  {author} {\bibinfo {author} {\bibfnamefont {R.}~\bibnamefont
  {Chaves}}, \bibinfo {author} {\bibfnamefont {R.}~\bibnamefont {Kueng}},
  \bibinfo {author} {\bibfnamefont {J.~B.}\ \bibnamefont {Brask}},\ and\
  \bibinfo {author} {\bibfnamefont {D.}~\bibnamefont {Gross}},\ }\bibfield
  {title} {\enquote {\bibinfo {title} {{Unifying framework for relaxations of
  the causal assumptions in Bell’s theorem}},}\ }\href
  {https://doi.org/10.1103/PhysRevLett.114.140403} {\bibfield  {journal}
  {\bibinfo  {journal} {Phys. Rev. Lett.}\ }\textbf {\bibinfo {volume} {114}},\
  \bibinfo {pages} {140403} (\bibinfo {year} {2015}{\natexlab{a}})}\BibitemShut
  {NoStop}%
\bibitem [{\citenamefont {Chaves}\ \emph {et~al.}(2017)\citenamefont {Chaves},
  \citenamefont {Cavalcanti},\ and\ \citenamefont {Aolita}}]{chaves2017causal}%
  \BibitemOpen
  \bibfield  {author} {\bibinfo {author} {\bibfnamefont {R.}~\bibnamefont
  {Chaves}}, \bibinfo {author} {\bibfnamefont {D.}~\bibnamefont {Cavalcanti}},\
  and\ \bibinfo {author} {\bibfnamefont {L.}~\bibnamefont {Aolita}},\
  }\bibfield  {title} {\enquote {\bibinfo {title} {{Causal hierarchy of
  multipartite Bell nonlocality}},}\ }\href
  {https://doi.org/10.22331/q-2017-08-04-23} {\bibfield  {journal} {\bibinfo
  {journal} {Quantum}\ }\textbf {\bibinfo {volume} {1}},\ \bibinfo {pages} {23}
  (\bibinfo {year} {2017})}\BibitemShut {NoStop}%
\bibitem [{\citenamefont {Friedman}\ \emph {et~al.}(2019)\citenamefont
  {Friedman}, \citenamefont {Guth}, \citenamefont {Hall}, \citenamefont
  {Kaiser},\ and\ \citenamefont {Gallicchio}}]{friedman2019relaxed}%
  \BibitemOpen
  \bibfield  {author} {\bibinfo {author} {\bibfnamefont {A.~S.}\ \bibnamefont
  {Friedman}}, \bibinfo {author} {\bibfnamefont {A.~H.}\ \bibnamefont {Guth}},
  \bibinfo {author} {\bibfnamefont {M.~J.}\ \bibnamefont {Hall}}, \bibinfo
  {author} {\bibfnamefont {D.~I.}\ \bibnamefont {Kaiser}},\ and\ \bibinfo
  {author} {\bibfnamefont {J.}~\bibnamefont {Gallicchio}},\ }\bibfield  {title}
  {\enquote {\bibinfo {title} {{Relaxed Bell inequalities with arbitrary
  measurement dependence for each observer}},}\ }\href
  {https://doi.org/10.1103/PhysRevA.99.012121} {\bibfield  {journal} {\bibinfo
  {journal} {Phys. Rev. A}\ }\textbf {\bibinfo {volume} {99}},\ \bibinfo
  {pages} {012121} (\bibinfo {year} {2019})}\BibitemShut {NoStop}%
\bibitem [{\citenamefont {Hall}\ and\ \citenamefont
  {Branciard}(2020)}]{hall2020measurement}%
  \BibitemOpen
  \bibfield  {author} {\bibinfo {author} {\bibfnamefont {M.~J.}\ \bibnamefont
  {Hall}}\ and\ \bibinfo {author} {\bibfnamefont {C.}~\bibnamefont
  {Branciard}},\ }\bibfield  {title} {\enquote {\bibinfo {title}
  {{Measurement-dependence cost for Bell nonlocality Causal versus retrocausal
  models}},}\ }\href {https://doi.org/10.1103/PhysRevA.102.052228} {\bibfield
  {journal} {\bibinfo  {journal} {Phys. Rev. A}\ }\textbf {\bibinfo {volume}
  {102}},\ \bibinfo {pages} {052228} (\bibinfo {year} {2020})}\BibitemShut
  {NoStop}%
\bibitem [{\citenamefont {Hossenfelder}\ and\ \citenamefont
  {Palmer}(2020)}]{hossenfelder2020rethinking}%
  \BibitemOpen
  \bibfield  {author} {\bibinfo {author} {\bibfnamefont {S.}~\bibnamefont
  {Hossenfelder}}\ and\ \bibinfo {author} {\bibfnamefont {T.}~\bibnamefont
  {Palmer}},\ }\bibfield  {title} {\enquote {\bibinfo {title} {Rethinking
  superdeterminism},}\ }\href {https://doi.org/10.3389/fphy.2020.00139}
  {\bibfield  {journal} {\bibinfo  {journal} {Front. Phys.}\ }\textbf {\bibinfo
  {volume} {8}},\ \bibinfo {pages} {139} (\bibinfo {year} {2020})}\BibitemShut
  {NoStop}%
\bibitem [{\citenamefont {Aktas}\ \emph {et~al.}(2015)\citenamefont {Aktas},
  \citenamefont {Tanzilli}, \citenamefont {Martin}, \citenamefont {P{u}tz},
  \citenamefont {Thew},\ and\ \citenamefont {Gisin}}]{aktas2015demonstration}%
  \BibitemOpen
  \bibfield  {author} {\bibinfo {author} {\bibfnamefont {D.}~\bibnamefont
  {Aktas}}, \bibinfo {author} {\bibfnamefont {S.}~\bibnamefont {Tanzilli}},
  \bibinfo {author} {\bibfnamefont {A.}~\bibnamefont {Martin}}, \bibinfo
  {author} {\bibfnamefont {G.}~\bibnamefont {P{u}tz}}, \bibinfo {author}
  {\bibfnamefont {R.}~\bibnamefont {Thew}},\ and\ \bibinfo {author}
  {\bibfnamefont {N.}~\bibnamefont {Gisin}},\ }\bibfield  {title} {\enquote
  {\bibinfo {title} {Demonstration of quantum nonlocality in the presence of
  measurement dependence},}\ }\href
  {https://doi.org/10.1103/PhysRevLett.114.220404} {\bibfield  {journal}
  {\bibinfo  {journal} {Phys. Rev. Lett.}\ }\textbf {\bibinfo {volume} {114}},\
  \bibinfo {pages} {220404} (\bibinfo {year} {2015})}\BibitemShut {NoStop}%
\bibitem [{\citenamefont {Handsteiner}\ \emph {et~al.}(2017)\citenamefont
  {Handsteiner}, \citenamefont {Friedman}, \citenamefont {Rauch}, \citenamefont
  {Gallicchio}, \citenamefont {Liu}, \citenamefont {Hosp}, \citenamefont
  {Kofler}, \citenamefont {Bricher}, \citenamefont {Fink}, \citenamefont
  {Leung} \emph {et~al.}}]{handsteiner2017cosmic}%
  \BibitemOpen
  \bibfield  {author} {\bibinfo {author} {\bibfnamefont {J.}~\bibnamefont
  {Handsteiner}}, \bibinfo {author} {\bibfnamefont {A.~S.}\ \bibnamefont
  {Friedman}}, \bibinfo {author} {\bibfnamefont {D.}~\bibnamefont {Rauch}},
  \bibinfo {author} {\bibfnamefont {J.}~\bibnamefont {Gallicchio}}, \bibinfo
  {author} {\bibfnamefont {B.}~\bibnamefont {Liu}}, \bibinfo {author}
  {\bibfnamefont {H.}~\bibnamefont {Hosp}}, \bibinfo {author} {\bibfnamefont
  {J.}~\bibnamefont {Kofler}}, \bibinfo {author} {\bibfnamefont
  {D.}~\bibnamefont {Bricher}}, \bibinfo {author} {\bibfnamefont
  {M.}~\bibnamefont {Fink}}, \bibinfo {author} {\bibfnamefont {C.}~\bibnamefont
  {Leung}}, \emph {et~al.},\ }\bibfield  {title} {\enquote {\bibinfo {title}
  {{Cosmic Bell test measurement settings from Milky Way stars}},}\ }\href
  {https://doi.org/10.1103/PhysRevLett.118.060401} {\bibfield  {journal}
  {\bibinfo  {journal} {Phys. Rev. Lett.}\ }\textbf {\bibinfo {volume} {118}},\
  \bibinfo {pages} {060401} (\bibinfo {year} {2017})}\BibitemShut {NoStop}%
\bibitem [{\citenamefont {{BIG Bell Test Collaboration and
  others}}(2018)}]{big2018challenging}%
  \BibitemOpen
  \bibfield  {author} {\bibinfo {author} {\bibnamefont {{BIG Bell Test
  Collaboration and others}}},\ }\bibfield  {title} {\enquote {\bibinfo {title}
  {Challenging local realism with human choices},}\ }\href
  {https://doi.org/10.1038/s41586-018-0085-3} {\bibfield  {journal} {\bibinfo
  {journal} {Nature}\ }\textbf {\bibinfo {volume} {557}},\ \bibinfo {pages}
  {212} (\bibinfo {year} {2018})}\BibitemShut {NoStop}%
\bibitem [{\citenamefont {Rauch}\ \emph {et~al.}(2018)\citenamefont {Rauch},
  \citenamefont {Handsteiner}, \citenamefont {Hochrainer}, \citenamefont
  {Gallicchio}, \citenamefont {Friedman}, \citenamefont {Leung}, \citenamefont
  {Liu}, \citenamefont {Bulla}, \citenamefont {Ecker}, \citenamefont
  {Steinlechner} \emph {et~al.}}]{rauch2018cosmic}%
  \BibitemOpen
  \bibfield  {author} {\bibinfo {author} {\bibfnamefont {D.}~\bibnamefont
  {Rauch}}, \bibinfo {author} {\bibfnamefont {J.}~\bibnamefont {Handsteiner}},
  \bibinfo {author} {\bibfnamefont {A.}~\bibnamefont {Hochrainer}}, \bibinfo
  {author} {\bibfnamefont {J.}~\bibnamefont {Gallicchio}}, \bibinfo {author}
  {\bibfnamefont {A.~S.}\ \bibnamefont {Friedman}}, \bibinfo {author}
  {\bibfnamefont {C.}~\bibnamefont {Leung}}, \bibinfo {author} {\bibfnamefont
  {B.}~\bibnamefont {Liu}}, \bibinfo {author} {\bibfnamefont {L.}~\bibnamefont
  {Bulla}}, \bibinfo {author} {\bibfnamefont {S.}~\bibnamefont {Ecker}},
  \bibinfo {author} {\bibfnamefont {F.}~\bibnamefont {Steinlechner}}, \emph
  {et~al.},\ }\bibfield  {title} {\enquote {\bibinfo {title} {{Cosmic Bell test
  using random measurement settings from high-redshift quasars}},}\ }\href
  {https://doi.org/10.1103/PhysRevLett.121.080403} {\bibfield  {journal}
  {\bibinfo  {journal} {Phys. Rev. Lett.}\ }\textbf {\bibinfo {volume} {121}},\
  \bibinfo {pages} {080403} (\bibinfo {year} {2018})}\BibitemShut {NoStop}%
\bibitem [{\citenamefont {Maudlin}(1992)}]{maudlin1992bell}%
  \BibitemOpen
  \bibfield  {author} {\bibinfo {author} {\bibfnamefont {T.}~\bibnamefont
  {Maudlin}},\ }\bibfield  {title} {\enquote {\bibinfo {title} {Bell's
  inequality, information transmission, and prism models},}\ }in\ \href
  {https://doi.org/10.1086/psaprocbienmeetp.1992.1.192771} {\emph {\bibinfo
  {booktitle} {PSA: Proceedings of the Biennial Meeting of the Philosophy of
  Science Association}}},\ \bibinfo {series and number} {\bibinfo {number}
  {1}}\ (\bibinfo {organization} {Philosophy of Science Association},\ \bibinfo
  {year} {1992})\ pp.\ \bibinfo {pages} {404--417}\BibitemShut {NoStop}%
\bibitem [{\citenamefont {Brassard}\ \emph {et~al.}(1999)\citenamefont
  {Brassard}, \citenamefont {Cleve},\ and\ \citenamefont
  {Tapp}}]{brassard1999cost}%
  \BibitemOpen
  \bibfield  {author} {\bibinfo {author} {\bibfnamefont {G.}~\bibnamefont
  {Brassard}}, \bibinfo {author} {\bibfnamefont {R.}~\bibnamefont {Cleve}},\
  and\ \bibinfo {author} {\bibfnamefont {A.}~\bibnamefont {Tapp}},\ }\bibfield
  {title} {\enquote {\bibinfo {title} {{Cost of Exactly Simulating Quantum
  Entanglement with Classical Communication}},}\ }\href
  {https://doi.org/10.1103/PhysRevLett.83.1874} {\bibfield  {journal} {\bibinfo
   {journal} {Phys. Rev. Lett.}\ }\textbf {\bibinfo {volume} {83}},\ \bibinfo
  {pages} {1874} (\bibinfo {year} {1999})}\BibitemShut {NoStop}%
\bibitem [{\citenamefont {Steiner}(2000)}]{steiner2000towards}%
  \BibitemOpen
  \bibfield  {author} {\bibinfo {author} {\bibfnamefont {M.}~\bibnamefont
  {Steiner}},\ }\bibfield  {title} {\enquote {\bibinfo {title} {Towards
  quantifying non-local information transfer: finite-bit non-locality},}\
  }\href {https://doi.org/10.1016/S0375-9601(00)00315-7} {\bibfield  {journal}
  {\bibinfo  {journal} {Phys. Lett. A}\ }\textbf {\bibinfo {volume} {270}},\
  \bibinfo {pages} {239} (\bibinfo {year} {2000})}\BibitemShut {NoStop}%
\bibitem [{\citenamefont {Toner}\ and\ \citenamefont
  {Bacon}(2003)}]{toner2003communication}%
  \BibitemOpen
  \bibfield  {author} {\bibinfo {author} {\bibfnamefont {B.~F.}\ \bibnamefont
  {Toner}}\ and\ \bibinfo {author} {\bibfnamefont {D.}~\bibnamefont {Bacon}},\
  }\bibfield  {title} {\enquote {\bibinfo {title} {{Communication Cost of
  Simulating Bell Correlations}},}\ }\href
  {https://doi.org/10.1103/PhysRevLett.91.187904} {\bibfield  {journal}
  {\bibinfo  {journal} {Phys. Rev. Lett.}\ }\textbf {\bibinfo {volume} {91}},\
  \bibinfo {pages} {187904} (\bibinfo {year} {2003})}\BibitemShut {NoStop}%
\bibitem [{\citenamefont {Maudlin}(2011)}]{maudlin2011quantum}%
  \BibitemOpen
  \bibfield  {author} {\bibinfo {author} {\bibfnamefont {T.}~\bibnamefont
  {Maudlin}},\ }\href {https://doi.org/10.1002/9781444396973} {\emph {\bibinfo
  {title} {Quantum non-locality and relativity: Metaphysical intimations of
  modern physics}}}\ (\bibinfo  {publisher} {John Wiley \& Sons},\ \bibinfo
  {year} {2011})\BibitemShut {NoStop}%
\bibitem [{\citenamefont {Ringbauer}\ and\ \citenamefont
  {Chaves}(2017)}]{ringbauer2017probing}%
  \BibitemOpen
  \bibfield  {author} {\bibinfo {author} {\bibfnamefont {M.}~\bibnamefont
  {Ringbauer}}\ and\ \bibinfo {author} {\bibfnamefont {R.}~\bibnamefont
  {Chaves}},\ }\bibfield  {title} {\enquote {\bibinfo {title} {Probing the
  non-classicality of temporal correlations},}\ }\href
  {https://doi.org/10.22331/q-2017-11-25-35} {\bibfield  {journal} {\bibinfo
  {journal} {Quantum}\ }\textbf {\bibinfo {volume} {1}},\ \bibinfo {pages} {35}
  (\bibinfo {year} {2017})}\BibitemShut {NoStop}%
\bibitem [{\citenamefont {Brask}\ and\ \citenamefont
  {Chaves}(2017)}]{brask2017bell}%
  \BibitemOpen
  \bibfield  {author} {\bibinfo {author} {\bibfnamefont {J.~B.}\ \bibnamefont
  {Brask}}\ and\ \bibinfo {author} {\bibfnamefont {R.}~\bibnamefont {Chaves}},\
  }\bibfield  {title} {\enquote {\bibinfo {title} {Bell scenarios with
  communication},}\ }\href {https://doi.org/10.1088/1751-8121/aa5840}
  {\bibfield  {journal} {\bibinfo  {journal} {J. Phys. A}\ }\textbf {\bibinfo
  {volume} {50}},\ \bibinfo {pages} {094001} (\bibinfo {year}
  {2017})}\BibitemShut {NoStop}%
\bibitem [{\citenamefont {Gill}(2020)}]{gill2020triangle}%
  \BibitemOpen
  \bibfield  {author} {\bibinfo {author} {\bibfnamefont {R.~D.}\ \bibnamefont
  {Gill}},\ }\bibfield  {title} {\enquote {\bibinfo {title} {{The triangle wave
  versus the cosine: How classical systems can optimally approximate EPR-B
  correlations}},}\ }\href {https://doi.org/10.3390/e22030287} {\bibfield
  {journal} {\bibinfo  {journal} {Entropy}\ }\textbf {\bibinfo {volume} {22}},\
  \bibinfo {pages} {287} (\bibinfo {year} {2020})}\BibitemShut {NoStop}%
\bibitem [{\citenamefont {Blasiak}\ \emph {et~al.}(2021)\citenamefont
  {Blasiak}, \citenamefont {Pothos}, \citenamefont {Yearsley}, \citenamefont
  {Gallus},\ and\ \citenamefont {Borsuk}}]{blasiak2021violations}%
  \BibitemOpen
  \bibfield  {author} {\bibinfo {author} {\bibfnamefont {P.}~\bibnamefont
  {Blasiak}}, \bibinfo {author} {\bibfnamefont {E.~M.}\ \bibnamefont {Pothos}},
  \bibinfo {author} {\bibfnamefont {J.~M.}\ \bibnamefont {Yearsley}}, \bibinfo
  {author} {\bibfnamefont {C.}~\bibnamefont {Gallus}},\ and\ \bibinfo {author}
  {\bibfnamefont {E.}~\bibnamefont {Borsuk}},\ }\bibfield  {title} {\enquote
  {\bibinfo {title} {{Violations of locality and free choice are equivalent
  resources in Bell experiments}},}\ }\href
  {https://doi.org/10.1073/pnas.2020569118} {\bibfield  {journal} {\bibinfo
  {journal} {Proc. Nat. Ac. Sci.}\ }\textbf {\bibinfo {volume} {118}},\
  \bibinfo {pages} {e2020569118} (\bibinfo {year} {2021})}\BibitemShut
  {NoStop}%
\bibitem [{\citenamefont {Clauser}\ \emph {et~al.}(1969)\citenamefont
  {Clauser}, \citenamefont {Horne}, \citenamefont {Shimony},\ and\
  \citenamefont {Holt}}]{clauser1969proposed}%
  \BibitemOpen
  \bibfield  {author} {\bibinfo {author} {\bibfnamefont {J.~F.}\ \bibnamefont
  {Clauser}}, \bibinfo {author} {\bibfnamefont {M.~A.}\ \bibnamefont {Horne}},
  \bibinfo {author} {\bibfnamefont {A.}~\bibnamefont {Shimony}},\ and\ \bibinfo
  {author} {\bibfnamefont {R.~A.}\ \bibnamefont {Holt}},\ }\bibfield  {title}
  {\enquote {\bibinfo {title} {Proposed experiment to test local
  hidden-variable theories},}\ }\href
  {https://doi.org/10.1103/PhysRevLett.23.880} {\bibfield  {journal} {\bibinfo
  {journal} {Phys. Rev. Lett.}\ }\textbf {\bibinfo {volume} {23}},\ \bibinfo
  {pages} {880} (\bibinfo {year} {1969})}\BibitemShut {NoStop}%
\bibitem [{\citenamefont {Tavakoli}\ \emph {et~al.}(2021)\citenamefont
  {Tavakoli}, \citenamefont {Pozas-Kerstjens}, \citenamefont {Luo},\ and\
  \citenamefont {Renou}}]{tavakoli2021bell}%
  \BibitemOpen
  \bibfield  {author} {\bibinfo {author} {\bibfnamefont {A.}~\bibnamefont
  {Tavakoli}}, \bibinfo {author} {\bibfnamefont {A.}~\bibnamefont
  {Pozas-Kerstjens}}, \bibinfo {author} {\bibfnamefont {M.-X.}\ \bibnamefont
  {Luo}},\ and\ \bibinfo {author} {\bibfnamefont {M.-O.}\ \bibnamefont
  {Renou}},\ }\href@noop {} {\enquote {\bibinfo {title} {Bell nonlocality in
  networks},}\ } (\bibinfo {year} {2021}),\ \Eprint
  {https://arxiv.org/abs/2104.10700} {arXiv:2104.10700 [quant-ph]} \BibitemShut
  {NoStop}%
\bibitem [{\citenamefont {Wolf}\ \emph {et~al.}(2009)\citenamefont {Wolf},
  \citenamefont {Perez-Garcia},\ and\ \citenamefont
  {Fernandez}}]{wolf2009measurements}%
  \BibitemOpen
  \bibfield  {author} {\bibinfo {author} {\bibfnamefont {M.~M.}\ \bibnamefont
  {Wolf}}, \bibinfo {author} {\bibfnamefont {D.}~\bibnamefont {Perez-Garcia}},\
  and\ \bibinfo {author} {\bibfnamefont {C.}~\bibnamefont {Fernandez}},\
  }\bibfield  {title} {\enquote {\bibinfo {title} {Measurements incompatible in
  quantum theory cannot be measured jointly in any other no-signaling
  theory},}\ }\href {https://doi.org/10.1103/PhysRevLett.103.230402} {\bibfield
   {journal} {\bibinfo  {journal} {Phys. Rev. Lett.}\ }\textbf {\bibinfo
  {volume} {103}},\ \bibinfo {pages} {230402} (\bibinfo {year}
  {2009})}\BibitemShut {NoStop}%
\bibitem [{\citenamefont {Mermin}(1990)}]{mermin1990}%
  \BibitemOpen
  \bibfield  {author} {\bibinfo {author} {\bibfnamefont {N.~D.}\ \bibnamefont
  {Mermin}},\ }\bibfield  {title} {\enquote {\bibinfo {title} {Extreme quantum
  entanglement in a superposition of macroscopically distinct states},}\ }\href
  {https://doi.org/10.1103PhysRevLett.65.1838} {\bibfield  {journal} {\bibinfo
  {journal} {Phys. Rev. Lett.}\ }\textbf {\bibinfo {volume} {65}},\ \bibinfo
  {pages} {1838} (\bibinfo {year} {1990})}\BibitemShut {NoStop}%
\bibitem [{\citenamefont {Branciard}\ \emph {et~al.}(2010)\citenamefont
  {Branciard}, \citenamefont {Gisin},\ and\ \citenamefont
  {Pironio}}]{branciard2010characterizing}%
  \BibitemOpen
  \bibfield  {author} {\bibinfo {author} {\bibfnamefont {C.}~\bibnamefont
  {Branciard}}, \bibinfo {author} {\bibfnamefont {N.}~\bibnamefont {Gisin}},\
  and\ \bibinfo {author} {\bibfnamefont {S.}~\bibnamefont {Pironio}},\
  }\bibfield  {title} {\enquote {\bibinfo {title} {Characterizing the nonlocal
  correlations created via entanglement swapping},}\ }\href
  {https://doi.org/10.1103/PhysRevLett.104.170401} {\bibfield  {journal}
  {\bibinfo  {journal} {Phys. Rev. Lett.}\ }\textbf {\bibinfo {volume} {104}},\
  \bibinfo {pages} {170401} (\bibinfo {year} {2010})}\BibitemShut {NoStop}%
\bibitem [{\citenamefont {Branciard}\ \emph {et~al.}(2012)\citenamefont
  {Branciard}, \citenamefont {Rosset}, \citenamefont {Gisin},\ and\
  \citenamefont {Pironio}}]{branciard2012bilocal}%
  \BibitemOpen
  \bibfield  {author} {\bibinfo {author} {\bibfnamefont {C.}~\bibnamefont
  {Branciard}}, \bibinfo {author} {\bibfnamefont {D.}~\bibnamefont {Rosset}},
  \bibinfo {author} {\bibfnamefont {N.}~\bibnamefont {Gisin}},\ and\ \bibinfo
  {author} {\bibfnamefont {S.}~\bibnamefont {Pironio}},\ }\bibfield  {title}
  {\enquote {\bibinfo {title} {Bilocal versus nonbilocal correlations in
  entanglement-swapping experiments},}\ }\href
  {https://doi.org/10.1103/PhysRevA.85.032119} {\bibfield  {journal} {\bibinfo
  {journal} {Phys. Rev. A}\ }\textbf {\bibinfo {volume} {85}},\ \bibinfo
  {pages} {032119} (\bibinfo {year} {2012})}\BibitemShut {NoStop}%
\bibitem [{\citenamefont {Fritz}(2012)}]{fritz2012beyond}%
  \BibitemOpen
  \bibfield  {author} {\bibinfo {author} {\bibfnamefont {T.}~\bibnamefont
  {Fritz}},\ }\bibfield  {title} {\enquote {\bibinfo {title} {{Beyond Bell's
  theorem: correlation scenarios}},}\ }\href
  {https://doi.org/10.1088/1367-2630/14/10/103001} {\bibfield  {journal}
  {\bibinfo  {journal} {New J. Phys.}\ }\textbf {\bibinfo {volume} {14}},\
  \bibinfo {pages} {103001} (\bibinfo {year} {2012})}\BibitemShut {NoStop}%
\bibitem [{\citenamefont {Fritz}(2016)}]{fritz2016beyond}%
  \BibitemOpen
  \bibfield  {author} {\bibinfo {author} {\bibfnamefont {T.}~\bibnamefont
  {Fritz}},\ }\bibfield  {title} {\enquote {\bibinfo {title} {{Beyond Bell’s
  theorem II: Scenarios with arbitrary causal structure}},}\ }\href
  {https://doi.org/10.1007/s00220-015-2495-5} {\bibfield  {journal} {\bibinfo
  {journal} {Comm. Math. Phys.}\ }\textbf {\bibinfo {volume} {341}},\ \bibinfo
  {pages} {391} (\bibinfo {year} {2016})}\BibitemShut {NoStop}%
\bibitem [{\citenamefont {Tavakoli}\ \emph {et~al.}(2014)\citenamefont
  {Tavakoli}, \citenamefont {Skrzypczyk}, \citenamefont {Cavalcanti},\ and\
  \citenamefont {Ac{'i}n}}]{tavakoli2014nonlocal}%
  \BibitemOpen
  \bibfield  {author} {\bibinfo {author} {\bibfnamefont {A.}~\bibnamefont
  {Tavakoli}}, \bibinfo {author} {\bibfnamefont {P.}~\bibnamefont
  {Skrzypczyk}}, \bibinfo {author} {\bibfnamefont {D.}~\bibnamefont
  {Cavalcanti}},\ and\ \bibinfo {author} {\bibfnamefont {A.}~\bibnamefont
  {Ac{'i}n}},\ }\bibfield  {title} {\enquote {\bibinfo {title} {Nonlocal
  correlations in the star-network configuration},}\ }\href
  {https://doi.org/10.1103/PhysRevA.90.062109} {\bibfield  {journal} {\bibinfo
  {journal} {Phys. Rev. A}\ }\textbf {\bibinfo {volume} {90}},\ \bibinfo
  {pages} {062109} (\bibinfo {year} {2014})}\BibitemShut {NoStop}%
\bibitem [{\citenamefont {Chaves}(2016)}]{chaves2016polynomial}%
  \BibitemOpen
  \bibfield  {author} {\bibinfo {author} {\bibfnamefont {R.}~\bibnamefont
  {Chaves}},\ }\bibfield  {title} {\enquote {\bibinfo {title} {{Polynomial Bell
  inequalities}},}\ }\href {https://doi.org/10.1103/PhysRevLett.116.010402}
  {\bibfield  {journal} {\bibinfo  {journal} {Phys. Rev. Lett.}\ }\textbf
  {\bibinfo {volume} {116}},\ \bibinfo {pages} {010402} (\bibinfo {year}
  {2016})}\BibitemShut {NoStop}%
\bibitem [{\citenamefont {Poderini}\ \emph {et~al.}(2020)\citenamefont
  {Poderini}, \citenamefont {Agresti}, \citenamefont {Marchese}, \citenamefont
  {Polino}, \citenamefont {Giordani}, \citenamefont {Suprano}, \citenamefont
  {Valeri}, \citenamefont {Milani}, \citenamefont {Spagnolo}, \citenamefont
  {Carvacho} \emph {et~al.}}]{poderini2020experimental}%
  \BibitemOpen
  \bibfield  {author} {\bibinfo {author} {\bibfnamefont {D.}~\bibnamefont
  {Poderini}}, \bibinfo {author} {\bibfnamefont {I.}~\bibnamefont {Agresti}},
  \bibinfo {author} {\bibfnamefont {G.}~\bibnamefont {Marchese}}, \bibinfo
  {author} {\bibfnamefont {E.}~\bibnamefont {Polino}}, \bibinfo {author}
  {\bibfnamefont {T.}~\bibnamefont {Giordani}}, \bibinfo {author}
  {\bibfnamefont {A.}~\bibnamefont {Suprano}}, \bibinfo {author} {\bibfnamefont
  {M.}~\bibnamefont {Valeri}}, \bibinfo {author} {\bibfnamefont
  {G.}~\bibnamefont {Milani}}, \bibinfo {author} {\bibfnamefont
  {N.}~\bibnamefont {Spagnolo}}, \bibinfo {author} {\bibfnamefont
  {G.}~\bibnamefont {Carvacho}}, \emph {et~al.},\ }\bibfield  {title} {\enquote
  {\bibinfo {title} {Experimental violation of \(n\)-locality in a star quantum
  network},}\ }\href {https://doi.org/10.1038/s41467-020-16189-6} {\bibfield
  {journal} {\bibinfo  {journal} {Nature Comm.}\ }\textbf {\bibinfo {volume}
  {11}},\ \bibinfo {pages} {1} (\bibinfo {year} {2020})}\BibitemShut {NoStop}%
\bibitem [{\citenamefont {Rosset}\ \emph {et~al.}(2016)\citenamefont {Rosset},
  \citenamefont {Branciard}, \citenamefont {Barnea}, \citenamefont {P{u}tz},
  \citenamefont {Brunner},\ and\ \citenamefont {Gisin}}]{rosset2016nonlinear}%
  \BibitemOpen
  \bibfield  {author} {\bibinfo {author} {\bibfnamefont {D.}~\bibnamefont
  {Rosset}}, \bibinfo {author} {\bibfnamefont {C.}~\bibnamefont {Branciard}},
  \bibinfo {author} {\bibfnamefont {T.~J.}\ \bibnamefont {Barnea}}, \bibinfo
  {author} {\bibfnamefont {G.}~\bibnamefont {P{u}tz}}, \bibinfo {author}
  {\bibfnamefont {N.}~\bibnamefont {Brunner}},\ and\ \bibinfo {author}
  {\bibfnamefont {N.}~\bibnamefont {Gisin}},\ }\bibfield  {title} {\enquote
  {\bibinfo {title} {{Nonlinear Bell inequalities tailored for quantum
  networks}},}\ }\href {https://doi.org/10.1103/PhysRevLett.116.010403}
  {\bibfield  {journal} {\bibinfo  {journal} {Phys. Rev. Lett.}\ }\textbf
  {\bibinfo {volume} {116}},\ \bibinfo {pages} {010403} (\bibinfo {year}
  {2016})}\BibitemShut {NoStop}%
\bibitem [{\citenamefont {Andreoli}\ \emph {et~al.}(2017)\citenamefont
  {Andreoli}, \citenamefont {Carvacho}, \citenamefont {Santodonato},
  \citenamefont {Chaves},\ and\ \citenamefont
  {Sciarrino}}]{andreoli2017maximal}%
  \BibitemOpen
  \bibfield  {author} {\bibinfo {author} {\bibfnamefont {F.}~\bibnamefont
  {Andreoli}}, \bibinfo {author} {\bibfnamefont {G.}~\bibnamefont {Carvacho}},
  \bibinfo {author} {\bibfnamefont {L.}~\bibnamefont {Santodonato}}, \bibinfo
  {author} {\bibfnamefont {R.}~\bibnamefont {Chaves}},\ and\ \bibinfo {author}
  {\bibfnamefont {F.}~\bibnamefont {Sciarrino}},\ }\bibfield  {title} {\enquote
  {\bibinfo {title} {Maximal qubit violation of n-locality inequalities in a
  star-shaped quantum network},}\ }\href
  {https://doi.org/10.1088/1367-2630/aa8b9b} {\bibfield  {journal} {\bibinfo
  {journal} {New J. Phys.}\ }\textbf {\bibinfo {volume} {19}},\ \bibinfo
  {pages} {113020} (\bibinfo {year} {2017})}\BibitemShut {NoStop}%
\bibitem [{\citenamefont {Chaves}\ \emph {et~al.}(2018)\citenamefont {Chaves},
  \citenamefont {Carvacho}, \citenamefont {Agresti}, \citenamefont {Di~Giulio},
  \citenamefont {Aolita}, \citenamefont {Giacomini},\ and\ \citenamefont
  {Sciarrino}}]{chaves2018quantum}%
  \BibitemOpen
  \bibfield  {author} {\bibinfo {author} {\bibfnamefont {R.}~\bibnamefont
  {Chaves}}, \bibinfo {author} {\bibfnamefont {G.}~\bibnamefont {Carvacho}},
  \bibinfo {author} {\bibfnamefont {I.}~\bibnamefont {Agresti}}, \bibinfo
  {author} {\bibfnamefont {V.}~\bibnamefont {Di~Giulio}}, \bibinfo {author}
  {\bibfnamefont {L.}~\bibnamefont {Aolita}}, \bibinfo {author} {\bibfnamefont
  {S.}~\bibnamefont {Giacomini}},\ and\ \bibinfo {author} {\bibfnamefont
  {F.}~\bibnamefont {Sciarrino}},\ }\bibfield  {title} {\enquote {\bibinfo
  {title} {Quantum violation of an instrumental test},}\ }\href
  {https://doi.org/10.1038/s41567-017-0008-5} {\bibfield  {journal} {\bibinfo
  {journal} {Nature Phys.}\ }\textbf {\bibinfo {volume} {14}},\ \bibinfo
  {pages} {291} (\bibinfo {year} {2018})}\BibitemShut {NoStop}%
\bibitem [{\citenamefont {Renou}\ \emph {et~al.}(2019)\citenamefont {Renou},
  \citenamefont {B{a}umer}, \citenamefont {Boreiri}, \citenamefont {Brunner},
  \citenamefont {Gisin},\ and\ \citenamefont {Beigi}}]{renou2019genuine}%
  \BibitemOpen
  \bibfield  {author} {\bibinfo {author} {\bibfnamefont {M.-O.}\ \bibnamefont
  {Renou}}, \bibinfo {author} {\bibfnamefont {E.}~\bibnamefont {B{a}umer}},
  \bibinfo {author} {\bibfnamefont {S.}~\bibnamefont {Boreiri}}, \bibinfo
  {author} {\bibfnamefont {N.}~\bibnamefont {Brunner}}, \bibinfo {author}
  {\bibfnamefont {N.}~\bibnamefont {Gisin}},\ and\ \bibinfo {author}
  {\bibfnamefont {S.}~\bibnamefont {Beigi}},\ }\bibfield  {title} {\enquote
  {\bibinfo {title} {Genuine quantum nonlocality in the triangle network},}\
  }\href {https://doi.org/10.1103/PhysRevLett.123.140401} {\bibfield  {journal}
  {\bibinfo  {journal} {Phys. Rev. Lett.}\ }\textbf {\bibinfo {volume} {123}},\
  \bibinfo {pages} {140401} (\bibinfo {year} {2019})}\BibitemShut {NoStop}%
\bibitem [{\citenamefont {Pearl}(2009)}]{pearl2009causality}%
  \BibitemOpen
  \bibfield  {author} {\bibinfo {author} {\bibfnamefont {J.}~\bibnamefont
  {Pearl}},\ }\href@noop {} {\emph {\bibinfo {title} {Causality}}}\ (\bibinfo
  {publisher} {Cambridge university press},\ \bibinfo {year}
  {2009})\BibitemShut {NoStop}%
\bibitem [{Note1()}]{Note1}%
  \BibitemOpen
  \bibinfo {note} {There are actually \protect \emph {six} distinct latent
  variable causal structures which faithfully imply the no-signalling
  constraints; see \protect \citet [Fig.~24]{wood2015lesson}. Those six of
  those causal structures, however, comprise a single observational equivalence
  class, per~\protect \citet {Evans2015}. That is, all six causal structures
  imply the same Bell inequalities.}\BibitemShut {Stop}%
\bibitem [{\citenamefont {Jarrett}(1984)}]{jarrett1984physical}%
  \BibitemOpen
  \bibfield  {author} {\bibinfo {author} {\bibfnamefont {J.~P.}\ \bibnamefont
  {Jarrett}},\ }\bibfield  {title} {\enquote {\bibinfo {title} {{On the
  physical significance of the locality conditions in the Bell arguments}},}\
  }\href {https://doi.org/10.2307/2214878} {\bibfield  {journal} {\bibinfo
  {journal} {No{\^u}s}\ ,\ \bibinfo {pages} {569}} (\bibinfo {year}
  {1984})}\BibitemShut {NoStop}%
\bibitem [{\citenamefont {Fedotov}\ \emph {et~al.}(2003)\citenamefont
  {Fedotov}, \citenamefont {Harremo{e}s},\ and\ \citenamefont
  {Topsoe}}]{fedotov2003refinements}%
  \BibitemOpen
  \bibfield  {author} {\bibinfo {author} {\bibfnamefont {A.~A.}\ \bibnamefont
  {Fedotov}}, \bibinfo {author} {\bibfnamefont {P.}~\bibnamefont
  {Harremo{e}s}},\ and\ \bibinfo {author} {\bibfnamefont {F.}~\bibnamefont
  {Topsoe}},\ }\bibfield  {title} {\enquote {\bibinfo {title} {{Refinements of
  Pinsker's inequality}},}\ }\href {https://doi.org/10.1109/TIT.2003.811927}
  {\bibfield  {journal} {\bibinfo  {journal} {IEEE Trans. Info. Theo.}\
  }\textbf {\bibinfo {volume} {49}},\ \bibinfo {pages} {1491} (\bibinfo {year}
  {2003})}\BibitemShut {NoStop}%
\bibitem [{\citenamefont {Shalm}\ \emph {et~al.}(2015)\citenamefont {Shalm},
  \citenamefont {Meyer-Scott}, \citenamefont {Christensen}, \citenamefont
  {Bierhorst}, \citenamefont {Wayne}, \citenamefont {Stevens}, \citenamefont
  {Gerrits}, \citenamefont {Glancy}, \citenamefont {Hamel}, \citenamefont
  {Allman} \emph {et~al.}}]{Shalm}%
  \BibitemOpen
  \bibfield  {author} {\bibinfo {author} {\bibfnamefont {L.~K.}\ \bibnamefont
  {Shalm}}, \bibinfo {author} {\bibfnamefont {E.}~\bibnamefont {Meyer-Scott}},
  \bibinfo {author} {\bibfnamefont {B.~G.}\ \bibnamefont {Christensen}},
  \bibinfo {author} {\bibfnamefont {P.}~\bibnamefont {Bierhorst}}, \bibinfo
  {author} {\bibfnamefont {M.~A.}\ \bibnamefont {Wayne}}, \bibinfo {author}
  {\bibfnamefont {M.~J.}\ \bibnamefont {Stevens}}, \bibinfo {author}
  {\bibfnamefont {T.}~\bibnamefont {Gerrits}}, \bibinfo {author} {\bibfnamefont
  {S.}~\bibnamefont {Glancy}}, \bibinfo {author} {\bibfnamefont {D.~R.}\
  \bibnamefont {Hamel}}, \bibinfo {author} {\bibfnamefont {M.~S.}\ \bibnamefont
  {Allman}}, \emph {et~al.},\ }\bibfield  {title} {\enquote {\bibinfo {title}
  {Strong loophole-free test of local realism},}\ }\href
  {https://doi.org/10.1103/PhysRevLett.115.250402} {\bibfield  {journal}
  {\bibinfo  {journal} {Phys. Rev. Lett.}\ }\textbf {\bibinfo {volume} {115}},\
  \bibinfo {pages} {250402} (\bibinfo {year} {2015})}\BibitemShut {NoStop}%
\bibitem [{\citenamefont {Giustina}\ \emph {et~al.}(2015)\citenamefont
  {Giustina}, \citenamefont {Versteegh}, \citenamefont {Wengerowsky},
  \citenamefont {Handsteiner}, \citenamefont {Hochrainer}, \citenamefont
  {Phelan}, \citenamefont {Steinlechner}, \citenamefont {Kofler}, \citenamefont
  {Larsson}, \citenamefont {Abell{'a}n} \emph {et~al.}}]{Giustina}%
  \BibitemOpen
  \bibfield  {author} {\bibinfo {author} {\bibfnamefont {M.}~\bibnamefont
  {Giustina}}, \bibinfo {author} {\bibfnamefont {M.~A.}\ \bibnamefont
  {Versteegh}}, \bibinfo {author} {\bibfnamefont {S.}~\bibnamefont
  {Wengerowsky}}, \bibinfo {author} {\bibfnamefont {J.}~\bibnamefont
  {Handsteiner}}, \bibinfo {author} {\bibfnamefont {A.}~\bibnamefont
  {Hochrainer}}, \bibinfo {author} {\bibfnamefont {K.}~\bibnamefont {Phelan}},
  \bibinfo {author} {\bibfnamefont {F.}~\bibnamefont {Steinlechner}}, \bibinfo
  {author} {\bibfnamefont {J.}~\bibnamefont {Kofler}}, \bibinfo {author}
  {\bibfnamefont {J.-A.}\ \bibnamefont {Larsson}}, \bibinfo {author}
  {\bibfnamefont {C.}~\bibnamefont {Abell{'a}n}}, \emph {et~al.},\ }\bibfield
  {title} {\enquote {\bibinfo {title} {{Significant-loophole-free test of
  Bell’s theorem with entangled photons}},}\ }\href
  {https://doi.org/10.1103/PhysRevLett.115.250401} {\bibfield  {journal}
  {\bibinfo  {journal} {Phys. Rev. Lett.}\ }\textbf {\bibinfo {volume} {115}},\
  \bibinfo {pages} {250401} (\bibinfo {year} {2015})}\BibitemShut {NoStop}%
\bibitem [{\citenamefont {Hensen}\ \emph {et~al.}(2015)\citenamefont {Hensen},
  \citenamefont {Bernien}, \citenamefont {Dr{'e}au}, \citenamefont {Reiserer},
  \citenamefont {Kalb}, \citenamefont {Blok}, \citenamefont {Ruitenberg},
  \citenamefont {Vermeulen}, \citenamefont {Schouten}, \citenamefont
  {Abell{'a}n}, \citenamefont {Amaya}, \citenamefont {Pruneri}, \citenamefont
  {Mitchell}, \citenamefont {Markham}, \citenamefont {Twitchen}, \citenamefont
  {Elkouss}, \citenamefont {Wehner}, \citenamefont {Taminiau},\ and\
  \citenamefont {Hanson}}]{Hensen}%
  \BibitemOpen
  \bibfield  {author} {\bibinfo {author} {\bibfnamefont {B.}~\bibnamefont
  {Hensen}}, \bibinfo {author} {\bibfnamefont {H.}~\bibnamefont {Bernien}},
  \bibinfo {author} {\bibfnamefont {A.~E.}\ \bibnamefont {Dr{'e}au}}, \bibinfo
  {author} {\bibfnamefont {A.}~\bibnamefont {Reiserer}}, \bibinfo {author}
  {\bibfnamefont {N.}~\bibnamefont {Kalb}}, \bibinfo {author} {\bibfnamefont
  {M.~S.}\ \bibnamefont {Blok}}, \bibinfo {author} {\bibfnamefont
  {J.}~\bibnamefont {Ruitenberg}}, \bibinfo {author} {\bibfnamefont {R.~F.}\
  \bibnamefont {Vermeulen}}, \bibinfo {author} {\bibfnamefont {R.~N.}\
  \bibnamefont {Schouten}}, \bibinfo {author} {\bibfnamefont {C.}~\bibnamefont
  {Abell{'a}n}}, \bibinfo {author} {\bibfnamefont {W.}~\bibnamefont {Amaya}},
  \bibinfo {author} {\bibfnamefont {V.}~\bibnamefont {Pruneri}}, \bibinfo
  {author} {\bibfnamefont {M.~W.}\ \bibnamefont {Mitchell}}, \bibinfo {author}
  {\bibfnamefont {M.}~\bibnamefont {Markham}}, \bibinfo {author} {\bibfnamefont
  {D.~J.}\ \bibnamefont {Twitchen}}, \bibinfo {author} {\bibfnamefont
  {D.}~\bibnamefont {Elkouss}}, \bibinfo {author} {\bibfnamefont
  {S.}~\bibnamefont {Wehner}}, \bibinfo {author} {\bibfnamefont {T.~H.}\
  \bibnamefont {Taminiau}},\ and\ \bibinfo {author} {\bibfnamefont
  {R.}~\bibnamefont {Hanson}},\ }\bibfield  {title} {\enquote {\bibinfo {title}
  {{Loophole-free Bell inequality violation using electron spins separated by
  1.3 kilometres}},}\ }\href {https://doi.org/10.1038/nature15759} {\bibfield
  {journal} {\bibinfo  {journal} {Nature}\ }\textbf {\bibinfo {volume} {526}},\
  \bibinfo {pages} {682} (\bibinfo {year} {2015})}\BibitemShut {NoStop}%
\bibitem [{\citenamefont {Carvacho}\ \emph {et~al.}(2017)\citenamefont
  {Carvacho}, \citenamefont {Andreoli}, \citenamefont {Santodonato},
  \citenamefont {Bentivegna}, \citenamefont {Chaves},\ and\ \citenamefont
  {Sciarrino}}]{carvacho2017experimental}%
  \BibitemOpen
  \bibfield  {author} {\bibinfo {author} {\bibfnamefont {G.}~\bibnamefont
  {Carvacho}}, \bibinfo {author} {\bibfnamefont {F.}~\bibnamefont {Andreoli}},
  \bibinfo {author} {\bibfnamefont {L.}~\bibnamefont {Santodonato}}, \bibinfo
  {author} {\bibfnamefont {M.}~\bibnamefont {Bentivegna}}, \bibinfo {author}
  {\bibfnamefont {R.}~\bibnamefont {Chaves}},\ and\ \bibinfo {author}
  {\bibfnamefont {F.}~\bibnamefont {Sciarrino}},\ }\bibfield  {title} {\enquote
  {\bibinfo {title} {Experimental violation of local causality in a quantum
  network},}\ }\href {https://doi.org/10.1038/ncomms14775} {\bibfield
  {journal} {\bibinfo  {journal} {Nature Comm.}\ }\textbf {\bibinfo {volume}
  {8}},\ \bibinfo {pages} {1} (\bibinfo {year} {2017})}\BibitemShut {NoStop}%
\bibitem [{\citenamefont {Chaves}\ \emph
  {et~al.}(2014{\natexlab{a}})\citenamefont {Chaves}, \citenamefont {Luft},
  \citenamefont {Maciel}, \citenamefont {Gross}, \citenamefont {Janzing},\ and\
  \citenamefont {Schölkopf}}]{chaves2014inferring}%
  \BibitemOpen
  \bibfield  {author} {\bibinfo {author} {\bibfnamefont {R.}~\bibnamefont
  {Chaves}}, \bibinfo {author} {\bibfnamefont {L.}~\bibnamefont {Luft}},
  \bibinfo {author} {\bibfnamefont {T.~O.}\ \bibnamefont {Maciel}}, \bibinfo
  {author} {\bibfnamefont {D.}~\bibnamefont {Gross}}, \bibinfo {author}
  {\bibfnamefont {D.}~\bibnamefont {Janzing}},\ and\ \bibinfo {author}
  {\bibfnamefont {B.}~\bibnamefont {Schölkopf}},\ }\href@noop {} {\enquote
  {\bibinfo {title} {Inferring latent structures via information
  inequalities},}\ } (\bibinfo {year} {2014}{\natexlab{a}}),\ \Eprint
  {https://arxiv.org/abs/1407.2256} {arXiv:1407.2256 [stat.ML]} \BibitemShut
  {NoStop}%
\bibitem [{\citenamefont {Wolfe}\ \emph {et~al.}(2019)\citenamefont {Wolfe},
  \citenamefont {Spekkens},\ and\ \citenamefont {Fritz}}]{wolfe2019inflation}%
  \BibitemOpen
  \bibfield  {author} {\bibinfo {author} {\bibfnamefont {E.}~\bibnamefont
  {Wolfe}}, \bibinfo {author} {\bibfnamefont {R.~W.}\ \bibnamefont
  {Spekkens}},\ and\ \bibinfo {author} {\bibfnamefont {T.}~\bibnamefont
  {Fritz}},\ }\bibfield  {title} {\enquote {\bibinfo {title} {{The Inflation
  Technique for Causal Inference with Latent Variables}},}\ }\href
  {https://doi.org/10.1515/jci-2017-0020} {\bibfield  {journal} {\bibinfo
  {journal} {J. Causal Inference}\ }\textbf {\bibinfo {volume} {7}},\ \bibinfo
  {pages} {0020} (\bibinfo {year} {2019})}\BibitemShut {NoStop}%
\bibitem [{\citenamefont {Fritz}\ and\ \citenamefont
  {Chaves}(2012)}]{fritz2012entropic}%
  \BibitemOpen
  \bibfield  {author} {\bibinfo {author} {\bibfnamefont {T.}~\bibnamefont
  {Fritz}}\ and\ \bibinfo {author} {\bibfnamefont {R.}~\bibnamefont {Chaves}},\
  }\bibfield  {title} {\enquote {\bibinfo {title} {Entropic inequalities and
  marginal problems},}\ }\href {https://doi.org/10.1109/TIT.2012.2222863}
  {\bibfield  {journal} {\bibinfo  {journal} {IEEE Trans. Info. Theo.}\
  }\textbf {\bibinfo {volume} {59}},\ \bibinfo {pages} {803} (\bibinfo {year}
  {2012})}\BibitemShut {NoStop}%
\bibitem [{\citenamefont {Chaves}\ \emph
  {et~al.}(2014{\natexlab{b}})\citenamefont {Chaves}, \citenamefont {Luft},\
  and\ \citenamefont {Gross}}]{chaves2014causal}%
  \BibitemOpen
  \bibfield  {author} {\bibinfo {author} {\bibfnamefont {R.}~\bibnamefont
  {Chaves}}, \bibinfo {author} {\bibfnamefont {L.}~\bibnamefont {Luft}},\ and\
  \bibinfo {author} {\bibfnamefont {D.}~\bibnamefont {Gross}},\ }\bibfield
  {title} {\enquote {\bibinfo {title} {Causal structures from entropic
  information geometry and novel scenarios},}\ }\href
  {https://doi.org/10.1088/1367-2630/16/4/043001} {\bibfield  {journal}
  {\bibinfo  {journal} {New J. Phys.}\ }\textbf {\bibinfo {volume} {16}},\
  \bibinfo {pages} {043001} (\bibinfo {year} {2014}{\natexlab{b}})}\BibitemShut
  {NoStop}%
\bibitem [{\citenamefont {Chaves}\ \emph
  {et~al.}(2015{\natexlab{b}})\citenamefont {Chaves}, \citenamefont {Majenz},\
  and\ \citenamefont {Gross}}]{chaves2015information}%
  \BibitemOpen
  \bibfield  {author} {\bibinfo {author} {\bibfnamefont {R.}~\bibnamefont
  {Chaves}}, \bibinfo {author} {\bibfnamefont {C.}~\bibnamefont {Majenz}},\
  and\ \bibinfo {author} {\bibfnamefont {D.}~\bibnamefont {Gross}},\ }\bibfield
   {title} {\enquote {\bibinfo {title} {Information--theoretic implications of
  quantum causal structures},}\ }\href {https://doi.org/10.1038/ncomms6766}
  {\bibfield  {journal} {\bibinfo  {journal} {Nature Comm.}\ }\textbf {\bibinfo
  {volume} {6}},\ \bibinfo {pages} {1} (\bibinfo {year}
  {2015}{\natexlab{b}})}\BibitemShut {NoStop}%
\bibitem [{\citenamefont {Yeung}(2008)}]{yeung2008information}%
  \BibitemOpen
  \bibfield  {author} {\bibinfo {author} {\bibfnamefont {R.~W.}\ \bibnamefont
  {Yeung}},\ }\href {https://doi.org/10.1007/978-0-387-79234-7} {\emph
  {\bibinfo {title} {Information theory and network coding}}}\ (\bibinfo
  {publisher} {Springer Science \& Business Media},\ \bibinfo {year}
  {2008})\BibitemShut {NoStop}%
\bibitem [{\citenamefont {Williams}(1986)}]{williams1986fourier}%
  \BibitemOpen
  \bibfield  {author} {\bibinfo {author} {\bibfnamefont {H.~P.}\ \bibnamefont
  {Williams}},\ }\bibfield  {title} {\enquote {\bibinfo {title} {Fourier's
  method of linear programming and its dual},}\ }\href
  {https://doi.org/10.2307/2322281} {\bibfield  {journal} {\bibinfo  {journal}
  {American Mathematical Monthly}\ }\textbf {\bibinfo {volume} {93}},\ \bibinfo
  {pages} {681} (\bibinfo {year} {1986})}\BibitemShut {NoStop}%
\bibitem [{\citenamefont {Collins}\ \emph {et~al.}(2002)\citenamefont
  {Collins}, \citenamefont {Gisin}, \citenamefont {Linden}, \citenamefont
  {Massar},\ and\ \citenamefont {Popescu}}]{PhysRevLett.88.040404}%
  \BibitemOpen
  \bibfield  {author} {\bibinfo {author} {\bibfnamefont {D.}~\bibnamefont
  {Collins}}, \bibinfo {author} {\bibfnamefont {N.}~\bibnamefont {Gisin}},
  \bibinfo {author} {\bibfnamefont {N.}~\bibnamefont {Linden}}, \bibinfo
  {author} {\bibfnamefont {S.}~\bibnamefont {Massar}},\ and\ \bibinfo {author}
  {\bibfnamefont {S.}~\bibnamefont {Popescu}},\ }\bibfield  {title} {\enquote
  {\bibinfo {title} {Bell inequalities for arbitrarily high-dimensional
  systems},}\ }\href {https://doi.org/10.1103PhysRevLett.88.040404} {\bibfield
  {journal} {\bibinfo  {journal} {Phys. Rev. Lett.}\ }\textbf {\bibinfo
  {volume} {88}},\ \bibinfo {pages} {040404} (\bibinfo {year}
  {2002})}\BibitemShut {NoStop}%
\bibitem [{\citenamefont {Ardehali}(1992)}]{ardehali1992bell}%
  \BibitemOpen
  \bibfield  {author} {\bibinfo {author} {\bibfnamefont {M.}~\bibnamefont
  {Ardehali}},\ }\bibfield  {title} {\enquote {\bibinfo {title} {Bell
  inequalities with a magnitude of violation that grows exponentially with the
  number of particles},}\ }\href {https://doi.org/10.1103/PhysRevA.46.5375}
  {\bibfield  {journal} {\bibinfo  {journal} {Phys. Rev. A}\ }\textbf {\bibinfo
  {volume} {46}},\ \bibinfo {pages} {5375} (\bibinfo {year}
  {1992})}\BibitemShut {NoStop}%
\bibitem [{\citenamefont {Belinski{ui}}\ and\ \citenamefont
  {Klyshko}(1993)}]{belinskiui1993}%
  \BibitemOpen
  \bibfield  {author} {\bibinfo {author} {\bibfnamefont {A.}~\bibnamefont
  {Belinski{ui}}}\ and\ \bibinfo {author} {\bibfnamefont {D.~N.}\ \bibnamefont
  {Klyshko}},\ }\bibfield  {title} {\enquote {\bibinfo {title} {{Interference
  of light and Bell's theorem}},}\ }\href
  {https://doi.org/10.1070/PU1993v036n08ABEH002299} {\bibfield  {journal}
  {\bibinfo  {journal} {Physics-Uspekhi}\ }\textbf {\bibinfo {volume} {36}},\
  \bibinfo {pages} {653} (\bibinfo {year} {1993})}\BibitemShut {NoStop}%
\bibitem [{Note2()}]{Note2}%
  \BibitemOpen
  \bibinfo {note} {Formally, the latent projection of Fig.~\ref {fig:Bell3} is
  observationally equivalent to a different DAG wherein $A$ and $X$ are
  latent-common-cause connected and have identical causal ancestry, per
  Ref.~\cite {Evans2015}. This justifies merging them into a single composite
  variable without loss of generality.}\BibitemShut {Stop}%
\bibitem [{\citenamefont {Fraser}\ and\ \citenamefont
  {Wolfe}(2018)}]{PhysRevA.98.022113}%
  \BibitemOpen
  \bibfield  {author} {\bibinfo {author} {\bibfnamefont {T.~C.}\ \bibnamefont
  {Fraser}}\ and\ \bibinfo {author} {\bibfnamefont {E.}~\bibnamefont {Wolfe}},\
  }\bibfield  {title} {\enquote {\bibinfo {title} {Causal compatibility
  inequalities admitting quantum violations in the triangle structure},}\
  }\href {https://doi.org/10.1103PhysRevA.98.022113} {\bibfield  {journal}
  {\bibinfo  {journal} {Phys. Rev. A}\ }\textbf {\bibinfo {volume} {98}},\
  \bibinfo {pages} {022113} (\bibinfo {year} {2018})}\BibitemShut {NoStop}%
\bibitem [{\citenamefont {\ifmmode \check{S}\else
  \v{S}\fi{}upi\ifmmode~\acute{c}\else \'{c}\fi{}}\ \emph
  {et~al.}(2020)\citenamefont {\ifmmode \check{S}\else
  \v{S}\fi{}upi\ifmmode~\acute{c}\else \'{c}\fi{}}, \citenamefont {Bancal},\
  and\ \citenamefont {Brunner}}]{vsupic2020quantum}%
  \BibitemOpen
  \bibfield  {author} {\bibinfo {author} {\bibfnamefont {I.}~\bibnamefont
  {\ifmmode \check{S}\else \v{S}\fi{}upi\ifmmode~\acute{c}\else \'{c}\fi{}}},
  \bibinfo {author} {\bibfnamefont {J.-D.}\ \bibnamefont {Bancal}},\ and\
  \bibinfo {author} {\bibfnamefont {N.}~\bibnamefont {Brunner}},\ }\bibfield
  {title} {\enquote {\bibinfo {title} {{Quantum Nonlocality in Networks Can Be
  Demonstrated with an Arbitrarily Small Level of Independence between the
  Sources}},}\ }\href {https://doi.org/10.1103/PhysRevLett.125.240403}
  {\bibfield  {journal} {\bibinfo  {journal} {Phys. Rev. Lett.}\ }\textbf
  {\bibinfo {volume} {125}},\ \bibinfo {pages} {240403} (\bibinfo {year}
  {2020})}\BibitemShut {NoStop}%
\bibitem [{\citenamefont {{Kela}}\ \emph {et~al.}(2020)\citenamefont {{Kela}},
  \citenamefont {{Von Prillwitz}}, \citenamefont {{Åberg}}, \citenamefont
  {{Chaves}},\ and\ \citenamefont {{Gross}}}]{aaberg2020semidefinite}%
  \BibitemOpen
  \bibfield  {author} {\bibinfo {author} {\bibfnamefont {A.}~\bibnamefont
  {{Kela}}}, \bibinfo {author} {\bibfnamefont {K.}~\bibnamefont {{Von
  Prillwitz}}}, \bibinfo {author} {\bibfnamefont {J.}~\bibnamefont {{Åberg}}},
  \bibinfo {author} {\bibfnamefont {R.}~\bibnamefont {{Chaves}}},\ and\
  \bibinfo {author} {\bibfnamefont {D.}~\bibnamefont {{Gross}}},\ }\bibfield
  {title} {\enquote {\bibinfo {title} {Semidefinite tests for latent causal
  structures},}\ }\href {https://doi.org/10.1109TIT.2019.2935755} {\bibfield
  {journal} {\bibinfo  {journal} {IEEE Trans. Info. Theo.}\ }\textbf {\bibinfo
  {volume} {66}},\ \bibinfo {pages} {339} (\bibinfo {year} {2020})}\BibitemShut
  {NoStop}%
\bibitem [{\citenamefont {Åberg}\ \emph {et~al.}(2020)\citenamefont {Åberg},
  \citenamefont {Nery}, \citenamefont {Duarte},\ and\ \citenamefont
  {Chaves}}]{PhysRevLett.125.110505}%
  \BibitemOpen
  \bibfield  {author} {\bibinfo {author} {\bibfnamefont {J.}~\bibnamefont
  {Åberg}}, \bibinfo {author} {\bibfnamefont {R.}~\bibnamefont {Nery}},
  \bibinfo {author} {\bibfnamefont {C.}~\bibnamefont {Duarte}},\ and\ \bibinfo
  {author} {\bibfnamefont {R.}~\bibnamefont {Chaves}},\ }\bibfield  {title}
  {\enquote {\bibinfo {title} {Semidefinite tests for quantum network
  topologies},}\ }\href {https://doi.org/10.1103PhysRevLett.125.110505}
  {\bibfield  {journal} {\bibinfo  {journal} {Phys. Rev. Lett.}\ }\textbf
  {\bibinfo {volume} {125}},\ \bibinfo {pages} {110505} (\bibinfo {year}
  {2020})}\BibitemShut {NoStop}%
\bibitem [{\citenamefont {Mukherjee}\ \emph {et~al.}(2015)\citenamefont
  {Mukherjee}, \citenamefont {Paul},\ and\ \citenamefont
  {Sarkar}}]{mukherjee2015correlations}%
  \BibitemOpen
  \bibfield  {author} {\bibinfo {author} {\bibfnamefont {K.}~\bibnamefont
  {Mukherjee}}, \bibinfo {author} {\bibfnamefont {B.}~\bibnamefont {Paul}},\
  and\ \bibinfo {author} {\bibfnamefont {D.}~\bibnamefont {Sarkar}},\
  }\bibfield  {title} {\enquote {\bibinfo {title} {Correlations in n-local
  scenario},}\ }\href {https://doi.org/10.1007/s11128-015-0971-7} {\bibfield
  {journal} {\bibinfo  {journal} {Quant. Info. Proc.}\ }\textbf {\bibinfo
  {volume} {14}},\ \bibinfo {pages} {2025} (\bibinfo {year}
  {2015})}\BibitemShut {NoStop}%
\bibitem [{\citenamefont {Popescu}\ and\ \citenamefont
  {Rohrlich}(1994)}]{popescu1994quantum}%
  \BibitemOpen
  \bibfield  {author} {\bibinfo {author} {\bibfnamefont {S.}~\bibnamefont
  {Popescu}}\ and\ \bibinfo {author} {\bibfnamefont {D.}~\bibnamefont
  {Rohrlich}},\ }\bibfield  {title} {\enquote {\bibinfo {title} {Quantum
  nonlocality as an axiom},}\ }\href {https://doi.org/10.1007/BF02058098}
  {\bibfield  {journal} {\bibinfo  {journal} {Found. Phys.}\ }\textbf {\bibinfo
  {volume} {24}},\ \bibinfo {pages} {379} (\bibinfo {year} {1994})}\BibitemShut
  {NoStop}%
\bibitem [{\citenamefont {Zukowski}\ \emph {et~al.}(1993)\citenamefont
  {Zukowski}, \citenamefont {Zeilinger}, \citenamefont {Horne},\ and\
  \citenamefont {Ekert}}]{zukowski1993event}%
  \BibitemOpen
  \bibfield  {author} {\bibinfo {author} {\bibfnamefont {M.}~\bibnamefont
  {Zukowski}}, \bibinfo {author} {\bibfnamefont {A.}~\bibnamefont {Zeilinger}},
  \bibinfo {author} {\bibfnamefont {M.~A.}\ \bibnamefont {Horne}},\ and\
  \bibinfo {author} {\bibfnamefont {A.~K.}\ \bibnamefont {Ekert}},\ }\bibfield
  {title} {\enquote {\bibinfo {title} {{Event-ready-detectors Bell experiment
  via entanglement swapping.}}}\ }\bibfield  {journal} {\bibinfo  {journal}
  {Phys. Rev. Lett.}\ }\textbf {\bibinfo {volume} {71}},\ \href
  {https://doi.org/10.1103/PhysRevLett.71.4287} {10.1103/PhysRevLett.71.4287}
  (\bibinfo {year} {1993})\BibitemShut {NoStop}%
\bibitem [{\citenamefont {Pozas-Kerstjens}\ \emph {et~al.}(2019)\citenamefont
  {Pozas-Kerstjens}, \citenamefont {Rabelo}, \citenamefont {Rudnicki},
  \citenamefont {Chaves}, \citenamefont {Cavalcanti}, \citenamefont
  {Navascu{'e}s},\ and\ \citenamefont {Ac{'i}n}}]{pozas2019bounding}%
  \BibitemOpen
  \bibfield  {author} {\bibinfo {author} {\bibfnamefont {A.}~\bibnamefont
  {Pozas-Kerstjens}}, \bibinfo {author} {\bibfnamefont {R.}~\bibnamefont
  {Rabelo}}, \bibinfo {author} {\bibfnamefont {L.}~\bibnamefont {Rudnicki}},
  \bibinfo {author} {\bibfnamefont {R.}~\bibnamefont {Chaves}}, \bibinfo
  {author} {\bibfnamefont {D.}~\bibnamefont {Cavalcanti}}, \bibinfo {author}
  {\bibfnamefont {M.}~\bibnamefont {Navascu{'e}s}},\ and\ \bibinfo {author}
  {\bibfnamefont {A.}~\bibnamefont {Ac{'i}n}},\ }\bibfield  {title} {\enquote
  {\bibinfo {title} {Bounding the sets of classical and quantum correlations in
  networks},}\ }\href {https://doi.org/10.1103/PhysRevLett.123.140503}
  {\bibfield  {journal} {\bibinfo  {journal} {Phys. Rev. Lett.}\ }\textbf
  {\bibinfo {volume} {123}},\ \bibinfo {pages} {140503} (\bibinfo {year}
  {2019})}\BibitemShut {NoStop}%
\bibitem [{\citenamefont {Miklin}\ \emph {et~al.}(2021)\citenamefont {Miklin},
  \citenamefont {Gachechiladze}, \citenamefont {Moreno},\ and\ \citenamefont
  {Chaves}}]{morenoetal}%
  \BibitemOpen
  \bibfield  {author} {\bibinfo {author} {\bibfnamefont {N.}~\bibnamefont
  {Miklin}}, \bibinfo {author} {\bibfnamefont {M.}~\bibnamefont
  {Gachechiladze}}, \bibinfo {author} {\bibfnamefont {G.}~\bibnamefont
  {Moreno}},\ and\ \bibinfo {author} {\bibfnamefont {R.}~\bibnamefont
  {Chaves}},\ }\href@noop {} {\enquote {\bibinfo {title} {Causal inference with
  imperfect instrumental variables},}\ } (\bibinfo {year} {2021}),\ \Eprint
  {https://arxiv.org/abs/2111.03029} {arXiv:2111.03029 [stat.ML]} \BibitemShut
  {NoStop}%
\bibitem [{\citenamefont {Kédagni}\ and\ \citenamefont
  {Mourifié}(2020)}]{kedagni2020generalized}%
  \BibitemOpen
  \bibfield  {author} {\bibinfo {author} {\bibfnamefont {D.}~\bibnamefont
  {Kédagni}}\ and\ \bibinfo {author} {\bibfnamefont {I.}~\bibnamefont
  {Mourifié}},\ }\bibfield  {title} {\enquote {\bibinfo {title} {Generalized
  instrumental inequalities testing the instrumental variable independence
  assumption},}\ }\href {https://doi.org/10.1093/biomet/asaa003} {\bibfield
  {journal} {\bibinfo  {journal} {Biometrika}\ }\textbf {\bibinfo {volume}
  {107}},\ \bibinfo {pages} {661} (\bibinfo {year} {2020})}\BibitemShut
  {NoStop}%
\bibitem [{\citenamefont {Balke}\ and\ \citenamefont
  {Pearl}(1997)}]{balke1997bounds}%
  \BibitemOpen
  \bibfield  {author} {\bibinfo {author} {\bibfnamefont {A.}~\bibnamefont
  {Balke}}\ and\ \bibinfo {author} {\bibfnamefont {J.}~\bibnamefont {Pearl}},\
  }\bibfield  {title} {\enquote {\bibinfo {title} {Bounds on treatment effects
  from studies with imperfect compliance},}\ }\href
  {https://doi.org/10.1080/01621459.1997.10474074} {\bibfield  {journal}
  {\bibinfo  {journal} {J. Am. Stat. Ass.}\ }\textbf {\bibinfo {volume} {92}},\
  \bibinfo {pages} {1171} (\bibinfo {year} {1997})}\BibitemShut {NoStop}%
\bibitem [{\citenamefont {Evans}(2015)}]{Evans2015}%
  \BibitemOpen
  \bibfield  {author} {\bibinfo {author} {\bibfnamefont {R.~J.}\ \bibnamefont
  {Evans}},\ }\bibfield  {title} {\enquote {\bibinfo {title} {{Graphs for
  Margins of Bayesian Networks}},}\ }\href {https://doi.org/10.1111/sjos.12194}
  {\bibfield  {journal} {\bibinfo  {journal} {Scandinavian J. Stat.}\ }\textbf
  {\bibinfo {volume} {43}},\ \bibinfo {pages} {625} (\bibinfo {year}
  {2015})}\BibitemShut {NoStop}%
\end{thebibliography}%


%

\onecolumngrid
\appendix
\section{Mutual information bound for Mermin inequality}

In this appendix, we follow similar steps to those introduced in reference \cite{hall2020measurement} and we derive a tight informational lower bound of the measurement dependence demanded for a given violation of Mermin's inequality. To that aim, we first show how any violation of this inequality imposes bounds on the distribution $p(x,y,z|\lambda)$, then, considering those bounds, we minimize the mutual information $I(X,Y,Z:\Lambda)$.

\label{app: Mermin}
In this scenario, the probability $p(a,b,c|x,y,z)$, for $a, b, c, x, y, z \in \{0,1\}$, can be factorized as
\begin{eqnarray}
\label{eq: prob factorization}
p(a,b,c|x,y,z) & = &
\sum_{\lambda\in\Lambda}p(\lambda|x,y,z)\delta_{a,f_{\lambda}(x)}\delta_{b,f_{\lambda}(y)}\delta_{c,f_{\lambda}(z)}
\end{eqnarray}

In which case the full correlators are given by,
\begin{eqnarray}
\label{eq: full correlator}
\langle A_x B_y C_z \rangle & = & \sum_{\lambda}A_{x}(\lambda)B_y(\lambda)C_z(\lambda)p(\lambda|x,y,z),
\end{eqnarray}
for $A_x,B_y,C_z\in\{-1,1\}$, given by
\begin{eqnarray}
\label{eq: A_x, B_y and C_z}
\nonumber
A_x(\lambda) & = & \sum_{a}(-1)^{a}\delta_{a,f_{\lambda}(x)}\\
\nonumber
B_y(\lambda) & = & \sum_{b}(-1)^{b}\delta_{b,f_{\lambda}(y)}\\
C_z(\lambda) & = & \sum_{c}(-1)^{c}\delta_{c,f_{\lambda}(z)}
\end{eqnarray}

Recall that Mermin's inequality \cite{mermin1990,ardehali1992bell,belinskiui1993}  reads
\begin{eqnarray}
\nonumber
M = \sum_{\left(x,y,z\right) \in O}(-1)^{xyz}\langle A_xB_yC_z\rangle \leq 2,
\end{eqnarray}
in which $O$ stand for the set of odd total parity outcomes, i.e. $O=\{(1,0,0),\ (0,1,0),\ (0,0,1),\ (1,1,1)\}$.

Substituting via \eqref{eq: full correlator}, applying Bayes' Rule such that $p(\lambda|xyz)=p(x,y,z|\lambda)p(\lambda)/p(x,y,z)$, and setting $p(x,y,z) = \frac{1}{8}$ we obtain,
\begin{eqnarray}
\nonumber
M & = & 8\sum_{\lambda}\sum_{\left(x,y,z\right)\in O}(-1)^{xyz}A_{x}(\lambda)B_y(\lambda)C_z(\lambda)p(x,y,z|\lambda)p(\lambda).
\end{eqnarray}
By defining the sets $\mathcal{L}_{\mu,\eta,\nu}$, for $\mu,\eta,\nu\in\{0,1\}$, as
\begin{eqnarray}
\label{eq: partition}
\mathcal{L}_{\mu,\eta,\nu} = \left\{\lambda\in\Lambda |  A_1(\lambda) = (-1)^{\mu}A_0(\lambda),\  B_1(\lambda)  = (-1)^{\eta}B_0(\lambda),\ C_1(\lambda) = (-1)^{\nu}C_0(\lambda)\right\},
\end{eqnarray}
we can rewrite $M$ as follows,
\begin{eqnarray}
\label{eq: M temp}
M & = & 8\sum_{\mu,\eta,\nu}\sum_{\lambda\in\mathcal{L}_{\mu,\eta,\nu}}A_0(\lambda)B_0(\lambda)C_0(\lambda)p(\lambda)\Gamma_{\mu,\eta,\nu}
\end{eqnarray}
in which
\begin{eqnarray}
\nonumber
\Gamma_{\mu,\eta,\nu} & = & \sum_{\left(x+y+z\right)\in O}(-1)^{xyz + \mu x + \eta y + \nu z}p(x,y,z|\lambda)
\end{eqnarray}
Recognizing that $p(O|\lambda) = p(0,0,1|\lambda) + p(0,1,0|\lambda) + p(1,0,0|\lambda) + p(1,1,1|\lambda)$, we can write
\begin{eqnarray}
\label{eq: Gamma}
\Gamma_{\mu,\eta,\nu}  =  (-1)^{\mu\eta + \mu\nu + \eta\nu}\left[p(O|\lambda) - 2p(\eta\oplus\nu\oplus 1,\mu\oplus\nu\oplus 1,\mu\oplus\eta\oplus 1|\lambda)\right],
\end{eqnarray}
in which $\oplus$ represents sum mod 2.

Using the above result in equation \ref{eq: M temp},
\begin{eqnarray}
\nonumber
M & = & \sum_{\mu,\eta,\nu}\sum_{\lambda\in\mathcal{L}_{\mu,\eta,\nu}}(-1)^{\mu\eta + \mu\nu + \eta\nu}A_0(\lambda)B_0(\lambda)C_0(\lambda)p(\lambda)p(O|\lambda)\left[1 - 2p(x=\eta\oplus\nu\oplus 1,y = \mu\oplus\nu\oplus 1,z = \mu\oplus\eta\oplus 1|\lambda,O)\right].
\end{eqnarray}

By choosing $A_0(\lambda)B_0(\lambda)C_0(\lambda) = \mbox{sgn}\left((-1)^{\mu\eta + \mu\nu + \eta\nu}\left[1 - 2 p(x=\eta\oplus\nu\oplus 1,y = \mu\oplus\nu\oplus 1,z = \mu\oplus\eta\oplus 1|\lambda,O)\right]\right)$, for $\lambda\in\mathcal{L}_{\mu,\eta,\nu}$, we get
\begin{eqnarray}
\nonumber
M & \leq & 8\sum_{\mu,\eta,\nu}\sum_{\lambda\in\mathcal{L}_{\mu,\eta,\nu}}p(\lambda)p(O|\lambda)\left|1 - 2 p(x=\eta\oplus\nu\oplus 1,y = \mu\oplus\nu\oplus 1,z = \mu\oplus\eta\oplus 1|\lambda,O)\right|
\end{eqnarray}

For $(x,y,z)\in O$, let $p_{min}\leq\frac{1}{4}$ be the infimun of $p(x,y,z|\lambda,O)$, then,
\begin{eqnarray}
\nonumber
M & \leq & 8\sum_{\mu,\eta,\nu}\sum_{\lambda\in\mathcal{L}_{\mu,\eta,\nu}}p(\lambda)p(O|\lambda)\left(1 - 2p_{min}\right)\\
\nonumber
& \leq & 8\left(1 - 2p_{min}\right)p(O)\\
\nonumber
& \leq & 4 - 8p_{min},
\end{eqnarray}
or yet,
\begin{eqnarray}
\label{eq: bound p_min}
p_{min}\leq\frac{4 - M}{8}
\end{eqnarray}
saturation is achieved for,
\begin{eqnarray}
p(x=\eta\oplus\nu\oplus 1,y = \mu\oplus\nu\oplus 1,z = \mu\oplus\eta\oplus 1|\lambda,O) = p_{min}\quad\mbox{or}\quad 1 - p_{min}
\end{eqnarray}

The mutual information reads,
\begin{eqnarray}
\nonumber
I(X,Y,Z:\Lambda) & = & H(X,Y,Z) - \sum_{\mu,\eta,\nu}\sum_{\lambda\in\mathcal{L}_{\mu,\eta,\nu}}p(\lambda)H_{\lambda}(X,Y,Z)\\
& = & 3 - \sum_{\mu,\eta,\nu}\sum_{\lambda\in\mathcal{L}_{\mu,\eta,\nu}}p(\lambda)H_{\lambda}(X,Y,Z),
\end{eqnarray}
and is minimized by making $H_{\lambda}(X,Y,Z)$ as large as possible, which is achieved with distributions $p(x,y,z|\lambda)$ as close to a uniform distribution as possible. Thus,
\begin{eqnarray}
p(x,y,z|\lambda) = p_{min} = \frac{4 - M}{8}.
\end{eqnarray}
Notice above that we choose $p_{min}$ insted of $1-p_{min}$, otherwise we would end up with a distribution further from the uniform one.

For $\lambda\in\mathcal{L}_{\mu,\eta,\nu}$, this leads to distributions of the form:
\begin{eqnarray}
p(x,y,z|\lambda) = \left\{\begin{array}{lll}
      \frac{p_{min}}{2} &,\quad\mbox{for }(x,y,z)= (\eta\oplus\nu\oplus 1,\mu\oplus\nu\oplus 1, \mu\oplus\eta\oplus 1)\\
      \frac{1 - p_{min}}{6}&,\quad\mbox{for }(x,y,z)\in O\mbox{ and }(x,y,z)\neq (\eta\oplus\nu\oplus 1,\mu\oplus\nu\oplus 1, \mu\oplus\eta\oplus 1) \\
      \frac{1}{8} &, \quad \mbox{for }(x,y,z)\not\in O
    \end{array}\right.
\end{eqnarray}
which can be used for each case of $\mu,\eta,\nu$ always returning the same value for $H_{\lambda}(X,Y,Z)$, so the mutual information reads,
\begin{eqnarray}
\nonumber
I(X,Y,Z:\Lambda) & \geq & 3 + \frac{p_0}{2}\log p_0 - \frac{p_0}{2}\log 2 + \frac{(1-p_0)}{2}\log(1-p_0) - \frac{(1-p_0)}{2}\log 6 -\frac{1}{2}\log8\\
\nonumber
& = & 1 + \frac{1}{2}h(p_0) - \frac{(1-p_0)}{2}\log 3\\
\nonumber
& = & 1 + \frac{1}{2}h\left(\frac{4-M}{8}\right) - \frac{4+M}{16}\log 3
\end{eqnarray}

If now we set $p(x,y,z) = \frac{1}{4}$ if $(x+y+z)\in O$, then we get
\begin{eqnarray}
M & = & 4\sum_{\mu,\eta,\nu}\sum_{\lambda\in\mathcal{L}_{\mu,\eta,\nu}}A_0(\lambda)B_0(\lambda)C_0(\lambda)p(\lambda)\Gamma'_{\mu,\eta,\nu}
\end{eqnarray}
in which
\begin{eqnarray}
\Gamma'_{\mu,\eta,\nu}  =  (-1)^{\mu\eta + \mu\nu + \eta\nu}\left[1 - 2p(\eta\oplus\nu\oplus 1,\mu\oplus\nu\oplus 1,\mu\oplus\eta\oplus 1|\lambda)\right].
\end{eqnarray}

While this new constraint leads to the same bound on $p_{min}$, shown in equation \ref{eq: bound p_min}, in this case we must adapt the distribution maximizing $H_{\lambda}(x,y,z)$ to the following,
\begin{eqnarray}
p(x,y,z|\lambda) = \left\{\begin{array}{lll}
     p_{min} &,\quad\mbox{for }x=\eta\oplus\nu\oplus 1,y = \mu\oplus\nu\oplus 1,z = \mu\oplus\eta\oplus 1\\
     \frac{1 - p_{min}}{3}&,\quad\mbox{for }x\neq\eta\oplus\nu\oplus 1,\mbox{ or } y \neq \mu\oplus\nu\oplus 1,\mbox{ or }z \neq \mu\oplus\eta\oplus 1,\;\mbox{and }x+y+z\in O \\
     0 &, \quad \mbox{for }x+y+z\not\in O
\end{array}\right.
\end{eqnarray}
for $\lambda\in\mathcal{L}_{\mu,\eta,\nu}$.

This leads to,
\begin{eqnarray}
\nonumber
I(X,Y,Z:\Lambda) & \geq & 2 + p_0\log p_0 + (1-p_0)\log(1-p_0) - (1-p_0)\log 3\\
\nonumber
& \geq & 2 - h\left(\frac{4-M}{8}\right) - \left(\frac{4+M}{8}\right)\log 3
\end{eqnarray}

\section{Nosignaling condition}
\label{app: nosig}

As in reference \cite{hall2020measurement}, we also address the question of imposing the extra constrain of nosignaling. The answer is exactly the same: the existence of a signaling distribution leading to an arbitrary value of $M$ implies the existence of a nosignaling distribution featuring the same mutual information between the inputs and the hidden variable and value of $M$.

\begin{proof}
Assume that exists a distribution $p(a,b,c|x,y,z)$ as in \ref{eq: prob factorization}, which does not satisfy the nosignaling condition and for which $M(p(a,b,c|x,y,z))=M^*$. Now, let us build a distribution $\tilde{p}(a,b,c|x,y,z)$ as follows,
\begin{eqnarray}
\nonumber
\tilde{p}(a,b,c|x,y,z) & = & 8\sum_{\lambda',\lambda''\in\{0,1\}}\sum_{\lambda\in\Lambda} p(\lambda',\lambda'',\lambda,x,y,z)\delta_{a\oplus\lambda',f_{\lambda}(x)})\delta_{b\oplus\lambda'\oplus\lambda'',f_{\lambda}(y)}\delta_{c\oplus\lambda'',f_{\lambda}(z)}
\end{eqnarray}
in which $p(\lambda',\lambda'') = \frac{1}{4}$ and $p(\lambda',\lambda'',\lambda,x,y,z) = \frac{1}{4}p(\lambda,x,y,z)$ which implies that,
\begin{eqnarray}
\label{eq: nosig}
\nonumber
\sum_{b,c}\tilde{p}(a,b,c|x,y,z)  = \sum_{a,c}\tilde{p}(a,b,c|x,y,z)& = \sum_{a,b}\tilde{p}(a,b,c|x,y,z) = \frac{1}{2}.
\end{eqnarray}

Another property of this distribution concerns the full correlations $\langle\tilde{A}_x\tilde{B}_y\tilde{C}_z\rangle$,
\begin{eqnarray}
\label{eq: full correlator sig -> nosig}
\langle\tilde{A}_x\tilde{B}_y\tilde{C}_z\rangle & = & \langle A_xB_yC_z\rangle.
\end{eqnarray}

From Eq.~\eqref{eq: nosig}, we can see that $\tilde{p}(a,b,c|x,y,z)$ is nosignaling. From Eq.~\eqref{eq: full correlator sig -> nosig} we have that $M(\tilde{p}(a,b,c|x,y,z)) = M^*$, and from the cat that $p(\lambda',\lambda'',\lambda,x,y,z) = p(\lambda',\lambda'')p(\lambda,x,y,z)$ we have that $I(X,Y,Z:\Lambda,\Lambda',\Lambda'') = I(X,Y,Z:\Lambda)$.
\end{proof}

\end{document}